\documentclass[reqno]{amsart}
\usepackage{amsmath,amsfonts,amsgen,amstext,amsbsy,amsopn,amsthm, amssymb}

\newtheorem{Lemma}{Lemma} 
\newtheorem{Theorem}{THEOREM}
\newtheorem{prop}{Proposition} 
\newtheorem{assu}{Assumption}
 
\theoremstyle{definition}

\newcommand\calC{{\mathcal{C}}}
\newcommand\Tr{{\rm Tr\,}} 
\newcommand\id{\mathbb{I}}
\newcommand\nn\nonumber

\numberwithin{equation}{section}

\newcommand{\R}{\mathbb{R}}
 
\newcommand{\N}{\mathbb{N}}
\newcommand{\Z}{\mathbb{Z}} 
\newcommand{\C}{\mathbb{C}}
 
\newcommand{\F}{\mathcal{F}}
\newcommand{\E}{\mathcal{E}} 
\renewcommand{\H}{\mathcal{H}}

\DeclareMathOperator{\const}{const}
\DeclareMathOperator{\im}{Im} 
\DeclareMathOperator{\re}{Re}

\begin{document}

\title{Microscopic Derivation of Ginzburg-Landau Theory}

\author[R.L. Frank]{Rupert L. Frank} \address{{\rm (Rupert L. Frank)}
  Department of Mathematics, Princeton University, Princeton, NJ
  08544, USA} \email{rlfrank@math.princeton.edu}

\author[C. Hainzl]{Christian Hainzl} \address{{\rm (Christian Hainzl)}
  Mathematisches Institut, Universit\"at T\"ubingen, Auf der
  Morgenstelle 10, 72076 T\"ubingen, Germany}
\email{christian.hainzl@uni-tuebingen.de}

\author[R. Seiringer]{Robert Seiringer} \address{{\rm (Robert
    Seiringer)} Department of Mathematics and Statistics, McGill
  University, 805 Sherbrooke Street West, Montreal, QC H3A 2K6,
  Canada} \email{rseiring@math.mcgill.ca}

\author[J.P. Solovej]{Jan Philip Solovej} \address{{\rm (Jan Philip
    Solovej)} Department of Mathematics, University of Copenhagen,
  Universitetsparken 5, DK-2100 Copenhagen, Denmark}
\email{solovej@math.ku.dk}

\date{February 19, 2011}

\begin{abstract}
  We give the first rigorous derivation of the celebrated
  Ginzburg-Landau (GL) theory, starting from the microscopic
  Bardeen-Cooper-Schrieffer (BCS) model. Close to the critical
  temperature, GL arises as an effective theory on the macroscopic
  scale. The relevant scaling limit is semiclassical in nature, and
  semiclassical analysis, with minimal regularity assumptions, plays
  an important part in our proof.
\end{abstract}

\maketitle

\section{Introduction and Main Results}

\subsection{Introduction}

In 1950 Ginzburg and Landau \cite{GL} introduced a model of
superconductivity that has been extremely successful and is widely
used in physics, even beyond the theory of superconductivity. It has a
rich mathematical structure that has been studied in great detail, and
has inspired the development of many interesting new
concepts. Ginzburg and Landau arrived at their model in a
phenomenological way, describing the {\em macroscopic} properties of a
superconductor, without the need to understand the microscopic
mechanism.
 
In 1957 Bardeen, Cooper and Schrieffer \cite{BCS} formulated the first
{\em microscopic} explanation of superconductivity starting from a
many-body Hamiltonian. In a major breakthrough they realized that this
phenomenon can be described by a {\em pairing mechanism}. Below a
critical temperature, a superconducting paired state forms due to an
instability of the normal state in the presence of an attraction
between the particles. In the case of a metal, an effective attraction
between the electrons arises through an interaction with phonons, the
quantized vibrations of the underlying lattice formed by atoms.

A connection between the two approaches, the phenomenological GL
theory and the microscopic BCS theory, was established in 1959 when
Gorkov \cite{gorkov} explained how, close to the critical temperature,
the GL theory arises from the BCS model. A simplified version of his
explanation was later given by de Gennes in his textbook
\cite{deGenne}. The purpose of this paper is to give the first fully
rigorous mathematical derivation of GL theory from the BCS model. In
our approach it is not necessary to restrict attention to the
particular interaction considered in the original BCS model, indeed we
allow for a large class of attractions among the particles. In
particular, we allow for local two-body interactions, in contrast to
the simpler (non-local rank one) interaction considered by BCS. Such
local interactions are relevant for the description of cold gases of
fermionic atoms, which are of great current interest. These atoms are
electrically neutral and the corresponding pairing mechanism is
relevant for superfluidity rather than superconductivity.

The BCS model is a considerable simplification of the full many-body
problem. In the latter a state is described by a complicated wave
function of a macroscopic number of variables. In the BCS model, all
the information about the system is contained in quantities depending
on only two variables; the reduced one-particle density matrix
$\gamma$, i.e., a positive trace class operator on the one-particle
space, and the Cooper pair (two particle) wave function $\alpha$,
which is non-zero only below the critical temperature.  The GL model
is yet much simpler, as it describes the system by a single function
$\psi$ of one variable only, which satisfies a non-linear second order
PDE, the GL equation.  This function $\psi$ only describes macroscopic
variations in the system, whereas the BCS states $\gamma$ and $\alpha$
exhibit both microscopic and macroscopic details. Hence GL theory
represents a significant simplification as to the more complicated BCS
theory. The relation of these three theories is analogous to atomic
physics where quantum mechanics, Hartree-Fock theory and Thomas-Fermi
theory are models of the same type of increasing simplicity. In
contrast to the atomic case, where the Hartree-Fock approximation is
mathematically well understood, it remains an open problem to
rigorously establish the relation of the BCS model to the full
many-body quantum-mechanical description.

The BCS theory may be thought of as a variational theory, which
utilizes a special class of trial states known as quasi-free
states. This point of view was emphasized by Leggett \cite{Leg}. The
mathematical aspects of this theory in the translation invariant case
were studied in \cite{FHNS,HHSS,HS,HS3,HS2}.  In this paper we study
the non-translation invariant case, where weak external fields are
present that vary only on the macroscopic scale. We show that close to
the critical temperature $T_c$, where $\alpha$ is small, the
macroscopic variations of $\alpha$ are correctly described by the GL
theory.  More precisely, we assume that the microscopic scale of the
system is of order $h\ll 1$ relative to the macroscopic scale.
Variations of the system on the macroscopic scale cause a relative
change of the energy of order $h^2$, hence the external fields should
change the energy to the same order. This change is correctly
described by the GL theory to leading order in $h$ if $(T_c-T)/T_c$ is
of the order $h^2$, with $T$ the actual temperature. We also prove
that the GL wave function $\psi$ correctly describes the macroscopic
behavior of the BCS state. The BCS Cooper pair wave function in this
parameter regime equals
\begin{equation}\label{alphconverg}
  \alpha(x,y) = \tfrac 12\left( \psi(hx) + \psi(hy)\right)  \alpha_0(x-y) 
\end{equation}
to leading order in $h$, where $\alpha_0$ is the unperturbed
translation invariant pair function. Equivalently, $\frac
12(\psi(hx)+\psi(hy))$ could be replaced by $\psi(h (x+y)/2)$ to
leading order. In particular, $\psi$ describes the center-of-mass
motion of the BCS state. The role of $h$ as a semiclassical parameter
can be understood as follows. If one rescales the arguments to
macroscopic variables and thinks of $\alpha(\bar x/h,\bar y/h)$ as the
integral kernel of an operator, it is the quantization of the
semiclassical symbol $\psi(\bar x) \widehat \alpha_0(p)$. It would, in
fact, be the Weyl quantization if we had used the alternative
center-of-mass representation. For technical reasons we found it more
convenient to work with the representation (\ref{alphconverg}).

Our approach is motivated by the work of de Gennes
\cite{deGenne}. While de Gennes studied the emergence of the GL
equation from the BCS equations, it is important for our rigorous
analysis to utilize that these equations are Euler-Lagrange equations
for variational problems. We give precise bounds on the lowest energy
of the BCS functional and the connection of the corresponding
minimizer with the GL minimizer. The exact statement of our result is
given in Subsection~\ref{ss:results}.

\subsection{The BCS Functional}

We consider a macroscopic sample of a fermionic system, in $d$ spatial
dimensions, where $1\leq d\leq 3$. Let $\mu\in\R$ denote the chemical
potential and $T>0$ the temperature of the sample. The fermions
interact through a local two-body potential $V$. In addition, they are
subject to external electric and/or magnetic fields. Neutral atoms
would not couple to these fields, of course, but there can be other
forces, e.g., arising from rotation, with a similar mathematical
description.  In BCS theory the state of the system can be
conveniently described in terms of a $2\times 2$ operator valued
matrix
\begin{equation}\label{def:gamma}
  \Gamma = \left( \begin{array}{cc} \gamma & \alpha \\ \bar\alpha & 1 -\bar\gamma \end{array}\right) 
\end{equation}
satisfying $0\leq \Gamma \leq 1$ as an operator on $L^2(\R^d)\oplus
L^2(\R^d)\cong L^2(\R^d)\otimes \C^2$. The bar denotes complex
conjugation, i.e., $\bar \alpha$ has the integral kernel $\overline
{\alpha(x,y)}$. In particular, $\Gamma$ is hermitian, implying that
$\gamma$ is hermitian and $\alpha$ is symmetric, i.e., $\gamma(x,y) =
\overline{\gamma(y,x)}$ and $\alpha(x,y)=\alpha(y,x)$. There are no
spin variables in $\Gamma$. The full, spin dependent Cooper pair wave
function is the product of $\alpha$ with an antisymmetric spin
singlet. This is why $\alpha$ itself is symmetric so that we obtain
the antisymmetric fermionic character of the full, spin-dependent,
pair wave function.

The general form of the BCS functional for the free energy of such a
system is, in suitable units,
$$
\Tr \left[ \left( \left(-i \nabla + A(x) \right)^2 -\mu + W(x)\right)
  \gamma \right] - T\, S(\Gamma) + \int V(x-y) |\alpha(x,y)|^2 \, {dx
  \, dy} \,.
$$ 
Here, $A$ is the magnetic vector potential, and $W$ is the external
electric potential.  The entropy equals
\begin{equation}
  S(\Gamma) = - \Tr \left[ \Gamma \ln \Gamma \right]\,,
\end{equation}
where the trace is now both over $\C^2$ and $L^2(\R^3)$.  The BCS
state of the system is a minimizer of this functional over all
admissible states $\Gamma$.

As explained in detail in \cite[Appendix~A]{HHSS}, the BCS functional
can be heuristically derived from the full many-body Hamiltonian via
two steps of simplification. First, one considers only quasi-free
states, and second one neglects the resulting direct and exchange term
in the interaction energy. The latter terms are considered unimportant
in the physically relevant parameter regimes \cite{bookleggett}.

We are interested in the effect of weak and slowly varying external
fields, as already explained in the previous subsection. More
precisely, $A(x)$ should be replaced by $h A(hx)$ and $W(x)$ by $h^2
W(hx)$.  In order to avoid having to introduce boundary conditions, we
assume that the system is infinite and periodic with period $h^{-1}$,
in all $d$ directions. In particular, $A$ and $W$ should be
periodic. We also assume that the state $\Gamma$ is periodic. The aim
then is to calculate the free energy per unit volume.

We find it convenient to do a rescaling and use macroscopic variables
instead of the microscopic ones. The rescaled BCS functional has the
form
\begin{align}\nonumber
  \F^{\rm BCS}(\Gamma) & := \Tr \left[ \left( \left(-i h \nabla + h
        A(x) \right)^2 -\mu + h^2 W(x)\right) \gamma \right] - T\,
  S(\Gamma) \\ & \quad \ + \int_{\calC\times \R^d} V(h^{-1}(x-y))
  |\alpha(x,y)|^2 \, { dx \, dy} \label{def:bcs}
\end{align}
where $\calC$ denotes the unit cube $[0,1]^d$, and $\Tr$ stands for
the {\em trace per unit volume}. More precisely, if $B$ is a periodic
operator (i.e., it commutes with translations by $1$ in the $d$
coordinate directions), then $\Tr B$ equals, by definition the (usual)
trace of $\chi B$, with $\chi$ the characteristic function of
$\calC$. The location of the cube is obviously of no importance. It is
not difficult to see that the trace per unit volume has the usual
properties like cyclicity, and standard inequalities like H\"older's
inequality hold. This is discussed in more detail in
Section~\ref{sec:prel}.

We make the following assumptions on the functions $A$, $W$ and $V$
appearing in (\ref{def:bcs}).

\begin{assu}\label{as0}
  We assume both $W$ and $A$ to be periodic with period $1$. We
  further assume that $\widehat W(p)$ and $|\widehat A(p)|(1+|p|)$ are
  summable, with $\widehat W(p)$ and $\widehat A(p)$ denoting the
  Fourier coefficients of $W$ and $A$, respectively. In particular,
  $W$ is bounded and continuous and $A$ is in $C^1(\R^d)$.

  The interaction potential $V$ is assumed to be real-valued and
  reflection-symmetric, i.e., $V(x)=V(-x)$, with $V \in L^{p}(\R^d)$,
  where $p=1$ for $d=1$, $p>1$ for $d=2$ and $p=3/2$ for $d=3$.
\end{assu}

Our results presumably hold under slightly weaker regularity
assumptions on $W$ and $A$.  For the sake of transparency we shall not
aim for the weakest possible conditions, but rather try to keep the
proofs to a reasonable length.

The $L^p$ assumption on $V$ is the natural one to guarantee relative
form-bounded\-ness with respect to the Laplacian. A slower decay at
infinity could possibly be accommodated. Our method also works for
non-local potentials, which appear naturally in the theory of
superconductivity. For simplicity, we are not aiming for the most
general setting and work with Assumption~\ref{as0} from now on.

We note that a periodic magnetic field $B$, satisfying Maxwell's
equation $\nabla\cdot B =0$, can be described via a periodic vector
potential $A$ as $B=\nabla\wedge A$ if certain flux conditions are
satisfied. For $d=2$, $B$ has to have zero average, while for $d=3$
the flux through the boundaries of the unit cell has to vanish. This
follows from the fact that the relevant de Rham cohomology (closed
two-forms modulo exact ones) equals $\R^{n(n-1)/2}$ for the
$n$-dimensional torus.

\subsubsection{The Translation-Invariant Case}

In the translation invariant case, where $W=A=0$, it makes sense to
restrict $\F^{\rm BCS}$ to translation invariant states $\Gamma$. This
is the case studied in detail in \cite{HHSS}.\footnote{The results in
  \cite{HHSS} are worked out in three dimensions, but analogous
  results are easily seen to hold in one and two dimensions.} In
particular, it was shown in \cite{HHSS} that there is a critical
temperature $T_c\geq 0$ such that for $T\geq T_c$, the BCS functional
$\F^{\rm BCS}$ is minimized for $\alpha=0$ and $\gamma = (1 + \exp(
(-h^2\nabla^2-\mu)/T))^{-1}$, i.e., $\gamma$ is the one-particle
density matrix of a free Fermi gas. The critical temperature $T_c$ is
determined by the unique value of $T$ such that the operator
\begin{equation}\label{ktv}
  K_T + V(x)
\end{equation}
on $L^2_{\rm sym}(\R^d)$, the reflection-symmetric square-integrable
functions, has zero as its lowest eigenvalue.\footnote{In \cite{HHSS}
  the symmetry constraint on $\alpha$ was not explicitly enforced. The
  results hold equally if one works only in the subspace of reflection
  symmetric functions in $L^2(\R^d)$, however.} Here, $K_T$ denotes
the operator
\begin{equation}\label{def:kt}
  K_T = \frac{-\nabla^2 - \mu}{\tanh\left(\frac{-\nabla^2-\mu}{2T}\right)} \,.
\end{equation}
Note that $K_T$ is monotone increasing in $T$ and $K_T\geq 2T$. (If
$K_0 + V = |-\nabla^2 - \mu| + V \geq 0$, then $T_c = 0$.) In
particular, the essential spectrum of $K_T+V$ is $[2T,\infty)$, and
hence an eigenvalue at $0$ is necessarily isolated.

In the following, we shall assume that $T_c>0$ and that the ground
state of (\ref{ktv}) at $T=T_c$ is non-degenerate. We emphasize that
$T_c$ is independent of $h$.

\begin{assu}\label{as1}
  The potential $V$ is such that $T_c > 0$ and that $K_{T_c} + V$ has
  a non-degenerate ground state eigenvalue $0$.
\end{assu}

According to \cite{HS} this assumption is satisfied if $\widehat V
\leq 0$ (and not identically zero), for instance.  In the case that
$V$ is invariant under rotations, the non-degeneracy assumption means
that the minimizing function has angular momentum zero.  We denote the
eigenfunction of $K_{T_c}+V$ corresponding to eigenvalue zero by
$\alpha_0$.

\subsection{The GL Functional}

Let $\psi\in H^1_{\rm per}(\R^d)$, the periodic functions in $H_{\rm
  loc}^1(\R^d)$. For $\mathbb{B}_1$ a positive $d\times d$ matrix,
$B_2\in \R$ and $B_3>0$, the GL functional is defined as
\begin{align}\nonumber \label{GLfunct}
  \E^{\rm GL}(\psi) & = \int_{\calC} \left[ \big(
    \overline{\left(-i\nabla + 2 A(x)\right)\psi} \big)\cdot
    \mathbb{B}_1 \big( \left( -i\nabla + 2 A(x)\right)\psi\big)
  \right. \\ & \qquad\quad \left.  + B_2 W(x) |\psi(x)|^2 + B_3
    \left|1 - |\psi(x)|^2\right|^2\right] dx\,.
\end{align}
Note the coefficient $2$ in front of the vector potential $A$. It is
due to the fact that $\psi$ describes pairs of particles, and the
charge of a pair is twice the particle charge. The relevant
coefficients $\mathbb{B}_1$, $B_2$ and $B_3$ will be calculated below
from the BCS theory, see Eqs.~(\ref{def:b1})--(\ref{def:b3}). In the
rotation invariant case (i.e., for radial potentials $V$), the matrix
$\mathbb{B}_1$ is a multiple of the identity matrix.

The ground state energy of the GL functional will be denoted by
\begin{equation}
  E^{\rm GL} = \inf \left\{  \E(\psi)\, : \, \psi\in H^1_{\rm per}(\R^d)\right\}\,.
\end{equation}
It is not difficult to show that under our assumptions on $A$ and $W$,
there exists a corresponding minimizer, which satisfies a second order
differential equation known as the GL equation. The mathematical
aspects of the GL functional have been studied extensively in the
literature; see \cite{FH,Serfaty} and references therein.

\subsection{Main Results}\label{ss:results}

Recall the definition of the BCS functional $\F^{\rm BCS}$ in
(\ref{def:bcs}). Recall also that admissible states $\Gamma$ are of
the form (\ref{def:gamma}), are periodic with period 1 and with $0\leq
\gamma\leq 1$.  We define the energy $F^{\rm BCS}(T,\mu)$ as the
difference between the infimum of $\F^{\rm BCS}$ over all admissible
$\Gamma$ and the free energy of the normal state
\begin{equation}\label{def:gamma0}
  \Gamma_0 := \left( \begin{array}{cc} \gamma_0 & 0 \\ 0 & 1 -\bar\gamma_0 \end{array}\right) 
\end{equation}
where $\gamma_0 = (1+e^{((-ih\nabla+hA(x))^2 + h^2 W(x) -
  \mu)/T})^{-1}$. I.e.,
\begin{equation}\label{def:Fbcs}
  F^{\rm BCS}(T,\mu) = \inf_{\Gamma} \F^{\rm BCS}(\Gamma) - \F^{\rm BCS}(\Gamma_0) \,.
\end{equation}
Note that $\Gamma_0$ is the minimizer of the BCS functional in the
absence of interactions, i.e., when $V=0$. We have
\begin{equation}\label{f0}
  \F^{\rm BCS}(\Gamma_0) = - T\, \Tr \ln\left( 1+ \exp\left(-\left( \left(-i h \nabla + h A(x) \right)^2 -\mu + h^2 W(x)  \right) /T \right) \right) \,,
\end{equation}
which is $O(h^{-d})$ for small $h$.  The system is said to be in a
superconducting (or superfluid, depending on the physical
interpretation) state if $F^{\rm BCS} < 0$.

\begin{Theorem}\label{thm:main}
  As above, let $T_c>0$ denote the critical temperature in the
  translation invariant case, and assume Assumptions~\ref{as0}
  and~\ref{as1}. Let $D>0$. For appropriate coefficients
  $\mathbb{B}_1$, $B_2$ and $B_3$ (given in
  (\ref{def:b1})--(\ref{def:b3}) below), we have, as $h\to 0$,
  \begin{equation}\label{enthm}
    F^{\rm BCS}(T_c(1-h^2D),\mu) =  h^{4-d} \left(E^{\rm GL} - B_3 + e \right) \,,
  \end{equation}
  with $e$ satisfying the bounds
  \begin{equation}
    \const h \geq e \geq  - e_{\rm L} : = - \const   \times \left\{ \begin{array}{cl} h^{1/3} & \text{for $d=1$} \\ h^{1/3} \left[\ln (1/h)\right]^{1/6} & \text{for $d=2$} \\ h^{1/5} & \text{for $d=3$.} \end{array}\right.
  \end{equation} 

  Moreover, if $\Gamma$ is an approximate minimizer of $\F^{\rm BCS}$
  at $T=T_c(1-h^2 D)$, in the sense that
  \begin{equation}
    \F^{\rm BCS}(\Gamma)\leq \F^{\rm BCS}(\Gamma_0) + h^{4-d} \left( E^{\rm GL} - B_3 + \epsilon\right)
  \end{equation}
  for some small $\epsilon > 0$, then the corresponding $\alpha$ can
  be decomposed as
  \begin{equation}\label{thm:dec}
    \alpha  = \frac h2 \big( \psi( x) \widehat\alpha_0(-ih\nabla) + \widehat \alpha_0(-ih\nabla) \psi(x)\big)  + \sigma
  \end{equation}
  with $\E^{\rm GL}(\psi) \leq E^{\rm GL} + \epsilon + e_{\rm L}$,
  $\alpha_0$ the (appropriately normalized; see (\ref{normal}))
  zero-energy ground state of (\ref{ktv}) at $T=T_c$, and
  \begin{equation}\label{sigmab}
    \int_{\calC\times \R^d} |\sigma(x,y)|^2 \, dx\, dy  \leq  \const \frac {h^{4-d}}{e_{\rm L}^2}\,.
  \end{equation}
\end{Theorem}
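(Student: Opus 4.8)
\emph{Overall strategy.} The plan is to establish matching upper and lower bounds for $F^{\rm BCS}(T_c(1-h^2D),\mu)$ and to extract the structure of approximate minimizers from the proof of the lower bound. For the \emph{upper bound} I would take $\psi$ to be a minimizer of $\E^{\rm GL}$ and use the trial state $\Gamma$ whose off-diagonal block is $\alpha_\psi:=\frac h2\big(\psi(x)\,\widehat\alpha_0(-ih\nabla)+\widehat\alpha_0(-ih\nabla)\,\psi(x)\big)$, i.e.\ the operator obtained by quantizing the semiclassical symbol $\psi(x)\widehat\alpha_0(p)$, and whose diagonal block $\gamma$ is chosen so that $\Gamma$ is admissible and deviates from the Gibbs state of the associated effective $2\times2$ Hamiltonian only to higher order in $h$. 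Writing $\F^{\rm BCS}(\Gamma)-\F^{\rm BCS}(\Gamma_0)$ as a (non-negative) relative entropy of $\Gamma$ with respect to $\Gamma_0$ plus the interaction term $\int V|\alpha|^2$, and Taylor expanding in $\alpha_\psi$, the terms quadratic in $\alpha_\psi$ should produce---after using that $T=T_c(1-h^2D)$ lies just below $T_c$---the magnetic gradient term, a term linear in $W$, and a negative multiple of $\int_\calC|\psi|^2$, while the quartic terms produce a positive multiple of $\int_\calC|\psi|^4$; with the definitions of $\mathbb{B}_1$, $B_2$, $B_3$ these combine to $\F^{\rm BCS}(\Gamma)-\F^{\rm BCS}(\Gamma_0)=h^{4-d}\big(\E^{\rm GL}(\psi)-B_3\big)+O(h^{5-d})$, giving $e\le\const\,h$. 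The nontrivial ingredient is the semiclassical symbol calculus relating multiplication by the slowly varying $\psi(x)$ to functions of $-ih\nabla$, which uses the spatial decay of $\alpha_0$ and the regularity of $\widehat\alpha_0$.

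For the \emph{lower bound} I would first prove a priori estimates: any $\Gamma$ with $\F^{\rm BCS}(\Gamma)\le\F^{\rm BCS}(\Gamma_0)+O(h^{4-d})$ has $\alpha$ and $\gamma-\gamma_0$ small in appropriate norms. Then, using the same relative-entropy representation together with a convexity (operator-monotonicity) bound for the relative entropy, I would bound $\F^{\rm BCS}(\Gamma)-\F^{\rm BCS}(\Gamma_0)$ from below by $\langle\alpha,(K_T^{A,W}+V)\alpha\rangle$ plus a controlled positive quartic correction in $\alpha$, minus errors coming from replacing the magnetic, $W$-dependent operator $K_T^{A,W}$ by $K_{T_c}+V$ and from the higher-order entropy remainder. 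The heart of the argument is then a semiclassical spectral analysis of $K_{T_c(1-h^2D)}^A+V$: by Assumption~\ref{as1}, $K_{T_c}+V\ge0$ with one-dimensional kernel spanned by $\alpha_0$ and a gap to the rest of its spectrum, so for small $h$ the negative and near-zero spectral subspace of $K_{T_c(1-h^2D)}^A+V$ is, to leading order, spanned by the states $\psi(x)\alpha_0(x-y)$ with $\psi$ periodic and slowly varying. I would split $\alpha=\alpha_\psi+\sigma$ accordingly, with $\alpha_\psi$ of the above form for a suitable $\psi$ and $\sigma$ in the complementary subspace on which $K_{T_c}^A+V\ge c>0$; coercivity there together with the a priori bounds forces $\int_{\calC\times\R^d}|\sigma(x,y)|^2\,dx\,dy\le\const\,h^{4-d}/e_{\rm L}^2$, which is (\ref{sigmab}).

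On the low-energy component, a semiclassical computation of $\langle\alpha_\psi,(K_T^A+V)\alpha_\psi\rangle$---carried out with only the minimal regularity $V\in L^p$, $A\in C^1$ available---reproduces the magnetic gradient term with coefficient matrix $\mathbb{B}_1$, the term $B_2\int_\calC W|\psi|^2$, and a term $-\const\, D\int_\calC|\psi|^2$, up to errors of the size that determines $e_{\rm L}$; the positive quartic correction is bounded below and contributes a positive multiple of $\int_\calC|\psi|^4$. With the definitions of the coefficients and $\E^{\rm GL}(\psi)\ge E^{\rm GL}$ this yields $\F^{\rm BCS}(\Gamma)-\F^{\rm BCS}(\Gamma_0)\ge h^{4-d}\big(E^{\rm GL}-B_3-e_{\rm L}\big)$, which together with the upper bound gives (\ref{enthm}). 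For the structural part, the function $\psi$ produced in the decomposition and the bound $\E^{\rm GL}(\psi)\le E^{\rm GL}+\epsilon+e_{\rm L}$ follow by running the same argument with the slack $\epsilon$ in place of $O(h^{4-d})$.

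The step I expect to be the main obstacle is the lower bound, and inside it the semiclassical analysis of $K_{T_c(1-h^2D)}^A+V$ under only the minimal regularity hypotheses on $V$ and $A$: one must establish the spectral gap above the $\psi(x)\alpha_0(x-y)$ subspace and compute the effective quadratic form precisely enough, in both cases without smoothness of $V$ or $A$. Compounding this, the quartic term $\int_\calC|\psi|^4$ has to be extracted by pushing the relative-entropy lower bound beyond its leading quadratic order, and the competing error contributions---from the symbol calculus, the entropy remainder, the potential $W$, and the temperature window $h^2D$---must be balanced carefully in order to produce the stated rates $h^{1/3}$, $h^{1/3}[\ln(1/h)]^{1/6}$ and $h^{1/5}$ in dimensions $1$, $2$ and $3$.
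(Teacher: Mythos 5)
Your upper bound is essentially the paper's (Gibbs trial state $\Gamma_\Delta$ for the effective Hamiltonian built from the GL minimizer, plus a semiclassical expansion of the resulting trace difference), so no objection there. The genuine gap is in your lower bound: you propose to extract the GL functional \emph{with its precise coefficients} from the Klein-type relative-entropy bound, i.e.\ from the quadratic form $\int_\calC\langle\alpha_\psi(\,\cdot\,,y)|K_T^{A,W}+V(h^{-1}(\,\cdot\,-y))|\alpha_\psi(\,\cdot\,,y)\rangle\,dy$ plus ``a controlled positive quartic correction''. Neither ingredient carries the sharp constants, so a lower bound obtained this way cannot match the upper bound. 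The operator $K_T^{A,W}+V$ acts on one variable only; its quadratic form on $\alpha_\psi$, expanded to second order in $\nabla\psi$, produces a gradient coefficient proportional to $2\re\langle\alpha_0|x_i(K_{T_c}+V)x_j|\alpha_0\rangle=\langle\alpha_0|[x_i,[K_{T_c},x_j]]|\alpha_0\rangle$, which the paper itself shows (in the positivity argument following (\ref{def:b3})) differs from $8(2\pi)^d\,\mathbb{B}_{1,ij}$ by the nonzero matrix $2\beta_c^3\int t(q)^2q_iq_j\,g_1(\beta_c(q^2-\mu))/\sinh(\beta_c(q^2-\mu))\,dq$; the true $\mathbb{B}_1$ (and likewise $B_2$, $B_3$ and the $D$-dependent $-|\psi|^2$ term) encode the total-momentum dependence of the full pair susceptibility and arise from the expansion of $\Tr[f(\beta H_\Delta)-f(\beta H_0)]$ in Theorem~\ref{thm:scl}, not from the $\alpha$-quadratic form of $K_T+V$. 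Similarly, the quartic term supplied by the entropy inequality comes with a non-sharp constant ($\tfrac{4T}{5}\Tr(\alpha\bar\alpha)^2$ in (\ref{eq:step1})), whereas the lower bound must reproduce exactly $B_3\|\psi\|_4^4$, and the replacement of $K_T^{A,W}$ by $K_{T_c}^{0,0}$ (Lemma~\ref{lem:mon}) loses a further factor $\tfrac18$. A GL-type functional with a too-small quartic coefficient or perturbed quadratic ones has infimum strictly below $E^{\rm GL}-B_3$ by an $h$-independent amount, so (\ref{enthm}) would not follow from your scheme.

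What is missing is the paper's two-step architecture. The crude entropy and coercivity bounds you describe are used only for the \emph{a priori structure} of near-minimizers: (\ref{eq:step1}) together with Lemma~\ref{lem:mon} and the spectral gap yield $\alpha=\tfrac h2(\psi\,\widehat\alpha_0(-ih\nabla)+\widehat\alpha_0(-ih\nabla)\,\psi)+\xi$ with $\|\xi\|_{H^1}\le O(h^{2-d/2})$, and the quartic term is needed there only to obtain the uniform bound $\|\psi\|_{H^1}\le C$ (Lemma~\ref{lem:a4}), not to produce $B_3$. Then, after cutting $\psi$ at momenta $\le\epsilon/h$, the functional is rewritten \emph{exactly} as in (\ref{off}): the semiclassical trace difference for $H_\Delta$ built from $\psi_<$ — evaluated sharply by Theorems~\ref{thm:scl} and~\ref{lem3}, the same input as the upper bound — plus $\tfrac12T\,\H(\Gamma,\Gamma_\Delta)+\int V|\sigma|^2$, and only this remainder is treated with the non-sharp entropy and operator-monotonicity bounds, where it merely has to be bounded below by $-o(h^{4-d})$. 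The frequency cutoff is also what produces (\ref{sigmab}) and, after balancing $\epsilon$ against the $\|\Delta\|_\infty$- and $\|\psi_<\|_{H^2}$-dependent errors, the rates $h^{1/3}$, $h^{1/3}[\ln(1/h)]^{1/6}$, $h^{1/5}$. Your outline locates the difficulty correctly, but without this re-expansion around $\Gamma_\Delta(\psi_<)$ the sharp GL coefficients cannot be recovered from one-sided inequalities.
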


To appreciate the bound (\ref{sigmab}), note that the square of the
$L^2(\calC\times \R^d)$ norm of the first term on the right side of
(\ref{thm:dec}) is of the order $h^{2-d}$, and hence is much larger
than the one of $\sigma$ for small $h$. Theorem~\ref{thm:main} thus
justifies the claim made in the Introduction in (\ref{alphconverg}).

For smooth enough $A$ and $W$, one could also expand $\F^{\rm
  BCS}(\Gamma_0)$ in (\ref{f0}) to order $h^{4-d}$ and thus obtain the
total energy $\inf_\Gamma \F^{\rm BCS}(\Gamma)$ to this order. Our
approach bounds directly the difference (\ref{def:Fbcs}), however, and
hence it is not necessary to compute $\F^{\rm BCS}(\Gamma_0)$ in
detail, and to make the corresponding additional regularity
assumptions on $A$ and $W$.

Our proof shows that the coefficients $\mathbb{B}_1$, $B_2$ and $B_3$
in the GL functional are given as follows. Recall that $\alpha_0$
denotes the unique ground state of $K_{T_c} + V$. It satisfies
$\alpha_0(x) = \alpha_0(-x)$, and we can take it to be real. Let $t$
denote the Fourier transform of $2 K_{T_c} \alpha_0 = -2 V \alpha_0$,
i.e.,
\begin{equation}\label{deft}
  t(p) = - 2 (2\pi)^{-d/2} \int_{\R^d} V(x) \alpha_0(x) e^{-ip\cdot x} dx \,.
\end{equation}
We normalize it, i.e., choose the normalization of $\alpha_0$, such
that
\begin{equation} \label{normal} \int_{\R^d} t(q)^4
  \,\frac{g_1(\beta_c(q^2-\mu))}{q^2-\mu}\, dq = \frac D {\beta_c}
  \int_{\R^d} t(q)^2
  \cosh^{-2}\left(\tfrac{\beta_c}{2}(q^2-\mu)\right) \, dq\,,
\end{equation}
where $\beta_c = 1/T_c$, $D = (1-T/T_c)/h^2$ as above, and $g_1$
denotes the function
$$
g_1(z) = \frac{ e^{2 z} - 2 z e^{z}-1}{z^2 (1+e^{z})^2}\,.
$$
Note that $g_1(z)/z > 0$.  Define also
$$
g_2(z) = \frac{2 e^{z} \left( e^{ z}-1\right)}{z
  \left(e^{z}+1\right)^3}
$$
(compare with (\ref{deff})--(\ref{defg2})).  Then $\mathbb{B}_1$ is
the matrix with components
\begin{equation}\label{def:b1}
  \left(\mathbb{B}_1\right)_{ij} =  \frac{ \beta_c^2}{16}   \int_{\R^d} t(q)^2 \left( \delta_{ij} g_1(\beta_c(q^2-\mu)) + 2 \beta_c q_i q_j\, g_2(\beta_c(q^2-\mu)) \right) \frac{dq}{(2\pi)^d}\,.
\end{equation}
Note that in case $V$ is radial, also $\alpha_0$ and $t$ are radial
and thus $B_1$ is proportional to the identity matrix. Moreover, $B_2$
and $B_3$ are given by
\begin{equation}\label{def:b2}
  B_2 =  \frac  {\beta_c^2 }4 \int_{\R^d} t(q)^2 \, g_1(\beta_c(q^2-\mu)) \, \frac{dq}{(2\pi)^d}  
\end{equation}
and
\begin{equation}\label{def:b3}
  B_3 =   \frac {\beta_c^2} {16}  \int_{\R^d} t(q)^4 \, \frac{g_1(\beta_c(q^2-\mu))}{q^2-\mu}\,\frac{dq}{(2\pi)^d} \,,
\end{equation}
respectively.  Alternatively, $B_3$ could be written using the
normalization (\ref{normal}). In particular, $B_3/|B_2|$ and
$B_3/\|\mathbb{B}_1\|$ are proportional to $D$, and hence proportional
to the difference of the temperature to the critical one.

Note that $B_3>0$ since $g_1(z)/z > 0$ for all $z\in \R$. The
coefficient $B_2$ can, in principle, have either sign if $\mu > 0$,
however. It has the same sign as the derivative of $T_c$ with respect
to $\mu$. This is not surprising. The external potential plays the
role of a local variation in the chemical potential. If increasing
$\mu$ increases $T_c$, the pairing mechanism should be enhanced at
negative values of $W$ and the density of Cooper pairs should be
largest there. If increasing $\mu$ decreases $T_c$, however, the
situation is reversed.

To see that the matrix $\mathbb{B}_1$ is positive, we calculate
\begin{align}\nn
  & 2 \re \,\langle \alpha_0| x_i (K_{T_c}+V) x_j|\alpha_0\rangle =
  \langle \alpha_0 | [x_i,[K_{T_c},x_j]]|\alpha_0\rangle \\ &= 8
  (2\pi)^d \mathbb{B}_{1,ij} - 2 {\beta_c^3} \int_{\R^d} t(q)^2\, q_i
  q_j \, \frac{ g_1 (\beta_c(q^2-\mu)) }{\sinh(\beta_c(p^2-\mu) } \,
  dq \,.
\end{align}
The last term on the right side defines a negative matrix, while the
left side defines a positive matrix, since $K_{T_c}+V\geq 0$. Hence
$\mathbb{B}_1$ is positive.

\subsection{Outline of the paper}

The structure of the remainder of this paper is as follows. In
Section~\ref{sec:semi} we shall state our main semiclassical
estimates. These are a crucial input to obtain the bounds in
Theorem~\ref{thm:main}.  The leading terms in our semiclassical
expansion can, in principle, be obtained from well-known formulas in
semiclassical analysis, but the standard techniques do not apply
directly because we are forced to work with rather minimal regularity
assumptions. We shall formulate the main results in separate theorems;
see Theorems~\ref{thm:scl} and~\ref{lem3} below. Their proofs, which
are rather lengthy, will be given in Sections~\ref{sec:thm:scl}
and~\ref{sec:propproof}, respectively. Some technicalities are
deferred to the appendix.

Section~\ref{sec:prel} explains various inequalities for the trace per
unit volume which will be used throughout the proofs.  Properties of
$\alpha_0$, the ground state of $(\ref{ktv})$ at $T=T_c$, are derived
in Section~\ref{sec:alpha0}. It is shown that $t$, defined in
(\ref{deft}) above, is smooth and has a suitable decay at
infinity. The results imply, in particular, that the coefficients
(\ref{def:b1})--(\ref{def:b3}) are well-defined and finite.

An upper bound on $F^{\rm BCS}$ will be derived in
Section~\ref{sec:up}, using the variational principle. An important
input will be the semiclassical estimates of Theorems~\ref{thm:scl}
and~\ref{lem3}.

Sections~\ref{sec:low1} and~\ref{sec:low2} contain the lower bound.
In the first part, the structure of an approximate minimizer is
investigated, which leads to a definition of the order parameter
$\psi$. This structure is then a crucial input to the second part,
where also the semiclassical estimates of Section~\ref{sec:semi}
enter.

Throughout the proofs, $C$ will denote various different constants. We
will sometimes be sloppy and use $C$ also for expressions that depend
only on some fixed, $h$-independent, quantities like $\mu$, $T_c$ or
$\|W\|_\infty$, for instance.

\section{Semiclassical Estimates}\label{sec:semi}

One of the key ingredients in the proof of Theorem~\ref{thm:main} is
semiclassical analysis.  Choose a periodic function $\psi\in H^2_{\rm
  loc}(\R^d)$ and a sufficiently nice function $t$ and let $\Delta$ be
the operator
\begin{equation}\label{def:delta}
  \Delta = - \frac h 2\left(\psi(x) t(-ih\nabla) + t(-ih\nabla) \psi(x)\right) \,. 
\end{equation}
It has the integral kernel
\begin{equation}
  \Delta(x,y) = -\frac{h^{1-d}}{2(2\pi)^{d/2}} \left( \psi( x) + \psi( y) \right) \widehat t(h^{-1}(y-x)) \,.
\end{equation}
Our convention for the Fourier transform is that $\widehat f(p) =
(2\pi)^{-d/2} \int_{\R^d} f(x) e^{-ip\cdot x} dx$.  We shall assume
that
\begin{equation}\label{t:as1}
  \partial^\gamma t \in L^{2p/(p-1)}(\R^d) 
\end{equation}
(with $p$ defined in Assumption~\ref{as0}) and that
\begin{equation}\label{t:as2}
  \int_{\R^d} \frac{|\partial^\gamma t(q)|^2}{ 1+q^2}\, dq < \infty
\end{equation}
for all $\gamma \in \{0,1\,\dots,4\}^d$.  For simplicity, we also
assume that $t$ is reflection-symmetric and real-valued. For the
function $t$ in (\ref{deft}), these assumptions are satisfied, as will
be shown in Section~\ref{sec:alpha0}.

Let $H_\Delta$ be the operator
\begin{equation}\label{hdelta}
  H_\Delta = \left( \begin{array}{cc}  \left(-i h \nabla + h A(x) \right)^2 -\mu + h^2 W(x) & \Delta \\ \bar\Delta & - \left(i h \nabla + h A(x) \right)^2 +\mu - h^2 W(x) \end{array} \right)
\end{equation}
on $L^2(\R^d)\otimes \C^2$, with $A$ and $W$ satisfying
Assumption~\ref{as0}.  In the following, we will investigate the trace
per unit volume of functions of $H_\Delta$. Specifically, we are
interested in the effect of the off-diagonal term $\Delta$ in
$H_\Delta$, in the semiclassical regime of small $h$.

\begin{Theorem}\label{thm:scl}
  Let
  \begin{equation}\label{deff}
    f(z) = -   \ln \left(1+ e^{-z}\right) \ ,
  \end{equation}
  and define
  \begin{equation}\label{defg0}
    g_0(z) = \frac{f'(-z) - f'(z)}{z} = \frac { \tanh\left(\tfrac 12 z\right)}{z}\,,
  \end{equation}
  \begin{equation}\label{defg1}
    g_1(z) = - g_0'(z) = \frac{f'(-z)-f'(z)}{z^2} + \frac{f''(-z)+f''(z)}{z} = \frac{ e^{2 z} - 2 z e^{z}-1}{z^2 (1+e^{z})^2}
  \end{equation}
  and
  \begin{equation}\label{defg2}
    g_2(z) =  g_1'(z) + \frac 2 z \,g_1(z) =  \frac{f'''(z)-f'''(-z)}{z} = \frac{2 e^{z} \left( e^{ z}-1\right)}{z \left(e^{z}+1\right)^3}\,.
  \end{equation}
  Then, for any $\beta > 0$, the diagonal entries of the $2\times 2$
  matrix-valued operator $f(\beta H_\Delta) - f(\beta H_0)$ are
  locally trace class, and the sum of their traces per unit volume
  equals
  \begin{align}\nn
    \frac {h^{d}}{\beta}\, \Tr\left[ f(\beta H_\Delta) - f(\beta
      H_0)\right] & = h^2 E_1 + h^4 E_2 + O(h^{5}) \left(
      \|\psi\|^4_{H^1(\calC)} + \|\psi\|^2_{H^1(\calC)} \right) \\
    &\quad + O(h^6) \left( \|\psi\|_{H^1(\calC)}^6+
      \|\psi\|_{H^2(\calC)}^2\right)\,, \label{210}
  \end{align}
  where
  \begin{equation}\label{def:e1}
    E_1 =  -\frac {\beta} 2 \|\psi\|_2^2  
    \int_{\R^d} t(q)^2  \, g_0(\beta(q^2-\mu))\,  \frac{dq}{(2\pi)^d} 
  \end{equation}
  and
  \begin{align}\nn
    E_2 &= - \frac { \beta} 8 \sum_{j,k=1}^d \langle \partial_j
    \psi| \partial_k \psi\rangle \int_{\R^d} t(q)
    \left[ \partial_j \partial_k t\right]\!(q)\, g_0(\beta(q^2-\mu))\,
    \frac{dq}{(2\pi)^d} \\ \nonumber & \quad + \frac{ \beta^2}8
    \sum_{j,k=1}^d \langle (\partial_j + 2 i A_j) \psi| (\partial_k +
    2 iA_k) \psi \rangle \\ \nonumber & \qquad \qquad \quad \times \!
    \int_{\R^d} t(q)^2 \left( \delta_{jk} g_1(\beta(q^2-\mu)) + 2
      \beta
      q_j q_k\, g_2(\beta(q^2-\mu)) \right) \frac{dq}{(2\pi)^d} \\
    \nonumber & \quad + \frac {\beta^2 }2 \langle \psi|
    W|\psi\rangle\int_{\R^d} t(q)^2 \, g_1(\beta(q^2-\mu)) \,
    \frac{dq}{(2\pi)^d} \\ & \quad + \frac {\beta^2} 8 \|\psi\|_4^4
    \int_{\R^d} t(q)^4 \, \frac{g_1(\beta(q^2-\mu))}{q^2-\mu}\,
    \frac{dq}{(2\pi)^d} \,. \label{def:e2}
  \end{align}
  The error terms in (\ref{210}) of order $h^5$ and $h^6$ are bounded
  uniformly in $\beta$ for $\beta$ in compact intervals in
  $(0,\infty)$. They depend on $t$ only via bounds on the expressions
  (\ref{t:as1}) and (\ref{t:as2}).
\end{Theorem}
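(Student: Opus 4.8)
The plan is to compute $\Tr[f(\beta H_\Delta)-f(\beta H_0)]$ by expanding in powers of $\Delta$, treating $\Delta$ as a perturbation of size $h$ (since $\Delta$ carries an explicit factor $h$ relative to $\psi,t$). Write $f(\beta H_\Delta)-f(\beta H_0)$ via the Cauchy (Dunford) integral representation $f(\beta H)=\frac{1}{2\pi i}\oint f(\beta z)(z-H)^{-1}\,dz$, expand the resolvent $(z-H_\Delta)^{-1}$ in a Neumann series around $(z-H_0)^{-1}$, and collect terms up to fourth order in $\Delta$; the odd orders vanish because the diagonal entries of products of an odd number of off-diagonal blocks are zero. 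This gives the diagonal trace per unit volume as $\Tr[\Delta\,L_2\,\bar\Delta] + \Tr[\Delta\,\bar\Delta\,\Delta\,L_4\,\bar\Delta]+\dots$, where $L_2,L_4$ are operators built from $f$ and the scalar operator $-(ih\nabla+hA)^2+\mu-h^2W$ and its partner. The remainder after the fourth-order term must be controlled by a resolvent bound together with the Hilbert-Schmidt / $\mathfrak{S}^4$ norms of $\Delta$ on $\calC\times\R^d$, which are finite by assumptions (\ref{t:as1})--(\ref{t:as2}); this produces the $O(h^6)(\|\psi\|_{H^1}^6+\|\psi\|_{H^2}^2)$ error.

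Next I would extract the leading semiclassical content of the second-order term $\frac{h^d}{\beta}\Tr[\Delta\,g(H_0^-)\,\bar\Delta]$, where $g$ is a divided-difference of $f'$. The key computational device is to write everything in terms of the integral kernels. Using $\Delta(x,y)=-\frac{h^{1-d}}{2(2\pi)^{d/2}}(\psi(x)+\psi(y))\widehat t(h^{-1}(y-x))$ and the fact that $H_0$ is a function of $-ih\nabla+hA(x)$, one localizes: set $A=W=0$ first (these contribute only at order $h^4$ through the $E_2$ terms once one reinstates them), Fourier-transform in the difference variable $y-x$, and Taylor-expand $\psi$ about the center-of-mass point. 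The zeroth order in this Taylor expansion gives the $h^2E_1$ term with the phase-space integral $\int t(q)^2 g_0(\beta(q^2-\mu))\,dq/(2\pi)^d$; the second order gives $\|\partial\psi\|^2$-type terms feeding into $E_2$, and the $A,W$ corrections, reinstated via the magnetic translation $-ih\nabla\mapsto -ih\nabla+hA$, generate the $(\partial_j+2iA_j)\psi$ combination (the factor $2$ coming from the two electron lines in the diagonal block and its conjugate) and the $\langle\psi|W|\psi\rangle$ term. The fourth-order term in $\Delta$ contributes only the $\|\psi\|_4^4$ term in $E_2$; here one must verify the cancellation/structure that makes $g_1(\beta(q^2-\mu))/(q^2-\mu)$ appear, which relies on $g_1(z)/z$ being a smooth bounded function (no singularity at $q^2=\mu$).

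The main obstacle is the low regularity: the standard Weyl-calculus remainder estimates assume smooth symbols, but here $t$ is only controlled through the weighted $L^2$ and $L^{2p/(p-1)}$ bounds of its first four derivatives. So each Taylor remainder and each resolvent/trace estimate must be done by hand, carefully pairing a derivative of $\psi$ (integrated over the compact cell $\calC$, hence producing $\|\psi\|_{H^k(\calC)}$ norms) with a derivative of $t$ (integrated over $\R^d$ against the resolvent kernel of $H_0$, producing exactly the finite quantities in (\ref{t:as1})--(\ref{t:as2})). One also needs uniform-in-$\beta$ control of $f^{(k)}(\beta z)$ and of the contour, which is where the hypothesis that $\beta$ ranges over compact subintervals of $(0,\infty)$ is used: it bounds the decay of $f$ and keeps the contour at a fixed distance from the spectrum. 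The bookkeeping of which order in $\Delta$ and which order in the $\psi$-Taylor expansion produces which of the four lines of $E_2$, together with showing all omitted contributions are genuinely $O(h^5)$ or $O(h^6)$ with the stated $\psi$-norm dependence, is the technically heavy part; the identities for $g_0,g_1,g_2$ in (\ref{defg0})--(\ref{defg2}) are precisely the divided differences that arise, so verifying them is a routine but necessary check.
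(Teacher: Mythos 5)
Your plan coincides with the paper's own proof: a contour (Dunford) representation of $f(\beta H_\Delta)-f(\beta H_0)$, a resolvent expansion in $\Delta$ keeping the second- and fourth-order terms with a sixth-order remainder (odd orders being off-diagonal), evaluation of the resulting translation-invariant multipliers via divided differences of $f$ (exactly where $g_0,g_1,g_2$ arise), Taylor expansion in the momenta carried by $\widehat\psi$, $\widehat A$, $\widehat W$, and hand-made error estimates pairing $\psi$-norms on $\calC$ with the weighted $L^q$ bounds on $t$ and its derivatives from (\ref{t:as1})--(\ref{t:as2}). The only slight deviation is the norm used for the remainder: since $t$ is only in $L^{2p/(p-1)}$ (hence $\|\Delta\|_\infty$ is not controlled by the assumptions), the paper bounds the sixth-order term by the local $6$-norm $\|\Delta\|_6^6$ together with resolvent sup-norms giving $|z|^{-3}$ decay on the contour, rather than Hilbert--Schmidt/$\mathfrak{S}^4$ norms, but this is a detail within the same strategy.
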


Here, we use the short-hand notation $\|\psi\|_p$ for the norm on
$L^p(\calC)$. Likewise, $\langle \,\cdot\, | \, \cdot\, \rangle$
denotes the inner product on $L^2(\calC)$.

In general, the operator $f(\beta H_\Delta) - f(\beta H_0)$ will not
be trace class under our assumptions on $t$ and $\psi$. Hence the
trace in (\ref{210}) has to be suitably understood as the sum of the
traces of the diagonal entries. This issue is further discussed in
Section~\ref{sec:up}.

The expressions $E_1$ and $E_2$ are the first two non-vanishing terms
in a semiclassical expansion of the left side of (\ref{210}). They can
be obtained, in principle, from well-known formulas in semiclassical
analysis \cite{helffer,robert}. The standard techniques are not
directly applicable in our case, however. This has to do, on the one
hand, with our rather minimal regularity assumptions on $W$, $A$,
$\psi$ and $t$ and, on the other hand, with the fact that we are
working with the trace per unit volume of an infinite, periodic
system.

The proof of Theorem~\ref{thm:scl} will be given in
Section~\ref{sec:thm:scl}. As the proof shows, the theorem holds for a
larger class of functions $f$, satisfying appropriate smoothness and
decay assumptions.

Our second semiclassical estimate concerns the upper off-diagonal term
of
\begin{equation}\label{def:gammad}
  \Gamma_\Delta = \frac{ 1}{1+ e^{\beta H_\Delta}}\,,
\end{equation}
which we shall denote by $\alpha_\Delta$. We shall be interested in
its $H^1$ norm. In general, we define the $H^1$ norm of a periodic
operator $O$ by
\begin{equation}\label{def:h1}
  \|O\|_{H^1}^2 = \Tr \left[ O^\dagger \left(1-h^2\nabla^2\right) O  \right]\,.
\end{equation}
In other words, $\|O\|_{H^1}^2 = \|O\|_2^2 + h^2 \|\nabla
O\|_2^2$. Note that this definition is not symmetric, i.e.,
$\|O\|_{H^1} \neq \|O^\dagger\|_{H^1}$ in general.

Given $t$ as above, let us define the function
\begin{equation}\label{def:varphi}
  \varphi(p) = \frac \beta 2 \,g_0(\beta (p^2-\mu)) \,t(p) \,.
\end{equation}

\begin{Theorem}\label{lem3} With $\alpha_\Delta$ denoting the upper
  off-diagonal entry of $\Gamma_\Delta$ in (\ref{def:gammad}), we have
  \begin{equation}
    \left\| \alpha_\Delta - \tfrac h2 \left(\psi(x) \varphi(-ih\nabla) + \varphi(-ih\nabla) \psi(x) \right) \right\|_{H^1}  \leq C h^{3-d/2}  \|\psi\|_{H^2(\calC)} \,.
  \end{equation}
\end{Theorem}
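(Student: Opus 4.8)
\textbf{Proof proposal for Theorem~\ref{lem3}.}

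The plan is to obtain $\alpha_\Delta$, the upper-right entry of $\Gamma_\Delta = (1+e^{\beta H_\Delta})^{-1}$, from a resolvent (contour-integral) expansion in powers of $\Delta$, and to show that the linear term reproduces the stated operator $\tfrac h2(\psi\varphi(-ih\nabla)+\varphi(-ih\nabla)\psi)$ up to errors controlled in the $H^1$ norm \eqref{def:h1}. First I would write
\begin{equation*}
  \frac{1}{1+e^{\beta H_\Delta}} = \frac{1}{2\pi i}\oint f'(\beta z)\,\frac{\beta\, dz}{z - H_\Delta}
\end{equation*}
(or equivalently use the Cauchy formula for $(1+e^{\beta\,\cdot})^{-1}$ directly) and expand the resolvent as a Neumann series in the off-diagonal perturbation $\mathcal{D} = \begin{pmatrix} 0 & \Delta \\ \bar\Delta & 0\end{pmatrix}$ around $H_0 = \mathrm{diag}(\mathfrak{h}, -\mathfrak{h})$ with $\mathfrak{h} = (-ih\nabla+hA)^2-\mu+h^2W$. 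Since $H_0$ is block-diagonal, the $n$-th term in the expansion contributes to the upper-right block only for $n$ odd; the $n=1$ term gives, after doing the $z$-integral, an operator of the form $L_1(\Delta)$ where $L_1$ is the double-operator-integral (Fréchet-derivative) map built from the divided difference of $(1+e^{\beta\,\cdot})^{-1}$ evaluated at the pair of ``energies'' $(\mathfrak h, -\mathfrak h)$; the $n\geq 3$ terms carry at least three factors of $\Delta$ and hence at least three factors of $h\|\psi\|$.

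The first main step is the \emph{leading-term identification}: when $A=W=0$, $\mathfrak h = h^2(-\nabla^2) - \mu$ is a function of $-ih\nabla$, $\Delta = -\tfrac h2(\psi\, t(-ih\nabla) + t(-ih\nabla)\psi)$, and the divided-difference calculus collapses — the linear term becomes exactly $-\tfrac h2\big(\psi\,\tilde\varphi(-ih\nabla) + \tilde\varphi(-ih\nabla)\psi\big)$ with $\tilde\varphi(p)$ the divided difference of $(1+e^{\beta\,\cdot})^{-1}$ between $p^2-\mu$ and $-(p^2-\mu)$ times $t(p)$, and a short computation shows this divided difference equals $-\tfrac\beta2 g_0(\beta(p^2-\mu))$ (using $g_0(z) = (f'(-z)-f'(z))/z$ and $\tanh(z/2)/z$), i.e. $\tilde\varphi = \varphi$, up to the sign already present in $\Delta$. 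So the main term of Theorem~\ref{lem3} is the $A=W=0$, linear-in-$\Delta$ contribution; everything else is error. The second main step is \emph{bounding the errors}: (i) the $n\geq 3$ terms of the Neumann series, which I expect to bound in $H^1$ by $Ch^{3-d/2}\|\psi\|_{H^1}^3$ after estimating one factor $h^{1-d/2}\|\psi\|_{H^1}$ in Hilbert--Schmidt norm per unit volume and the remaining factors in operator norm, using $\|\Delta\| \leq Ch\|\psi\|_\infty\|\,\widehat t\,\|_1$ and the elementary resolvent bound $\|(z-H_0)^{-1}\|\leq (\mathrm{dist}(z,\mathrm{spec}\,H_0))^{-1}$ on the contour; (ii) the difference between the $A,W\neq 0$ linear term and the $A=W=0$ linear term, which is $O(h)$ relative to the main term because $hA$ and $h^2W$ enter $\mathfrak h$ with extra powers of $h$ and, by Assumption~\ref{as0}, are bounded with bounded derivatives — here I would commute $t(-ih\nabla)$ or $\varphi(-ih\nabla)$ past $\psi$ and past the magnetic resolvent, each commutator producing a factor $h$ times a derivative of $\psi$ or of the symbol, which is where the $\|\psi\|_{H^2(\calC)}$ on the right-hand side comes from (one needs a second derivative of $\psi$ to control the $h^2$-order commutator remainder in $H^1$ norm). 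The trace-per-unit-volume bookkeeping — replacing ordinary Hilbert--Schmidt norms by their per-unit-volume analogues and invoking the Hölder-type inequalities promised in Section~\ref{sec:prel} — runs throughout but is routine.

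The hard part will be step~(ii): making the replacement of $H_0$ by its $A=W=0$ version rigorous at the level of the $H^1$ operator norm with only the minimal regularity of Assumption~\ref{as0} and only $H^2$ regularity of $\psi$. The magnetic resolvent $(z-\mathfrak h)^{-1}$ is not a Fourier multiplier, so one cannot simply read off symbols; instead I would expand $(z-\mathfrak h)^{-1}$ around $(z - (h^2(-\nabla^2)-\mu))^{-1}$ in powers of the perturbation $2h^2 A\cdot(-i\nabla) + h^2(-i\nabla)\cdot A + h^2 A^2 + h^2 W$ (each term carrying $h^2$, with the first-order piece having one ``free'' gradient that costs an extra $h^{-1}$, hence a net $h$), and control the $H^1 = (1+h^2\nabla^2)^{1/2}$-weighted Hilbert--Schmidt-per-unit-volume norms of the resulting products using that $(1+h^2\nabla^2)^{1/2}(z-h^2\nabla^2+\mu)^{-1}$ is a bounded Fourier multiplier uniformly on the contour and that $\widehat A\cdot(1+|\cdot|)$ and $\widehat W$ are summable, so multiplication by $A$, $W$ (and their products with gradients absorbed into the multiplier) is bounded on the relevant spaces. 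Keeping all these estimates uniform in $z$ along a contour that stays a fixed distance from $\mathrm{spec}\,H_0 = \mathrm{spec}\,\mathfrak h \cup (-\mathrm{spec}\,\mathfrak h)$ — which is possible since $\mathrm{dist}(0,\mathrm{spec}\,\mathfrak h)$ may be small but the contour can be chosen to hug the real axis at height $\sim T_c$ — and then collecting the powers of $h$ so that the total error is $O(h^{3-d/2}\|\psi\|_{H^2(\calC)})$ rather than, say, $O(h^{2-d/2})$, is the delicate accounting that the proof must carry out.
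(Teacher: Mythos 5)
Your overall skeleton — contour representation of $(1+e^{\beta H_\Delta})^{-1}$ with the contour at height $\pm\pi/(2\beta)$, expansion of the resolvent in the off-diagonal perturbation, identification of the linear-in-$\Delta$, $A=W=0$ term with $\tfrac h2(\psi\varphi+\varphi\psi)$, and a cubic remainder bounded by three factors of $\Delta$ — is exactly the structure of the paper's proof. But there is a genuine gap at the heart of your error estimate: you claim that for $A=W=0$ the linear term is \emph{exactly} $\tfrac h2(\psi\varphi(-ih\nabla)+\varphi(-ih\nabla)\psi)$, and that the $A,W$ corrections are ``$O(h)$ relative to the main term.'' Neither suffices. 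The main term has $H^1$ norm of order $h^{1-d/2}$, so the theorem requires every error to be smaller by \emph{two} powers of $h$. For $A=W=0$ the linear term is not the symmetrized operator exactly: since $\psi$ does not commute with $k_0$, its kernel carries the first divided difference of $\rho(z)=(1+e^{z})^{-1}$ at the \emph{non-coincident} arguments $\beta(q^2-\mu)$, $-\beta((q+hp)^2-\mu)$, and the discrepancy with $\tfrac h2(\psi\varphi+\varphi\psi)$ is the commutator term $\eta_1$ of (\ref{def:eta1}), whose naive size (one commutator, one power of $h$) is $h^{2-d/2}\|\psi\|_{H^1}$ — one power short. Likewise the part of the linear term that is linear in $A$ carries $h^2(\nabla\cdot A+A\cdot\nabla)$, and one gradient acts on momenta of order $h^{-1}$, so it too is naively only $O(h)$ relative to the main term, i.e. $h^{2-d/2}$. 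In both places the paper gains the missing power of $h$ from a specific cancellation, the antisymmetry $[a,a,-a]_\rho=-[a,-a,-a]_\rho$ of the second divided differences (equation (\ref{antis}), ultimately the oddness of $\rho-\tfrac12$): for $\eta_1$ it makes the amplitude in the momentum representation vanish to second order in $hp$ (so $J(hp)\le C|hp|^4$, whence $h^{3-d/2}\|\psi\|_{H^2}$ — note the $H^2$ norm is needed \emph{here}, not in the magnetic replacement, which only needs $\|\psi\|_{H^1}$), and for the $A$-linear piece $\eta_2^a$ it makes the leading term, after commuting $A$ and $\psi$ to one side, integrate to zero over the contour. Generic commutator counting of the kind you describe cannot produce this: the zero- and first-commutator contributions do not vanish identically, only after the $z$-integration and only because of this structural antisymmetry. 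Without identifying and exploiting it, your argument proves at best an $O(h^{2-d/2})$ bound and the theorem does not follow.

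Two smaller points. For the cubic remainder you propose one Hilbert--Schmidt factor and two operator-norm factors with $\|\Delta\|_\infty\le Ch\|\psi\|_\infty$, and then claim a bound by $\|\psi\|_{H^1}^3$; but $\|\psi\|_\infty\not\le C\|\psi\|_{H^1}$ for $d=2,3$, so this bookkeeping is inconsistent — the paper instead puts each $\Delta$ in the $6$-norm per unit volume, $\|\Delta\|_6\le Ch^{1-d/6}\|\psi\|_{H^1}\|t\|_6$ by (\ref{lte}) and Sobolev, and checks that the remaining resolvent norms decay fast enough (like $|z|^{-3}$ on the side where $\rho(\beta z)$ does not decay) for the contour integral to converge; this $z$-integrability, which your ``elementary resolvent bound'' alone does not give, also needs attention in your $\eta_2$-type terms, where the paper uses $\rho(\beta z)=1-\rho(-\beta z)$ and the $p$-norm bounds (\ref{pb}) to generate decay in the appropriate half plane.
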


The proof of this theorem will be given in
Section~\ref{sec:propproof}.

\section{Preliminaries}\label{sec:prel}

In this section we collect a few useful facts about the trace per unit
volume. In particular, we recall the general form of H\"older's
inequality and Klein's inequality, both of which will be used several
times in the proofs below.

Let $A$ and $B$ be bounded periodic operators on either $L^2(\R^d)$ or
$L^2(\R^d;\C^2)$, i.e., operators that commute with translations by a
unit length in any of the $d$ coordinate directions. The trace per
unit volume of $A$ is simply defined as the trace of $\chi A \chi$,
where $\chi$ is the characteristic functions of a unit cube, i.e., the
projection onto functions supported in this cube. Obviously, the
location of the cube is irrelevant. For $p\geq 1$ we also denote the
$p$-norm of $A$ by
\begin{equation}\label{def:pnorm}
  \|A\|_p = \left( \Tr \left(A^\dagger A\right)^{p/2} \right)^{1/p} \,.
\end{equation}
Here and in the remainder of this paper, $\Tr $ denotes the trace per
unit volume. We also use the notation $\|A\|_\infty$ for the standard
operator norm.

For any $1\leq p\leq \infty$, the triangle inequality
\begin{equation}\label{triangle}
  \| A + B \|_p\leq \|A\|_p + \|B\|_p
\end{equation}
holds.  For $1/r+1/s=1/p$, $1\leq r,s,p\leq \infty$, we have the
general H\"older inequality
\begin{equation}\label{holder}
  \|A B\|_p \leq \|A\|_r \|B\|_s \,.
\end{equation}
Moreover, if $f:\R\times \R \to \R_+$ is a non-negative function of
the form $f(x,y) = \sum_{i} g_i(x) h_i(y)$, then Klein's inequality
\begin{equation}\label{klein}
  \Tr f(A,B) : = \sum_i \Tr g_i(A) h_i(B)  \geq 0
\end{equation}
holds for self-adjoint $A$ and $B$.

Inequalities (\ref{triangle})--(\ref{klein}) are well-known in the
case of standard traces, see \cite{simon} and \cite{thirring}. They
extend to the periodic case via the Floquet decomposition
\cite[Sect.~XIII.16]{ReSi}. Specifically,
\begin{equation}
  A \cong \int^\oplus_{[0,2\pi]^d} A^\xi\, \frac{d\xi}{(2\pi)^d}
\end{equation}
with $A^\xi$ operating on $L^2(\calC)$, and $\cong$ denotes unitary
equivalence. The trace per unit volume equals
\begin{equation}
  \Tr A = \int_{[0,2\pi]^d} \Tr_{L^2(\calC)} A^\xi \, \frac{d\xi}{(2\pi)^d}
\end{equation}
The inequalities (\ref{triangle})--(\ref{klein}) then easily follow
from the standard ones using that
\begin{equation}
  (AB)^\xi = A^\xi B^\xi
\end{equation}
and that $g(A)^\xi = g(A^\xi)$ for appropriate functions $g$ and
self-adjoint $A$.

One also checks that
\begin{equation}
  \left| \Tr A \right| \leq \|A\|_1 \,.
\end{equation}
By induction, if follows from (\ref{holder}) that
\begin{equation}\label{genholder}
  \left| \Tr A_1 A_2 \cdots A_n \right| \leq \left\| A_1 A_2 \cdots A_n \right\|_1 \leq \prod_{i=1}^n \|A_i\|_{p_i} \,,
\end{equation}
where $1\leq p_i \leq \infty$ and $\sum_i p_i^{-1} = 1$.

Note that while the local trace norms introduced here share many
properties with the usual Schatten norms, they are not monotone
decreasing in $p$. For instance, if $A^\xi$ is a rank one operator
(whose norm is not independent of $\xi$) then the norm $\|A\|_p$ is
actually {\em increasing} in $p$. In particular, the finiteness of the
$p$-norm does not, in general, imply finiteness of the $q$-norm for
$q>p$.

In the proofs below we need one more inequality that generalizes an
inequality by Lieb and Thirring to the case of the trace per unit
volume (see, e.g., \cite[Sect.~4.5]{LSstab}. If $A$ and $B$ are
periodic operators and $p\geq 2$, then
\begin{equation}\label{alt}
  \|AB\|_p^p =  \Tr |AB|^p = \Tr \left( |B^\dagger| |A|^2 |B^\dagger| \right)^{p/2} \leq \Tr |B^\dagger|^{p/2} |A|^p |B^\dagger|^{p/2} \,.
\end{equation}
Again, the proof follows from the standard one using the Floquet
decomposition.  In the special case that $A$ is a multiplication by a
periodic function $a(x)$ and $B$ is multiplication in Fourier space,
i.e., $B = b(-i\nabla)$, (\ref{alt}) reads
\begin{equation}\label{lte}
  \| a(x) b(-i\nabla) \|_p \leq  (2\pi)^{-d/p} \|a\|_{L^p(\calC)} \|b\|_{L^p(\R^d)}\,.
\end{equation}

\section{Properties of $\alpha_0$}\label{sec:alpha0}

Recall the definition of $K_T$ in (\ref{def:kt}), and recall that
$\alpha_0$ denotes the eigenfunction of $K_{T_c}+V$ corresponding to
the eigenvalue zero, which we assume to be simple. We assume that
$T_c>0$, and that $V \in L^{p}(\R^d)$, with $p=1$ for $d=1$, $p>1$ for
$d=2$ and $p=3/2$ for $d=3$. There is no parameter $h$ in the
definition of $K_T$; in other words, $h=1$ in this section.

\begin{prop} \label{prop1}
  \begin{equation}
    e^{\kappa |x|} \sqrt{|V|} \alpha_0 \in L^2(\R^d)
  \end{equation}
  for any $\kappa < \kappa_c:=\im \sqrt{\mu + i \pi T_c}$.
\end{prop}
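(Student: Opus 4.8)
The plan is to show exponential decay of the zero-energy ground state $\alpha_0$ of $K_{T_c}+V$ by a Combes--Thomas type argument. Since $\alpha_0$ satisfies $K_{T_c}\alpha_0 = -V\alpha_0$, I would like to write $\alpha_0 = -K_{T_c}^{-1} V\alpha_0$; the operator $K_{T_c}^{-1}$ has integral kernel given (in Fourier space) by the multiplier $1/K_{T_c}(p) = \tanh(\beta_c(p^2-\mu)/2)/(p^2-\mu)$, which is a bounded, real-analytic function of $p$. The point is that this multiplier extends to an analytic function in a complex strip around $\mathbb R^d$, the width of the strip being controlled by the location of the nearest complex singularity. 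Since $\tanh(z/2)$ has poles at $z = i\pi(2k+1)$, the function $\tanh(\beta_c(p^2-\mu)/2)$ has singularities when $p^2 - \mu = i\pi T_c(2k+1)$, i.e.\ $p^2 = \mu + i\pi T_c(2k+1)$; the nearest one to the real axis in the imaginary direction of $p$ corresponds to $k=0$, and the maximal allowed imaginary shift of each component of $p$ is exactly $\kappa_c = \operatorname{Im}\sqrt{\mu + i\pi T_c}$. (One must also check the removable zero at $p^2=\mu$ causes no problem, which it does not since $\tanh(z/2)/z$ is entire near $z=0$.)

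Concretely, I would fix $\kappa<\kappa_c$ and a unit vector $e\in\R^d$ (or more symmetrically, work with $|x|$ via a smoothed version $\langle x\rangle$ or reduce to coordinate directions), and consider the conjugated operator $e^{\kappa e\cdot x} K_{T_c}^{-1} e^{-\kappa e\cdot x}$. On the Fourier side this is the multiplier $p\mapsto 1/K_{T_c}(p + i\kappa e)$, which by the analyticity discussion above is a bounded function of $p\in\R^d$ for $\kappa<\kappa_c$, with a bound depending only on $\kappa$, $\mu$, $T_c$ (uniform decay $O(|p|^{-2})$ at infinity). Hence $e^{\kappa e\cdot x} K_{T_c}^{-1} e^{-\kappa e\cdot x}$ is a bounded operator on $L^2(\R^d)$; in fact, combined with the $|p|^{-2}$ decay, $e^{\kappa e\cdot x}K_{T_c}^{-1}e^{-\kappa e\cdot x}$ maps $L^2$ into $H^2$ and, more importantly, composing with multiplication by $\sqrt{|V|}$ from the left, $\sqrt{|V|}\, e^{\kappa e\cdot x} K_{T_c}^{-1} e^{-\kappa e\cdot x} \sqrt{|V|}$ is bounded by the same relative-form-boundedness argument that made $K_{T_c}+V$ well-defined (using $V\in L^p$ with $p$ as in Assumption~\ref{as0} and $K_{T_c}\geq c(1-\nabla^2)$).

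From the eigenvalue equation, $\sqrt{|V|}\,\alpha_0$ is a natural auxiliary function: write $\phi := \sqrt{|V|}\,\alpha_0$ and $U := \operatorname{sgn}(V)\sqrt{|V|}$, so that $\alpha_0 = -K_{T_c}^{-1} U\phi$ and therefore $\phi = -\sqrt{|V|}\,K_{T_c}^{-1} U \phi$, i.e.\ $\phi$ is an eigenvector with eigenvalue $-1$ of the bounded (in fact Birman--Schwinger-type) operator $\sqrt{|V|}\,K_{T_c}^{-1} U$. Applying $e^{\kappa e\cdot x}$ and inserting $e^{-\kappa e\cdot x}e^{\kappa e\cdot x}$, one gets
\begin{equation}
  e^{\kappa e\cdot x}\phi = -\big(\sqrt{|V|}\, e^{\kappa e\cdot x} K_{T_c}^{-1} e^{-\kappa e\cdot x} U\big)\, e^{\kappa e\cdot x}\phi \,,
\end{equation}
provided the right-hand side makes sense. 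The standard way to run this rigorously is a bootstrap / continuity-in-$\kappa$ argument: for $\kappa$ in a small interval $[0,\kappa_0]$ one already knows $e^{\kappa e\cdot x}\phi\in L^2$ (e.g.\ starting from $\kappa=0$), the operator norm of $\sqrt{|V|}\, e^{\kappa e\cdot x} K_{T_c}^{-1} e^{-\kappa e\cdot x} U$ is finite and depends continuously on $\kappa\in[0,\kappa_c)$, and the fixed-point relation then propagates $L^2$-membership of $e^{\kappa e\cdot x}\phi$ up to any $\kappa<\kappa_c$. Replacing $e\cdot x$ by $|x|$ only costs harmless constants (e.g.\ via $|x|\le \sum_j |x_j|$ and a union over coordinate directions, or by a standard regularization $F_\kappa(x)$ with $|\nabla F_\kappa|\le\kappa$ and $F_\kappa(x)\to\kappa|x|$), giving $e^{\kappa|x|}\sqrt{|V|}\,\alpha_0\in L^2$.

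The main obstacle, and the part requiring genuine care rather than routine estimates, is the sharp identification of the decay rate: one must verify that $1/K_{T_c}(\,\cdot\,+i\kappa e)$ is genuinely bounded on $\R^d$ for every $\kappa<\kappa_c$ and blows up as $\kappa\uparrow\kappa_c$, i.e.\ that the first obstruction to analyticity in the strip is exactly at $p^2-\mu = i\pi T_c$ and not earlier. This means checking that the denominator $(p+i\kappa e)^2-\mu$ never hits $\pm i\pi T_c$ (nor the higher poles $\pm 3i\pi T_c,\dots$) for real $p$ and $\kappa<\kappa_c$, and that it never returns to $0$ in a way that is not cancelled by the numerator $\tanh$; all of this reduces to elementary but slightly delicate complex-analytic bookkeeping about the image of $\{(p+i\kappa e)^2 : p\in\R^d\}$ in $\C$. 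Everything else — boundedness of the conjugated resolvent, the relative boundedness of $V$, and the continuity/bootstrap in $\kappa$ — is standard Combes--Thomas and Birman--Schwinger machinery.
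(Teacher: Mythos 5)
Your reduction to the Birman--Schwinger fixed point $\phi=-\sqrt{|V|}\,K_{T_c}^{-1}\sqrt{V}\,\phi$ and your claim that the conjugated operator is bounded for $\kappa<\kappa_c$ are exactly the paper's starting point (the paper gets the latter from a pointwise kernel bound, via $\tanh z/z=\sum_n 2/((n-\tfrac12)^2\pi^2+z^2)$ and Green's function estimates, rather than from strip analyticity of the Fourier multiplier; the two routes are essentially equivalent). The genuine gap is the step you call ``standard bootstrap/continuity in $\kappa$.'' At $\kappa=0$ you know only $\phi\in L^2$; for no $\kappa>0$, however small, do you know a priori that $e^{\kappa e\cdot x}\phi\in L^2$, so there is no small interval to start from. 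And the fixed-point relation cannot ``propagate'' membership: the operator $B_\kappa=\sqrt{|V|}\,e^{\kappa e\cdot x}K_{T_c}^{-1}e^{-\kappa e\cdot x}U$ is not a contraction --- already $B_0$ has $-1$ as an eigenvalue (that is what $\alpha_0$ is), so $\|B_\kappa\|\geq 1$ and $1+B_\kappa$ need not be invertible; the identity $e^{\kappa e\cdot x}\phi=-B_\kappa e^{\kappa e\cdot x}\phi$ is something $e^{\kappa e\cdot x}\phi$ \emph{would} satisfy if it were in $L^2$, which is circular. You have also misplaced the main difficulty: the strip-width/analyticity bookkeeping you flag as delicate is the routine part; the crux is precisely the passage from unweighted to weighted $L^2$ that you wave through.

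The paper closes this gap with a cutoff that restores smallness: write $\phi=\phi_1+\phi_2$ with $\phi_2=\phi\,\chi_{\{|x|>R\}}$, so that $\phi_2=U_1K_{T_c}^{-1}U_2\,\phi_2+f$ with $U_1=-\chi_{\{|x|>R\}}\sqrt{|V|}$, $U_2=\chi_{\{|x|>R\}}\sqrt{V}$ and $f=U_1K_{T_c}^{-1}\sqrt{V}\phi_1$. Then $e^{\kappa|x|}f\in L^2$ because $\phi_1$ lives in $\{|x|\leq R\}$, and the weighted operator satisfies $\|e^{\kappa|x|}U_1K_{T_c}^{-1}U_2e^{-\kappa|x|}\|\leq C\|V\chi_{\{|x|>R\}}\|_p<1$ for $R$ large, so one can invert $1-e^{\kappa|x|}U_1K_{T_c}^{-1}U_2e^{-\kappa|x|}$ and conclude. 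If you want to keep your continuity-in-$\kappa$ picture instead, you would have to make it rigorous via analytic perturbation theory for the isolated, simple eigenvalue $-1$ of the compact family $B_\kappa$, identifying the perturbed eigenvector with $e^{\kappa e\cdot x}\phi$ by continuation from purely imaginary $\kappa$ (where the conjugation is unitary); that machinery is not in your sketch. A minor further point: the reduction from directional weights to $e^{\kappa|x|}$ via $|x|\leq\sum_j|x_j|$ produces a product of one-directional weights, not a sum, and so does not work as stated; use a fine net of directions (losing an arbitrarily small factor in $\kappa$) or a radial Lipschitz weight, which the paper's kernel-side estimate handles directly.
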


In three dimensions, this proposition also holds for $\kappa =
\kappa_c$, in fact.

The key to prove Proposition~\ref{prop1} is to estimate the integral
kernel of $K_T^{-1}$. It turns out to have the same behavior as the
one for $(-\nabla^2+\kappa_c^2)^{-1}$, both at the origin and at
infinity. Hence we find it convenient to state our bound in terms of
$G(x-y;\lambda)$, the Green's function of $-\nabla^2-\lambda$ for
$\lambda\in \C \setminus \R_+$. Note that $G(x,\lambda)>0$ for
$\lambda <0$.

\begin{Lemma}\label{lem:expdecay}
  For some constant $C_{\mu/T}$ depending on $\mu/T$, the integral
  kernel of $K_T^{-1}$ satisfies
  \begin{equation}
    \left| K_T^{-1} (x-y) \right| \leq  C_{\mu/T}  G(x-y;- (\im \sqrt{\mu + i \pi T})^2)\,.
  \end{equation}
\end{Lemma}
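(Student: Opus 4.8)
The plan is to obtain the bound on $K_T^{-1}(x-y)$ by a contour integration, exploiting the explicit functional form of $K_T$. Since $K_T = k(-\nabla^2)$ with $k(E) = (E-\mu)/\tanh((E-\mu)/(2T))$, and $K_T$ is a function of $-\nabla^2$ alone, the integral kernel $K_T^{-1}(x-y)$ depends on $x-y$ only and is rotation invariant. First I would write $k(E)^{-1}$ as a superposition of resolvents $(E-\lambda)^{-1}$ of the Laplacian. The natural tool is a Cauchy/contour representation: one expresses $1/k(E)$ via an integral over a contour in the complex energy plane encircling the real axis, or, better, via the partial-fraction-type expansion coming from the poles of $1/k$. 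The function $E\mapsto k(E)$ has zeros precisely where $\tanh((E-\mu)/(2T))$ vanishes faster than $E-\mu$, i.e.\ nowhere on the real line (recall $k(E)\ge 2T>0$), but $1/k(E)$ has poles in the complex plane at the points where $E-\mu = i\pi T n$ for nonzero integers $n$ (the poles of $\tanh$), except that at $E=\mu$ itself $k$ is regular and nonzero. So $1/k$ is meromorphic with poles at $E = \mu + i\pi T n$, $n\in\Z\setminus\{0\}$, the closest ones to the real axis being at $E = \mu \pm i\pi T$.

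The key step is then: represent $K_T^{-1}(x-y)$ as a contour integral of $k(E)^{-1}$ times the free resolvent kernel $G(x-y;E)$ of $-\nabla^2 - E$, deform the contour, and pick up the residues (or, equivalently, bound the contour integral directly). Because $G(x-y;E)$ decays like $e^{-\im(\sqrt{E})|x-y|}$ for $\im\sqrt{E}>0$, the slowest-decaying contribution comes from the poles nearest the positive real axis, namely $E = \mu \pm i\pi T$, which decay at rate $\im\sqrt{\mu+i\pi T} = \kappa_c$ (with $T=T_c$; here general $T$). Collecting the residue at these two poles produces a term proportional to $G(x-y; -\kappa_c^2)$ up to a constant depending on $k'$ at the pole, i.e.\ on $\mu/T$; the remaining poles and the leftover contour give contributions decaying at least as fast, and near the origin one checks that the singularity of $K_T^{-1}(x-y)$ matches that of $(-\nabla^2+\kappa_c^2)^{-1}$ — in dimensions $d\le 3$ this is an integrable singularity ($|x-y|^{-1}$ for $d=3$), controlled because $k(E)^{-1} \sim E^{-1}$ for large $E$, exactly the large-momentum behavior of $(-\nabla^2+\kappa_c^2)^{-1}$. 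Summing everything and using $G(x;-\kappa^2)$ monotone in $\kappa$ on the relevant range yields the stated inequality with a single constant $C_{\mu/T}$.

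Concretely I would proceed as follows. (i) Fix the representation: for $x\ne y$, write
$$
K_T^{-1}(x-y) = \frac{1}{(2\pi)^d}\int_{\R^d} \frac{e^{ip\cdot(x-y)}}{k(p^2)}\,dp,
$$
and carry out the angular integration to reduce to a one-dimensional integral, or alternatively write $1/k(p^2)$ directly via a contour integral in the variable $E$ wrapping the spectrum $[0,\infty)$ of $-\nabla^2$. (ii) Shift the $E$-contour off the real axis; the integrand $k(E)^{-1}G(x-y;E)$ is analytic except at the poles $E=\mu+i\pi T n$, so deforming past the first pair of poles produces residue terms $\propto G(x-y;\mu\pm i\pi T)$ plus a remainder integral along a contour with $\im\sqrt{E}\ge\kappa_c$. (iii) Estimate $|G(x-y;\mu\pm i\pi T)| \le C_{\mu/T}\,G(x-y;-\kappa_c^2)$ (same exponential rate, worse-behaved prefactor absorbed into the constant), and bound the remainder integral likewise using that $G$'s decay rate only increases as one moves the contour up. (iv) Separately check the short-distance behavior so the constant is uniform down to $x=y$ (away from the single integrable singularity).

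The main obstacle I anticipate is step (ii)–(iii): controlling the tail of the contour integral and the infinitely many poles $E=\mu+i\pi T n$, $|n|\ge 2$, uniformly, and verifying that their combined contribution is genuinely dominated by $C_{\mu/T}\,G(x-y;-\kappa_c^2)$ both near the origin and at infinity — i.e.\ that the summation over $n$ converges and does not spoil the decay rate. One also has to be careful that $k(E)^{-1}$ does not grow along the deformed contour (it is bounded away from the poles since $k(E)\to\infty$ as $|E|\to\infty$ in suitable sectors and is nonvanishing), and that the interchange of the $p$-integral (or $E$-contour) with the limit $x\to y$ is justified in each dimension $d\le 3$. Handling the $d=3$ endpoint case $\kappa=\kappa_c$, mentioned in the remark, is a slightly more delicate borderline computation that I would do by a more careful residue bookkeeping rather than the crude contour estimate.
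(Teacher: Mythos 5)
Your strategy --- exploit the meromorphy of $1/k(E)=\tanh((E-\mu)/(2T))/(E-\mu)$ in the energy variable, write $K_T^{-1}(x-y)$ as a superposition of free Green's functions at complex energies, and read off the decay rate from the singularities nearest the real axis --- is in essence what the paper does, but the execution differs in a way that matters precisely for the work you flag as the main obstacle. Instead of deforming a contour, picking up the first pair of residues and then estimating a remainder integral together with the infinitely many higher poles, the paper inserts the classical partial-fraction (Mittag--Leffler) identity $\tanh z/z=\sum_{n\ge1}2/((n-\tfrac12)^2\pi^2+z^2)$ with $z=(E-\mu)/(2T)$, which yields at one stroke the exact, absolutely convergent representation
\begin{equation*}
K_T^{-1}(x-y)=\frac2\pi\sum_{n\ge1}\frac1{n-\tfrac12}\,\im\, G\bigl(x-y;\mu+i(n-\tfrac12)2\pi T\bigr)\,.
\end{equation*}
Every term is an explicit Green's function at energy $\mu+i(2n-1)\pi T$, so each decays at rate $\im\sqrt{\mu+i(2n-1)\pi T}\ge\im\sqrt{\mu+i\pi T}$ and has the same integrable short-distance singularity (for $d\le3$) as $G(\cdot\,;-\kappa^2)$; summability in $n$ is immediate from the weights $1/(n-\tfrac12)$ and the improved decay of the higher terms. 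Thus the contour-tail estimate, the uniform control of all poles, and the residue bookkeeping you anticipate never arise --- they are replaced by one known series identity plus the elementary asymptotics of $G$ at $0$ and at infinity, which is exactly how the paper concludes. One small correction to your pole analysis: the poles of $1/k$ sit where $\tanh((E-\mu)/(2T))$ has poles, i.e.\ at $E=\mu+i\pi T(2n+1)$, the \emph{odd} integer multiples of $i\pi T$ only; the even multiples are zeros of $\tanh$, hence zeros of $1/k$, not poles. This does not affect your (correct) identification of the nearest singularities $\mu\pm i\pi T$ and of the rate $\kappa_c$, but it would change the bookkeeping in your step (ii) if carried out literally.
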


\begin{proof}
  We start with the series representation
  \begin{equation}
    \frac {\tanh{z}}{z}  = \sum_{n = 1}^{\infty} \frac{ 2}{ (n-\tfrac 12)^2 \pi^2 + z^2}
  \end{equation}
  which implies that
  \begin{equation}
    K_T^{-1} =  \sum_{n=1}^\infty   \frac { 4 T}{  (n-\tfrac 12)^2 (2\pi T)^2  +  (-\nabla^2-\mu)^2} \,.
  \end{equation}
  We can rewrite this as
  \begin{equation}
    K_T^{-1} =  \frac 2\pi \sum_{n=1}^\infty  \frac 1{n-\tfrac 12} \im   \frac { 1 }{ -\nabla^2-\mu - i (n-\tfrac 12) 2\pi T} \,.
  \end{equation}
  Hence
  \begin{equation}
    K_T^{-1}(x-y) =  \frac 2\pi \sum_{n=1}^\infty  \frac 1{n-\tfrac 12} \im  G(x-y; \mu + i (n-\tfrac 12) 2\pi T)\,.
  \end{equation}
  The result now follows easily using the explicit behavior of the
  Green's function $G$ at $0$ and at infinity.
\end{proof}

\begin{Lemma}\label{lem:intop}
  Let $L$ be an integral operator with integral kernel bounded as
  \begin{equation}
    | L(x,y) | \leq A e^{-\kappa|x-y|} G(x-y;-e)
  \end{equation}
  for $\kappa>0$ and $e> 0$ (or $e\geq 0$ in $d=3$). Then there is a
  finite constant $C_{p,d}$ such that for any $U_1,U_2 \in
  L^{2p}(\R^d)$
  \begin{equation}
    \left\| e^{\kappa|x|} U_1 L U_2 e^{-\kappa|x|} \right\| \leq A\, C_{p,d}\, e^{d/(2p)-1}  \|U_1\|_{2p} \|U_2\|_{2p} 
  \end{equation}
  where $p\geq 1$ in $d=1$, $p>1$ in $d=2$ and $p\geq 3/2$ in $d=3$.
\end{Lemma}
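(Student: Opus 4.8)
The plan is to reduce the operator‐norm estimate to a scalar convolution inequality by applying a generalized Young / Schur‐type argument, exploiting that the conjugating weights $e^{\pm\kappa|x|}$ almost exactly cancel the exponential decay of $L(x,y)$. First I would write, for the conjugated kernel,
\[
  \bigl(e^{\kappa|x|} U_1 L U_2 e^{-\kappa|y|}\bigr)(x,y)
  = e^{\kappa|x|} U_1(x)\, L(x,y)\, U_2(y)\, e^{-\kappa|y|},
\]
and use the triangle inequality $|x|-|y|\le|x-y|$ to bound
\[
  e^{\kappa|x|}e^{-\kappa|y|} |L(x,y)| \le A\, e^{\kappa|x-y|} e^{-\kappa|x-y|} G(x-y;-e) = A\, G(x-y;-e).
\]
Thus the conjugated operator is dominated, entrywise in absolute value, by $U_1(x)\, A\, G(x-y;-e)\, U_2(y)$, i.e.\ by $A$ times the composition $U_1 \cdot G(\,\cdot\,;-e)(-i\nabla)^{?}$... more precisely by the integral operator with kernel $A\, U_1(x) G(x-y;-e) U_2(y)$. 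Since domination of kernels in absolute value gives domination of operator norms (by the Schur test, or by noting $\|K\| \le \| |K| \|$ for integral operators acting between $L^2$ spaces — this is standard, e.g.\ via $\||K|\|_{2\to2}$ and positivity), it suffices to bound $\bigl\| U_1 \, G(\,\cdot\,;-e)(-i\nabla)\, U_2 \bigr\|$ (with a slight abuse: $G(\cdot;-e)$ is the kernel of $(-\Delta+e)^{-1}$).

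Next I would estimate that norm by interpolation/Hölder in Schatten classes, mirroring inequality (\ref{lte}) from Section~\ref{sec:prel}: with $U_1, U_2 \in L^{2p}(\R^d)$, write
\[
  \bigl\| U_1\, (-\Delta+e)^{-1}\, U_2 \bigr\|
  \le \bigl\| U_1\, (-\Delta+e)^{-1/2} \bigr\|_{2p}\;\bigl\| (-\Delta+e)^{-1/2}\, U_2 \bigr\|_{2p},
\]
using the generalized Hölder inequality for Schatten norms with exponents $2p, 2p$ giving the operator ($\infty$) norm when $1/(2p)+1/(2p) \le$ ... — actually one needs $2p \ge 2$, i.e.\ $p\ge1$, and the Kato–Seiler–Simon bound
\[
  \bigl\| U\, g(-i\nabla) \bigr\|_{2p} \le (2\pi)^{-d/(2p)} \|U\|_{L^{2p}} \|g\|_{L^{2p}(\R^d)}
\]
for $2p\ge 2$. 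Here $g(\xi) = (\xi^2+e)^{-1/2}$, whose $L^{2p}$ norm I compute by scaling: $\|g\|_{L^{2p}(\R^d)}^{2p} = e^{-p + d/2} \int (\eta^2+1)^{-p}\,d\eta$, which is finite precisely when $2p > d$, i.e.\ $p>1/2$ for $d=1$, $p>1$ for $d=2$, $p>3/2$ for $d=3$ — and for $d=3$ the endpoint $p=3/2$ is covered by allowing $e\ge0$ and using instead the Sobolev/Hardy–Littlewood–Sobolev inequality for $(-\Delta)^{-1}$ directly, since $\|U_1 (-\Delta)^{-1} U_2\| \le \|U_1 (-\Delta)^{-1} U_2\|_{3} \le C\|U_1\|_3\|U_2\|_3$ by HLS (the kernel $|x-y|^{-1}$ and $3 = d/(d-2)\cdot\ldots$). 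Tracking the $e$-dependence gives the claimed factor $e^{d/(2p)-1}$: indeed $\|g\|_{L^{2p}}^2 = e^{-1+d/(2p)} (\int(\eta^2+1)^{-p}d\eta)^{1/p}$, and this appears once from the product of the two factors each contributing $\|g\|_{L^{2p}}$, so altogether $e^{-1+d/(2p)}$, exactly as stated. Collecting constants into $C_{p,d}$ finishes the proof.

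The main obstacle I expect is the $d=3$ endpoint $p=3/2$ together with the possibility $e=0$: the scaling computation for $\|(\xi^2+e)^{-1/2}\|_{L^{2p}}$ is only borderline divergent there and degenerates as $e\to0$, so one cannot simply insert $e=0$ into the Kato–Seiler–Simon bound. The clean way around this is to handle that case separately via the Hardy–Littlewood–Sobolev inequality: the kernel of $(-\Delta)^{-1}$ in $d=3$ is $c|x-y|^{-1}$, and HLS gives $\|U_1 |x-y|^{-1} U_2\|_{\mathfrak{S}_3}$ (hence the operator norm) bounded by $C\|U_1\|_3\|U_2\|_3$; since the power of $e$ in the claimed bound is $e^{3/3-1}=e^0=1$, no $e$-factor is needed and the $e=0$ case is literally the HLS estimate. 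A minor secondary point is verifying that entrywise domination of a (non-self-adjoint) integral kernel transfers to the operator norm; this follows from $\|T\|_{L^2\to L^2}\le \| |T| \|_{L^2\to L^2}$, a standard fact proved by writing $\langle \phi, T\psi\rangle$ and bounding $|T(x,y)|$ pointwise, then applying Cauchy–Schwarz (Schur test). Everything else is bookkeeping of constants, which I would absorb into $C_{p,d}$.
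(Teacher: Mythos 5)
Your proposal is correct and follows essentially the same route as the paper: the triangle inequality $|x|\le |y|+|x-y|$ cancels the exponential weights against the kernel bound, the problem reduces to the operator norm of $|U_1|(-\Delta+e)^{-1}|U_2|$, the factor $e^{d/(2p)-1}$ comes from scaling, and the $d=3$, $p=3/2$, $e\ge 0$ endpoint is handled by Hardy--Littlewood--Sobolev, exactly as in the paper (which uses Young and Hausdorff--Young where you use the Kato--Seiler--Simon/Schatten--H\"older bound---an immaterial difference). One small remark: your intermediate claim $\|U_1(-\Delta)^{-1}U_2\|_{\mathfrak{S}_3}\le C\|U_1\|_3\|U_2\|_3$ is both unnecessary and not standard (Cwikel-type bounds only give the weak Schatten class at this endpoint); the operator-norm estimate you actually need follows directly from HLS and H\"older applied to the bilinear form $\langle \phi, U_1(-\Delta)^{-1}U_2\,\psi\rangle$, which is what the paper invokes.
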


\begin{proof}
  For $\psi\in L^2(\R^d)$,
  \begin{align}\nn
    &\left\| e^{\kappa|x|} U_1 L U_2 e^{-\kappa|x|} \psi \right\|_2^2
    \\ \nn & \leq \int_{\R^{3d}} |\psi(y)| e^{-\kappa|y|} |U_2(y)|
    |L(x,y)| |U_1(x)|^2 e^{2\kappa|x|} \\ \nn & \qquad\qquad \times
    |L(x,y')| |U_2(y')| e^{-\kappa|y'|} |\psi(y')| \, dy\, dy'\, dx \\
    \nn & \leq A^2 \int_{\R^{3d}} |\psi(y)| |U_2(y)| G(x-y;-e)
    |U_1(x)|^2 G(x-y';-e) |U_2(y')| |\psi(y')| \, dy\, dy'\, dx \\
    &=A^2 \left\| |U_1| (-\nabla^2+e)^{-1} |U_2| |\psi|
    \right\|_2^2\,,
  \end{align}
  where we used the triangle inequality $2|x| - |y|- |y'| \leq
  |x-y|+|x-y'|$. The result now follows from the fact that the
  operator norm of $|U_1|(-\nabla^2+e)^{-1}|U_2|$ is bounded by a
  constant times $e^{d/(2p)-1}\|U_1\|_{2p} \|U_2\|_{2p}$, as can be
  seen by an application of the Hausdorff-Young inequality (for $e>0$)
  or the Hardy-Littlewood-Sobolev inequality (for $e=0$ and $d=3$) and
  H\"older's inequality.
\end{proof}

\begin{proof}[Proof of Proposition~\ref{prop1}]
  The function $\phi = \sqrt{|V|} \alpha_0$ is in $L^2(\R^d)$ (because
  of the relative form-boundedness of $V$) and satisfies
  \begin{equation}
    \phi = - \sqrt{|V|} \frac 1{K_{T_c}} \sqrt{V} \phi \,,
  \end{equation}
  where $\sqrt{V}:= V/\sqrt{|V|}$. For given $R>0$, we decompose $\phi
  = \phi_1 + \phi_2$, where $\phi_2 = \phi\,
  \chi_{\{|x|>R\}}$. Introducing $U_1 = - \chi_{\{|x|>R\}} \sqrt{|V|}$
  and $U_2 = \chi_{\{|x|>R\}}\sqrt{V}$, we have
  \begin{equation}
    \phi_2 = U_1 \frac 1{K_{T_c}} U_2 \phi_2 + f \,, 
  \end{equation}
  where
  \begin{equation}
    f = U_1 \frac 1{K_{T_c}} \sqrt{V} \phi_1 \,.
  \end{equation}

  We shall now use that the Green's function of the Laplacian has the
  property
  \begin{equation}
    G(x-y;-\kappa^2) \leq C_{\epsilon/\kappa} e^{-\left(\kappa-e\right)|x-y|} G(x-y;-e^2)
  \end{equation}
  for $0<e\leq \kappa$. In combination with Lemma~\ref{lem:expdecay}
  and Lemma~\ref{lem:intop}, this implies that for $\kappa<\kappa_c$,
  the operator
  \begin{equation}
    e^{\kappa|x|} U_1 \frac 1{K_{T_c}} \sqrt{V} e^{-\kappa|x|}
  \end{equation}
  is bounded. Since $e^{\kappa|x|}|\phi_1|\leq e^{\kappa R}|\phi_1|$,
  we conclude that $e^{\kappa|x|} f \in L^2(\R^d)$. Similarly, we
  observe that
  \begin{equation}
    \left\| e^{\kappa|x|} U_1 \frac 1{K_{T_c}} U_2 e^{-\kappa|x|} \right\| \leq C \| V\, \chi_{\{|x|>R\}}\|_p \,,
  \end{equation}
  with $p$ as in Assumption~\ref{as0}. The latter expression is less
  than one for $R$ large enough. Hence
  \begin{equation}
    e^{\kappa|x|} \phi_2 = \left( 1 - e^{\kappa|x|} U_1 \frac 1{K_{T_c}} U_2 e^{-\kappa|x|}\right)^{-1} e^{\kappa|x|} f  
  \end{equation}
  is an element of $ L^2(\R^d)$.  Since obviously also
  $e^{\kappa|x|}\phi_1 \in L^2(\R^d)$, this concludes the proof.
\end{proof}

We shall use the result of Proposition~\ref{prop1} in the following
way. Recall that $t$ was defined in (\ref{deft}) as twice the Fourier
transform of $K_{T_c} \alpha_0 = - V \alpha_0$. The following
proposition collects all the regularity properties of $t$ and
$\alpha_0$ that we shall need below.

\begin{prop} \label{prop:reg} The function $t$ in (\ref{deft}),
  together with all its derivatives, is a function in $L^q(\R^d)$,
  with $q = 2p/(p-1)$, i.e., $q=\infty$ for $d=1$, $2<q<\infty$ for
  $d=2$ and $q=6$ for $d=3$.  We also have that
  \begin{equation}\label{propt2}
    \int_{\R^d} \frac{|\partial^\gamma t(p)|^2}{1+p^2} dp = 4 \left\langle x^\gamma \sqrt{|V|}\alpha_0 \left| \sqrt{V} \frac 1{ 1-\nabla^2} \sqrt{V} \right| x^\gamma \sqrt{|V|} \alpha_0\right\rangle < \infty
  \end{equation}
  and that $\int_{\R^d} \left( | x^\gamma \nabla \alpha_0(x)|^2 +
    |x^\gamma \alpha_0(x)|^2\right) dx <\infty$ for all $\gamma \in
  \N_0^d$.
\end{prop}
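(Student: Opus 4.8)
The plan is to extract all the stated regularity of $t$ and $\alpha_0$ from the single exponential-decay input of Proposition~\ref{prop1}, namely $e^{\kappa|x|}\sqrt{|V|}\,\alpha_0\in L^2(\R^d)$ for every $\kappa<\kappa_c$. First I would record the basic relation $t = -2(2\pi)^{-d/2}\widehat{V\alpha_0}$, equivalently $\widehat t(p) = -2(2\pi)^{-d/2}\int V(x)\alpha_0(x)e^{-ip\cdot x}\,dx$, so that $\partial^\gamma t(p)$ is (up to the factor $(-i)^{|\gamma|}$ and constants) the Fourier transform of $x^\gamma V(x)\alpha_0(x) = x^\gamma \sqrt{|V|}\,\sqrt{V}\,\alpha_0$. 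Since $e^{\kappa|x|}x^\gamma$ is dominated by $e^{\kappa'|x|}$ for any $\kappa'\in(\kappa,\kappa_c)$, Proposition~\ref{prop1} gives $x^\gamma\sqrt{|V|}\,\alpha_0\in L^2$; combined with $\sqrt{V}\alpha_0 = \sqrt{V}/\sqrt{|V|}\cdot\sqrt{|V|}\alpha_0$ and the fact that $\sqrt{V}/\sqrt{|V|}$ is bounded, we get $x^\gamma V\alpha_0 \in L^1\cap L^p$ for a range of $p$: indeed $x^\gamma V\alpha_0 = (x^\gamma\sqrt{|V|}\alpha_0)\cdot(\sqrt{|V|})$ is a product of an $L^2$ function with $\sqrt{|V|}\in L^{2p}$ (as $V\in L^p$), so by Hölder $x^\gamma V\alpha_0\in L^r$ with $1/r = 1/2 + 1/(2p)$. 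One checks $r' = 2p/(p-1) = q$ is exactly the conjugate exponent, so the Hausdorff–Young inequality yields $\partial^\gamma t = \widehat{(\,\cdot\,)}$(something in $L^r$) $\in L^{r'} = L^q$, which is the first claim.

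Next, for the weighted-$L^2$ bound \eqref{propt2}, I would write $\partial^\gamma t$ as (a constant times) the Fourier transform of $x^\gamma V\alpha_0$ and compute
\begin{equation*}
\int_{\R^d}\frac{|\partial^\gamma t(p)|^2}{1+p^2}\,dp
= 4\int_{\R^d}\big|\widehat{x^\gamma V\alpha_0}(p)\big|^2\frac{dp}{1+p^2}
= 4\,\big\langle x^\gamma V\alpha_0\,\big|\,(1-\nabla^2)^{-1}\,\big|\,x^\gamma V\alpha_0\big\rangle
\end{equation*}
by Plancherel, and then factor $V\alpha_0 = \sqrt{V}\cdot\sqrt{|V|}\alpha_0$ inside the inner product to obtain exactly the right-hand side of \eqref{propt2}. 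Finiteness follows because $\sqrt{V}(1-\nabla^2)^{-1}\sqrt{V}$ is a bounded operator on $L^2$ (this is precisely the relative form-boundedness of $V$ with respect to $-\nabla^2$, guaranteed by the $L^p$-assumption on $V$ via the same Hausdorff–Young / Hardy–Littlewood–Sobolev argument used in Lemma~\ref{lem:intop}), and the vector $x^\gamma\sqrt{|V|}\alpha_0$ is in $L^2$ by Proposition~\ref{prop1}.

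For the final assertion, $\int(|x^\gamma\nabla\alpha_0|^2 + |x^\gamma\alpha_0|^2)\,dx<\infty$, I would use the eigenvalue equation $(-\nabla^2-\mu)\alpha_0 = -K_{T_c}\alpha_0 + (K_{T_c}-(-\nabla^2-\mu))\alpha_0$. More directly: $K_{T_c}\alpha_0 = -V\alpha_0$, and since $K_{T_c}\geq c(1-\nabla^2)$ for some $c>0$ (because $K_{T_c}\geq 2T_c$ and $K_{T_c}$ grows quadratically for large momenta, so $K_{T_c}(p)\gtrsim 1+p^2$), we have $\alpha_0 = K_{T_c}^{-1}(-V\alpha_0)$ with $(1-\nabla^2)^{1/2}K_{T_c}^{-1}(1-\nabla^2)^{1/2}$ bounded. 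Thus controlling $x^\gamma(1-\nabla^2)^{1/2}\alpha_0$ in $L^2$ reduces — after commuting the weight $x^\gamma$ through $K_{T_c}^{-1}$, which produces lower-order commutator terms handled inductively on $|\gamma|$ — to controlling $x^\beta V\alpha_0$ in $L^2$ for $\beta\leq\gamma$, and these are finite by the first part of the argument (they are products of $x^\beta\sqrt{|V|}\alpha_0\in L^2$ with $\sqrt{|V|}\in L^{2p}\subset L^\infty_{\mathrm{loc}}$... more precisely one uses $x^\beta V\alpha_0 = \sqrt{|V|}\cdot x^\beta\sqrt{|V|}\alpha_0$ and boundedness of $(1-\nabla^2)^{1/2}K_{T_c}^{-1}\sqrt{|V|}$, again relative form-boundedness).

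The main obstacle I anticipate is the commutator bookkeeping in the last step: moving the polynomial weight $x^\gamma$ through $K_{T_c}^{-1}$, since $K_{T_c}$ is a nonlocal function of $-\nabla^2$ and its commutators with $x_j$ are themselves operators of the form (derivative of $K_{T_c}$)$(-\nabla^2)\cdot\nabla_j$, requiring decay/smoothness estimates on the symbol $K_{T_c}(p)$ and its derivatives. This is where one needs that $K_{T_c}(p)$ is smooth (it is — $\mu$ is fixed, $T_c>0$, and $z/\tanh(z/2T_c)$ is real-analytic) with derivatives growing no faster than the appropriate powers of $1+|p|$, so that each commutator lowers the effective weight by one and the induction on $|\gamma|$ (up to $|\gamma|\leq 4$ coordinatewise, as needed for \eqref{t:as1}--\eqref{t:as2}) closes. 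An alternative that avoids operator commutators entirely is to work directly with $\alpha_0(x)$ in position space via the integral-kernel bound of Lemma~\ref{lem:expdecay}: write $\alpha_0 = -K_{T_c}^{-1}V\alpha_0$ as a convolution, $|\alpha_0(x)|\leq C\int G(x-y;-\kappa_c^2)|V(y)\alpha_0(y)|\,dy$, multiply by $x^\gamma$, split $x^\gamma$ using $|x|\leq|x-y|+|y|$, absorb the $|x-y|$ factors into the exponential decay of $G$ (at the cost of a slightly smaller $\kappa$), and bound the remaining $|y|^\gamma|V\alpha_0(y)|$ in $L^1$ by the already-established integrability; the $\nabla\alpha_0$ estimate is handled the same way using the gradient bound on $G$. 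Either route works; I would present the operator-theoretic version for brevity and cite relative form-boundedness for the bounds on $\sqrt{|V|}(1-\nabla^2)^{-1}\sqrt{|V|}$ and $(1-\nabla^2)^{1/2}K_{T_c}^{-1}(1-\nabla^2)^{1/2}$.
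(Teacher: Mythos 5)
Your proposal is correct, and for the first two assertions it coincides with the paper's proof: the $L^q$ statement is exactly the chain H\"older ($\sqrt{|V|}\in L^{2p}$, so $x^\gamma V\alpha_0\in L^{2p/(p+1)}$) followed by Hausdorff--Young, and \eqref{propt2} is exactly Plancherel plus the factorization $V\alpha_0=\sqrt{V}\cdot\sqrt{|V|}\alpha_0$ and the boundedness of $\sqrt{V}(1-\nabla^2)^{-1}\sqrt{V}$, with Proposition~\ref{prop1} supplying $x^\gamma\sqrt{|V|}\alpha_0\in L^2$. The only place you diverge is the last assertion. The paper simply writes the integral in Fourier space: since $\widehat\alpha_0(p)=t(p)/(2K_{T_c}(p))$ with $K_{T_c}(p)\geq c(1+p^2)$ a smooth symbol with controlled derivatives, the weighted derivatives $\partial_p^\gamma\widehat\alpha_0$ and $\partial_p^\gamma\bigl(p_j\widehat\alpha_0\bigr)$ are handled by the Leibniz rule, and their square-integrability follows directly from \eqref{propt2}. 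Your operator-theoretic version (commuting $x^\gamma$ through $K_{T_c}^{-1}$, with the factorization through $\sqrt{|V|}$ and boundedness of $(1-\nabla^2)^{1/2}K_{T_c}^{-1}\sqrt{|V|}$ to avoid the false claim $x^\beta V\alpha_0\in L^2$) is the same computation in disguise, so it closes, but the ``commutator bookkeeping'' you flag as the main obstacle evaporates once everything is written as multiplication in Fourier variables -- which is precisely why the paper phrases the second assertion as the weighted bound \eqref{propt2}. Your alternative position-space route would additionally require a gradient bound on the kernel of $K_{T_c}^{-1}$ (not just the bound of Lemma~\ref{lem:expdecay} on the kernel itself) to treat $x^\gamma\nabla\alpha_0$, so the Fourier-space argument is the cleaner choice.
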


\begin{proof}
  The $L^q$ property follows easily from the Hausdorff-Young
  \cite[Thm.~5.7]{LL} and H\"older inequality,
  \begin{equation}\label{propt1}
    \| \partial^\gamma t \|_{2p/{(p-1)}} \leq 2 \| x^\gamma V \alpha_0\|_{2p/(p+1)} \leq 2 \|V\|_p^{1/2} \| x^\gamma \sqrt{|V|} \alpha_0 \|_2\,,
  \end{equation}
  using Proposition~\ref{prop1} and our assumption that $V \in L^p$.

  The boundedness of (\ref{propt2}) follows again from
  Proposition~\ref{prop1} and the fact that the operator on the right
  side is bounded by our assumption on $V$.  Finally, the last
  statement follows easily from (\ref{propt2}) writing the integral in
  Fourier space and using that $\widehat \alpha_0(p) = t(p)/(2
  K_{T_c}(p))$, where $K_{T_c}(p)$ denotes the function obtained by
  replacing $-\nabla^2$ by $p^2$ in (\ref{def:kt}).
\end{proof}

For $d=3$, one could even drop the $1$ in $1-\nabla^2$ and $1+p^2$ in
(\ref{propt2}); boundedness of the operator follows from the
Hardy-Littlewood-Sobolev inequality.

\section{Proof of Theorem~\ref{thm:main}: Upper Bound}\label{sec:up}

In this section, we shall prove the upper bound in
Theorem~\ref{thm:main}, i.e., we shall show that $e\leq C h$ with $e$
defined in (\ref{enthm}). We shall denote $\beta = 1/T$ and
$\beta_c=1/T_c$. Recall that $\beta_c/\beta = 1 - h^2 D$ with $D>0$.

By definition, $F^{\rm BCS}(T,\mu)\leq \F^{\rm BCS}(\Gamma) - \F^{\rm
  BCS}(\Gamma_0)$ for any admissible state $\Gamma$.  As a trial
state, we use
\begin{equation}\label{def:gammad2}
  \Gamma_\Delta = \left( \begin{array}{cc} \gamma_\Delta & \alpha_\Delta  \\ \bar\alpha_\Delta & 1-\bar\gamma_\Delta \end{array} \right) = \frac 1{1 + e^{\beta H_\Delta}}
\end{equation}
where $H_\Delta$ is given in (\ref{hdelta}) with $\Delta$ as in
(\ref{def:delta}). Note that $\Gamma_0$, defined in
(\ref{def:gamma0}), indeed corresponds to setting $\Delta=0$ in
(\ref{def:gammad2}). For $t$, we choose (\ref{deft}), which is
reflection symmetric and can be taken to be real. We normalize it such
that (\ref{normal}) holds. The integral kernel of $\Delta$ is then
given by
\begin{equation}\label{int:delta}
  \Delta(x,y) = \frac{h^{1-d}}{(2\pi)^{d/2}} \left( \psi(x) + \psi(y) \right) V(h^{-1}(x-y)) \alpha_0(h^{-1}(x-y)) \,.
\end{equation}

Note that $H_\Delta$ is unitarily equivalent to $-\bar H_{\Delta}$,
\begin{equation}\label{eq:unit}
  U H_{\Delta} U^\dagger = - \bar H_\Delta \quad \text{with}\quad  U = \left( \begin{array}{cc} 0 & 1 \\ -1 & 0 \end{array} \right)\,.
\end{equation}
Hence also $U\Gamma_\Delta U^\dagger = 1 - \bar \Gamma_\Delta$ and, in
particular,
\begin{equation}\label{ent:ref}
  S(\Gamma_\Delta) = - \tfrac 12 \Tr \left[ \Gamma_\Delta \ln \Gamma_\Delta  +  (1-\Gamma_\Delta) \ln (1-\Gamma_\Delta)\right]\,.
\end{equation}
A simple calculation shows that
\begin{align}\nonumber
  & \Gamma_\Delta \ln \Gamma_\Delta + (1-\Gamma_\Delta) \ln
  (1-\Gamma_\Delta) - \Gamma_0 \ln \Gamma_0 - (1-\Gamma_0) \ln
  (1-\Gamma_0) \\ & = - \beta H_\Delta \Gamma_\Delta + \beta H_0
  \Gamma_0 - \ln \left( 1+e^{-\beta H_\Delta}\right) + \ln \left(
    1+e^{-\beta H_0}\right) \,. \label{tci1}
\end{align}
Moreover,
\begin{equation}\label{tci2}
  H_\Delta \Gamma_\Delta - H_0 \Gamma_0 = \left( \begin{array}{cc} k ( \gamma_\Delta - \gamma_0) + \Delta \bar\alpha_\Delta &  k \alpha_\Delta + \Delta ( 1- \bar\gamma_\Delta)  \\ \bar\Delta \gamma_\Delta + \bar k \bar\alpha_\Delta & \bar k ( \bar\gamma_\Delta - \bar\gamma_0) + \bar\Delta \alpha_\Delta \end{array} \right) 
\end{equation}
where $k$ denotes the left upper entry of $H_\Delta$ (and $H_0$). From
(\ref{def:bcs}) and (\ref{ent:ref})--(\ref{tci2}) we conclude that
\begin{align}\nonumber
  &\F^{\rm BCS}(\Gamma_\Delta) - \F^{\rm BCS}(\Gamma_0) \\ \nonumber &
  = - \frac 1{2\beta} \Tr\left[ \ln(1+e^{-\beta
      H_\Delta})-\ln(1+e^{-\beta H_0})\right] \\ \nonumber
  & \quad - h^{2-2d} \int_{\calC\times \R^d} V(\tfrac {x-y}h)\left|\tfrac 12 (\psi(x)+\psi(y))\alpha_0(\tfrac{x-y}h)\right|^2\, \frac{dx\,dy}{(2\pi)^d} \\
  & \quad + \int_{\calC\times\R^d} V(\tfrac{x-y}h)\left|
    \frac{h^{1-d}}{2(2\pi)^{d/2}}
    \left(\psi(x)+\psi(y)\right)\alpha_0(\tfrac{x-y}h)-
    \alpha_\Delta(x,y)\right|^2\,{dx\,dy} \label{fund:up}
\end{align}
where the $\Tr$ in the first term on the right side has to be
understood as the sum of the traces per unit volume of the diagonal
entries of the $2\times 2$ matrix-valued operator. In general, the
operator $ \ln(1+e^{-\beta H_\Delta})-\ln(1+e^{-\beta H_0})$ is not
trace class if $\Delta$ is not, as can be seen from
(\ref{tci1})--(\ref{tci2}). In the evaluation of $\F^{\rm
  BCS}(\Gamma_\Delta)-\F^{\rm BCS}(\Gamma_0)$ only the diagonal terms
of (\ref{tci2}) enter, however.

The first term on the right side of (\ref{fund:up}) was calculated in
Theorem~\ref{thm:scl} above. Since $\beta = \beta_c + O(h^2)$, we can
replace $\beta$ by $\beta_c$ in all the terms of order $h^4$, yielding
an error of order $h^6$. For the term of order $h^2$, we obtain
\begin{align}\nn
  & {h^2 \beta} \int_{\R^d} t(q)^2 \, g_0(\beta(q^2-\mu))\, dq \\ \nn
  & = {h^2 \beta_c} \int_{\R^d} t(q)^2 \, g_0(\beta_c(q^2-\mu))\, dq
  \\ & \quad + {h^4 \beta_c} D \int_{\R^d} t(q)^2 \left(
    g_0(\beta_c(q^2-\mu)) - \beta_c(q^2-\mu)
    g_1(\beta_c(q^2-\mu))\right) \, dq + O(h^6)\,,
\end{align}
where $D = (T_c - T)/(h^2 T_c)$.  The term in the last line of order
$h^4$ equals
\begin{equation}
  \frac {h^4 \beta_c}{2} D   \int_{\R^d} t(q)^2 \cosh^{-2}(\tfrac 12\beta_c(q^2-\mu))
  \, dq
  =   \frac {h^4 \beta_c^2}{2}   \int_{\R^d} t(q)^4  \,\frac{g_1(\beta_c(q^2-\mu))}{q^2-\mu}\, dq
\end{equation}
according to our normalization (\ref{normal}).

The second term on the right side of (\ref{fund:up}) can be rewritten
as
\begin{align}\nonumber
  & - h^{2(1-d)} \int_{\R^d\times \calC} V(h^{-1}(x-y))\left|\tfrac 12
    (\psi(x)+\psi(y))\alpha_0(h^{-1}(x-y))\right|^2\, dx\,dy \\
  \nonumber & = - h^{2-d} \sum_{p\in (2\pi \Z)^d} |\widehat \psi(p)|^2
  \int_{\R^d} V(x) |\alpha_0(x)|^2 \cos^2(\tfrac h2 p\cdot x)\, dx \\
  & = \frac {h^{2-d} \beta_c }{16} \sum_{p} |\widehat \psi(p)|^2
  \int_{\R^d} t(q) \, g_0(\beta_c(q^2-\mu)) \, \left(2 t(q) + t(q-hp)
    + t(q+hp)\right) \, dq\,. \label{term2}
\end{align}
By writing
\begin{align}\nn
  & 2t(q) + t(q-hp) + t(q+hp) \\ & = 4 t(q) + {h^2} \left[ (p\cdot
    \nabla)^2 t\right]\! (q) + \frac{h^4}{6} \int_{-1}^1 \left[
    (p\cdot \nabla)^4t\right]\!(q+shp) (1-|s|)^3 \, ds
\end{align}
we observe that (\ref{term2}) equals
\begin{align}\nn
  & \frac {h^{2-d} \beta_c }{4} \| \psi\|_2^2 \int_{\R^d} t(q)^2 \, g_0(\beta_c(q^2-\mu))  \, dq\\
  & + \frac {h^{4-d} \beta_c }{16} \sum_{i,j=1}^d \langle \partial_i
  \psi| \partial_j\psi\rangle \int_{\R^d} t(q)
  \left[ \partial_i\partial_j t\right]\! (q) \, g_0(\beta_c(q^2-\mu))
  \, dq + O(h^{6-d})\,,
\end{align}
where the error term is bounded by
\begin{equation}
  C h^{6-d} \|\psi\|_{H^2}^2 \int_{\R^d} |V(x)| |\alpha_0(x)|^2 |x|^4 \, dx\,.
\end{equation}
The latter integral was shown to be finite in Proposition~\ref{prop1}.

If $V\leq 0$, the last term in (\ref{fund:up}) can be dropped for an
upper bound, but we do not need to make this assumption. Since $V$ is
relatively bounded with respect to the Laplacian, we can bound the
term by an appropriate $H^1$ norm. Recall the definition of the $H^1$
norm of a periodic operator in (\ref{def:h1}).  For general periodic
operators $O$, we have the bound
\begin{align}\nn
  & \left| \int_{\calC\times\R^d} V(h^{-1}(x-y))\left| O(x,y)
    \right|^2\,dx\,dy \right| \\ \nn& \leq \left\|
    (1-h^2\nabla^2)^{-1/2} V(h^{-1}(\,\cdot\, -y)) (1-h^2
    \nabla^2)^{-1/2} \right\| \, \|O\|_{H^1}^2 \\ & = \left\|
    (1-\nabla^2)^{-1/2} V(\,\cdot\,) (1- \nabla^2)^{-1/2} \right\| \,
  \|O\|_{H^1}^2\,.
\end{align}
The operator in question can be written as
\begin{equation}
  O  = \alpha_\Delta  - \tfrac h 2 \left ( \psi(x) \widehat\alpha_0(-ih\nabla) + \widehat\alpha_0(-ih\nabla) \psi(x)\right) \,.
\end{equation} 
Recall the definition of $\varphi$ in (\ref{def:varphi}). It equals
$\widehat \alpha_0$ for $\beta = \beta_c$. Hence
\begin{align}\nn
  O & = \alpha_\Delta - \tfrac h 2 \left ( \psi(x) \varphi(-ih\nabla)
    + \varphi(-ih\nabla) \psi(x)\right) \\ & \quad + \tfrac h 4 \left(
    \psi(x) \eta(-ih\nabla) + \eta(-ih\nabla)\psi(x)
  \right)\,, \label{new:term}
\end{align}
with
\begin{equation}
  \eta(q) =  \left( \beta g_0(\beta(q^2 -\mu)) - \beta_c g_0(\beta_c(q^2-\mu)) \right) t(q)\,.
\end{equation}
For the term in the first line of~(\ref{new:term}), we can apply
Theorem~\ref{lem3} to bound its $H^1$ norm by $C h^{3-d/2}
\|\psi\|_{H^2}$. The $H^1$ norm of the term in the second line of
(\ref{new:term}) can be bounded by
\begin{equation} \label{518} C h^{1-d/2} \|\psi\|_{H^1} \left(
    \int_{\R^d} |\eta(q)|^2 (1+q^2) \, dq \right)^{1/2}\,.
\end{equation}
It is easy to see that $|\eta(q)| \leq C(\beta-\beta_c)
|t(q)|/(1+q^2)$, hence (\ref{518}) is bounded by
$Ch^{3-d/2}\|\psi\|_{H^1}$.

For $\psi$, we shall take a minimizer of the GL functional
(\ref{GLfunct}). Under our Assumption~\ref{as1} on $W$ and $A$, it is
easily seen to be in $H^2$. For this choice of $\psi$, we thus have
\begin{equation}
  F^{\rm BCS}(T_c(1-h^2D),\mu) \leq \F^{\rm BCS}(\Gamma_\Delta) - F^{\rm BCS}(\Gamma_0) \leq  h^{4-d} \left( E^{\rm GL} - B_3 + C h \right)
\end{equation}
for small $h$.  This completes the proof of the upper bound.

\section{Proof of Theorem~\ref{thm:main}: Lower Bound, Part
  A}\label{sec:low1}

Our proof of the lower bound on $F^{\rm BCS}(T,\mu)$ in
Theorem~\ref{thm:main} consists of two main parts.  The goal of this
first part is to show the following.  Let again $\Gamma_0$ denote the
normal state defined in (\ref{def:gamma0}), which is the minimizer of
$\F^{\rm BCS}$ in the non-interacting case $V=0$.  We claim that for
any state $\Gamma$ satisfying $\F^{\rm BCS}(\Gamma)\leq \F^{\rm
  BCS}(\Gamma_0)$, we can decompose its off-diagonal part $\alpha$ as
\begin{equation}\label{defpsinew}
  \alpha = \tfrac h 2 \big( \psi(x) \widehat \alpha_0(-ih\nabla) + \widehat \alpha_0(-ih\nabla) \psi(x)\big)  + \xi
\end{equation}
for some periodic function $\psi$ with $H^1(\calC)$ norm bounded
independent of $h$, and with $\|\xi\|_{H^1} \leq O(h^{2-d/2})$, where
we use again the definition (\ref{def:h1}) for the $H^1$ norm of a
periodic operator. This latter bound has to be compared with the $H^1$
norm of the first part of (\ref{defpsinew}), which is $O(h^{1-d/2})$
(for fixed $\psi\neq 0$.)

The remainder of this section contains the proof of
(\ref{defpsinew}). It is divided into three steps.
 
\subsection{Step 1}
 
We claim that for any state $\Gamma$ of the form (\ref{def:gamma})
satisfying $\F^{\rm BCS}(\Gamma)\leq \F^{\rm BCS}(\Gamma_0)$, we have
that
\begin{equation}\label{eq:step1}
  \frac {4T} 5\, \Tr ( \alpha\bar\alpha )^2 + \int_{\calC} \langle \alpha(\,\cdot\,,y) | K_T^{A,W} + V(h^{-1}(\,\cdot\,-y)) | \alpha(\,\cdot\,,y)\rangle \, {dy}\leq 0\,.
\end{equation}
Here, $K_T^{A,W}$ denotes the operator
\begin{equation}\label{def:ktaw}
  K_T^{A,W} =  \frac {\left(-i h \nabla + h A(x) \right)^2 -\mu + h^2 W(x)}{ \tanh\left( \tfrac \beta 2 \left( \left(-i h \nabla + h A(x) \right)^2 -\mu + h^2 W(x)\right) \right)}\,,
\end{equation}
with $\beta = 1/T$.  In (\ref{eq:step1}), it acts on the $x$ variable
of $\alpha(x,y)$, and $\langle\, \cdot \, | \, \cdot \,\rangle$
denotes the standard inner product on $L^2(\R^d)$. Note that
$K_T^{0,0}$ differs from the operator $K_T$ defined in (\ref{def:kt})
by a scaling by $h$. For $T=T_c$, the ground state of $K_{T_c}^{0,0} +
V(h^{-1}(\,\cdot\,-y)$ equals $h^{-d/2} \alpha_0(h^{-1}(x-y))$ (up to
an $h$-independent normalization).

Using that $\left(-i h \nabla + h A(x) \right)^2 -\mu + h^2 W(x) = T
\ln ( (1-\gamma_0)/\gamma_0)$, we may write, for any state $\Gamma$,
\begin{equation}
  \F^{\rm BCS}(\Gamma) - \F^{\rm BCS}(\Gamma_0) = \tfrac 12 T\, \H(\Gamma,\Gamma_0) + \int_{\calC\times\R^d} V(h^{-1}(x-y)) |\alpha(x,y)|^2\, {dx \, dy}\,,
\end{equation}
where $\H$ denotes the relative entropy
\begin{equation}\label{relent}
  \H(\Gamma,\Gamma_0)= \Tr\left[ \Gamma \left( \ln\Gamma - \ln\Gamma_0\right) + 
    (1-\Gamma) \left( \ln(1-\Gamma )- \ln (1-\Gamma_0)\right)\right]\,.
\end{equation}
We have the following lower bound.

\begin{Lemma}\label{lem:klein}
  For any $0\leq \Gamma\leq 1$ and any $\Gamma_0$ of the form
  $\Gamma_0 = (1+e^{H})^{-1}$,
  \begin{equation}\label{eq:lem:klein}
    \H(\Gamma,\Gamma_0) \geq  \Tr\left[ \frac {H}{\tanh (H/2)} \left( \Gamma - \Gamma_0\right)^2\right]  + \frac 43 \Tr\left[ \Gamma(1-\Gamma) - \Gamma_0(1-\Gamma_0)\right]^2\,.
  \end{equation}
\end{Lemma}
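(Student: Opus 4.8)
The plan is to reduce the operator inequality to a scalar (two-variable) inequality and then invoke Klein's inequality (\ref{klein}), which was recalled in Section~\ref{sec:prel} precisely for this purpose. Write $\Gamma_0 = (1+e^H)^{-1}$, so that $H = \ln((1-\Gamma_0)/\Gamma_0)$, and set $h(x) = \ln((1-x)/x)$ for $x\in(0,1)$, the scalar function such that $H = h(\Gamma_0)$. Expanding the relative entropy functional $\H(\Gamma,\Gamma_0)$ in (\ref{relent}), using $\ln\Gamma_0 = -\ln(1+e^H)$ and $\ln(1-\Gamma_0) = -\ln(1+e^{-H})$, one finds
\begin{equation}
\H(\Gamma,\Gamma_0) = \Tr\big[ s(\Gamma) - s(\Gamma_0) + h(\Gamma_0)(\Gamma-\Gamma_0)\big]\,,
\end{equation}
where $s(x) = x\ln x + (1-x)\ln(1-x)$ is (minus) the entropy density and we used $s'(\Gamma_0) = h(\Gamma_0) = H$. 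Thus $\H(\Gamma,\Gamma_0) = \Tr\, F(\Gamma,\Gamma_0)$ with $F(x,y) := s(x) - s(y) - s'(y)(x-y)\geq 0$, the Bregman divergence of $s$. The claim (\ref{eq:lem:klein}) is then equivalent to
\begin{equation}
\Tr\Big[ F(\Gamma,\Gamma_0) - \tfrac{h(\Gamma_0)}{\tanh(h(\Gamma_0)/2)}(\Gamma-\Gamma_0)^2 - \tfrac43\big(\Gamma(1-\Gamma)-\Gamma_0(1-\Gamma_0)\big)^2\Big] \geq 0\,.
\end{equation}

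The key step is to verify that the two-variable function
\begin{equation}
\Phi(x,y) := F(x,y) - \psi(y)(x-y)^2 - \tfrac43\big(x(1-x)-y(1-y)\big)^2
\end{equation}
is pointwise nonnegative on $(0,1)\times(0,1)$, where $\psi(y) := h(y)/\tanh(h(y)/2)$, and that it is of the form $\sum_i g_i(x)k_i(y)$ (a finite sum of products), so that Klein's inequality applies to give $\Tr\Phi(\Gamma,\Gamma_0)\geq 0$. Note $x(1-x)-y(1-y) = (x-y) - (x^2-y^2) = (x-y)(1-x-y)$, so the last term is $\tfrac43(x-y)^2(1-x-y)^2$; hence $\Phi(x,y) = F(x,y) - (x-y)^2\big[\psi(y) + \tfrac43(1-x-y)^2\big]$, and the separated-variable structure is manifest once $F$ is expanded (each of $s(x)$, $s(y)$, $s'(y)x$, $s'(y)y$, $x^2\psi(y)$, $xy\psi(y)$, $y^2\psi(y)$, and the polynomial pieces from the last term is a product of a function of $x$ and a function of $y$). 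To prove pointwise nonnegativity I would fix $y$ and study $g_y(x) := \Phi(x,y)$: we have $g_y(y) = 0$ and $g_y'(y) = 0$ (since $F$ and both correction terms vanish to second order at $x=y$), so it suffices to show $g_y''(x)\geq 0$ for all $x\in(0,1)$. Here $g_y''(x) = s''(x) - 2\psi(y) - \tfrac43\,\partial_x^2\big[(x-y)^2(1-x-y)^2\big]$, with $s''(x) = 1/(x(1-x))$. The second derivative of the quartic is a concrete polynomial in $x$, bounded above by a constant on $[0,1]$ (a short computation gives $\partial_x^2[(x-y)^2(1-x-y)^2] \leq \tfrac14$ uniformly, so the last term is at most $\tfrac13$), and one checks $\psi(y)\leq 1/(2y(1-y)) - $ (some positive slack), using that $h(y)/\tanh(h(y)/2)$ is comparable to $1/(y(1-y))$; more precisely the elementary inequality $\frac{h(y)}{\tanh(h(y)/2)} \leq \frac{1}{2y(1-y)}$ combined with $s''(x)\geq 4$ would not immediately close it, so the real content is the sharper bound balancing $s''(x) - 2\psi(y)$ against $\tfrac13$. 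I would therefore record the precise scalar estimate
\begin{equation}
\frac{1}{x(1-x)} - \frac{2h(y)}{\tanh(h(y)/2)} \geq \tfrac13 \qquad (x,y\in(0,1))
\end{equation}
as the crux — this is false as stated near $x\to 0$, which means one must instead keep the $x$-dependence of the quartic correction and verify the inequality jointly in $(x,y)$, exploiting that when $x$ is near $0$ or $1$ the term $s''(x)$ blows up faster than $2\psi(y)$ can.

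The main obstacle is thus this final two-variable pointwise inequality: it is elementary but delicate because all three terms degenerate near the boundary of the unit square, and the constant $4/3$ (equivalently the coefficient $4/5$ that appears after combining with the $\int V|\alpha|^2$ and using $\Gamma(1-\Gamma)\geq$ the square of the off-diagonal block, as in (\ref{eq:step1})) is presumably chosen to be exactly what this inequality allows. My approach would be to substitute $z = h(y)$ (so $y = 1/(1+e^z)$, $y(1-y) = e^z/(1+e^z)^2$) to make $\psi(y) = z/\tanh(z/2) = z\coth(z/2)$ a clean function of $z\in\R$, expand $g_y''$ as a function of $(x,z)$, and verify nonnegativity of the minimum over $x\in(0,1)$ for each $z$ — by symmetry $z\leftrightarrow -z$ one may take $z\geq 0$, and for large $z$ the exponentially small factor $y(1-y)$ controls everything while $x$ near $0$ is where $s''(x)$ dominates, so a careful case split ($x$ near $0$, $x$ near $1$, $x$ in a compact middle range) should suffice. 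I do not expect any functional-analytic subtlety beyond the validity of Klein's inequality in the trace-per-unit-volume setting, which is already established in Section~\ref{sec:prel}.
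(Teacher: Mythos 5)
Your reduction is exactly the paper's: rewrite $\H(\Gamma,\Gamma_0)$ as $\Tr F(\Gamma,\Gamma_0)$ with $F(x,y)=s(x)-s(y)-s'(y)(x-y)$, note that with $y=(1+e^{H})^{-1}$ one has $H/\tanh(H/2)=\ln\bigl(\tfrac{1-y}{y}\bigr)/(1-2y)$, and then deduce the operator statement from Klein's inequality \eqref{klein} once the pointwise two-variable inequality $F(x,y)\ge \psi(y)(x-y)^2+\tfrac43\,(x-y)^2(1-x-y)^2$ on $(0,1)^2$ is established. The paper's proof consists of precisely these two steps, stating the scalar inequality as ``elementary (but tedious)'' without detail, so up to this point you have reproduced the intended argument, including the separated-variable structure needed for Klein.

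The gap is in your treatment of the scalar inequality itself. The plan ``$g_y(y)=g_y'(y)=0$, so it suffices to show $g_y''(x)\ge 0$'' cannot work: $\partial_x^2\bigl[(x-y)^2(1-x-y)^2\bigr]=2(1-2x)^2-4(x-y)(1-x-y)$, whose maximum on the unit square is $3$ (at $x\in\{0,1\}$, $y=\tfrac12$), not $\tfrac14$ as you claim; more importantly, for $x$ in the middle of $(0,1)$ and $y\to 0$ or $1$ one has $s''(x)$ bounded while $2\psi(y)=2H/\tanh(H/2)\to\infty$, so $g_y''(x)<0$ there and $g_y$ is genuinely non-convex in $x$. (Your diagnosis that the trouble is at $x\to0$ is backwards: there $s''(x)\to\infty$ and the inequality is easy; the delicate regime is $y$ near the boundary with $x$ away from it, where the left side grows only like $|x-y|\,|\ln y|$ while $\psi(y)(x-y)^2$ grows like $(x-y)^2|\ln y|$, and one must use $|x-y|<1$ rather than convexity.) Your fallback---a case split in $(x,z)$ with $z=h(y)$---is only announced, not carried out, and since the numerical claims feeding it are incorrect, the crux inequality, which is the entire analytic content of the lemma beyond Klein, remains unproven in the proposal. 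To close it you would need an argument that does not rely on convexity in $x$, e.g.\ writing $F(x,y)=\int_y^x\frac{x-t}{t(1-t)}\,dt$ and estimating this against the two correction terms directly, with separate treatment of the regimes $|x-y|$ small and $y$ near $\{0,1\}$.
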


A similar bound as (\ref{eq:lem:klein}), without the last positive
term, was used in \cite{HLS}.

\begin{proof}
  It is elementary (but tedious) to show that for real numbers
  $0<x,y<1$,
  \begin{equation}
    x \ln\frac xy + (1-x) \ln \frac{1-x}{1-y} \geq \frac { \ln \frac{1-y}{y} } {1-2y} (x-y)^2 + \frac 43  \left( x(1-x) - y(1-y) \right)^2\,.
  \end{equation}
  The result then follows from Klein's inequality \eqref{klein}.
\end{proof}

Note that for our $\Gamma_0$, $H$ equals $\beta H_0$, which is
diagonal as an operator-valued $2\times 2$ matrix. Hence also
$H_0/\tanh(\beta H_0/2)$ is diagonal. Its diagonal entries are
$\beta\, K_{T}^{A,W}$ and $\beta\, \overline{K}_T^{A,W}$,
respectively, where $K_T^{A,W}$ is given in (\ref{def:ktaw}) above.
Hence
\begin{equation}\label{kleincon}
  \Tr\left[ \frac {H_0}{\tanh \big(\tfrac \beta 2 H_0\big)} \left( \Gamma - \Gamma_0\right)^2 \right] 
  = 2\, \Tr \left[ K_T^{A,W} (\gamma-\gamma_0)^2\right] + 2 \, \Tr \left[ K_T^{A,W}  \alpha \bar \alpha\right] \,.
\end{equation}
Since $x/\tanh(x/2) \geq 2$, we can replace $K_T^{A,W}$ by $2 T$ for a
lower bound. We shall use this in the first term on the right side of
(\ref{kleincon}).

For the last term in (\ref{eq:lem:klein}), we use
\begin{equation}
  \Tr\left[ \Gamma(1-\Gamma) - \Gamma_0(1-\Gamma_0)\right]^2 \geq 2\, \Tr \left [ \gamma(1-\gamma)- \gamma_0(1-\gamma_0) - \alpha\bar\alpha\right]^2\,.
\end{equation}
We claim that
\begin{equation}
  2  \, \Tr (\gamma -\gamma_0)^2 + \frac {4 }{3}\, \Tr \left [ \gamma(1-\gamma)- \gamma_0(1-\gamma_0) - \alpha\bar\alpha\right]^2 \geq \frac 4 5\, \Tr ( \alpha\bar\alpha )^2\,.
\end{equation}
This follows easily from the triangle inequality
\begin{equation}
  \|\alpha\bar\alpha\|_2 \leq 
  \|  \gamma(1-\gamma)- \gamma_0(1-\gamma_0) - \alpha\bar\alpha \|_2 + \|\gamma(1-\gamma)-\gamma_0(1-\gamma_0)\|_2
\end{equation}
together with the fact that
\begin{equation}
  \|\gamma(1-\gamma)-\gamma_0(1-\gamma_0)\|_2 \leq \|\gamma-\gamma_0\|_2\,,
\end{equation}
which can be seen using Klein's inequality (\ref{klein}), for
instance.

This completes the proof of (\ref{eq:step1}).

\subsection{Step 2} Recall that $K_{T_c}^{0,0} + V(h^{-1}(\,\cdot\,))$
is non-negative and has a non-degenerate isolated eigenvalue
zero. Hence it will be convenient to replace $K_{T}^{A,W}$ by
$K_{T_c}^{0,0}$ in (\ref{eq:step1}). The following lemma quantifies
the effect of such a replacement.

\begin{Lemma} \label{lem:mon} For $T = T_c - O(h^2)$ and $h$ small
  enough,
  \begin{equation}
    K_T^{A,W} + V(h^{-1}(\,\cdot\,-y)) \geq \frac 18 \left( K_{T_c}^{0,0} + V(h^{-1}(\,\cdot\, -y) ) \right) - C h^2 
  \end{equation}
  for a constant $C>0$ depending only on $\|W\|_\infty$, $\|A\|_{C^1}$
  and $h^{-2}(T-T_c)$.
\end{Lemma}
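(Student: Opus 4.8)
The plan is to compare $K_T^{A,W}$ with $K_{T_c}^{0,0}$ in two steps: first pass from the magnetic operator with external potentials to the field-free operator $K_T^{0,0}$, and then use the monotonicity of $K_T$ in $T$ to replace $T$ by $T_c$. For the first step, I would write $K_T^{A,W} = m_T(\pi_A^2)$ and $K_T^{0,0} = m_T(-h^2\nabla^2)$ with the same function $m_T(E) = (E-\mu)/\tanh(\tfrac{\beta}{2}(E-\mu))$, where $\pi_A = -ih\nabla + hA(x)$. The function $m_T$ is smooth, bounded below by $2T$, and behaves like $|E-\mu|$ for large $E$; crucially $m_T'$ is bounded. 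Hence I expect an operator inequality of the form $K_T^{A,W} \geq K_T^{0,0} - C\,\big| \pi_A^2 - (-h^2\nabla^2) \big|$ (after controlling the operator-monotonicity defect via, e.g., the integral representation of $m_T$ or a resolvent/Duhamel expansion), and $\pi_A^2 - (-h^2\nabla^2) = h^2(-i\nabla\cdot A - iA\cdot\nabla + A^2) + h^2 W$, which is relatively form-bounded with respect to $-h^2\nabla^2$ with a small relative bound and an $O(h^2)$ additive constant because $A\in C^1$ and $W$ is bounded (Assumption~\ref{as0}). This yields $K_T^{A,W} \geq (1-\delta)K_T^{0,0} - Ch^2$ for any fixed small $\delta$.

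For the second step I use that $K_T$ is monotone increasing in $T$ (stated after (\ref{def:kt})), so $K_T^{0,0} \geq K_{T_c}^{0,0} - (K_{T_c}^{0,0} - K_T^{0,0})$, and since $T = T_c - O(h^2)$ and $\partial_T K_T$ is a bounded operator uniformly for $T$ near $T_c$ (again because the relevant multiplier function and its $T$-derivative are bounded on $\R$), the correction $K_{T_c}^{0,0} - K_T^{0,0}$ is bounded in operator norm by $Ch^2$. Combining, $K_T^{A,W} \geq (1-\delta) K_{T_c}^{0,0} - Ch^2$. Adding $V(h^{-1}(\cdot - y))$ to both sides, it remains to absorb the loss of the factor $(1-\delta)$: write
\begin{equation}
  (1-\delta)K_{T_c}^{0,0} + V = \tfrac18\big(K_{T_c}^{0,0}+V\big) + \big(\tfrac78-\delta\big)K_{T_c}^{0,0} + \tfrac78 V\,.
\end{equation}
Since $V$ is relatively form-bounded with respect to the Laplacian with arbitrarily small relative bound (the $L^p$ assumption on $V$ in Assumption~\ref{as0}, transported to the $h$-scaled operator), we have $\tfrac78 V \geq -\epsilon K_{T_c}^{0,0} - C_\epsilon$, and choosing $\epsilon,\delta$ small so that $\tfrac78 - \delta - \epsilon \geq 0$ gives $(1-\delta)K_{T_c}^{0,0}+V \geq \tfrac18(K_{T_c}^{0,0}+V) - C_\epsilon$. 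The constant $C_\epsilon$ is $h$-independent, and scaling $x\mapsto hx$ shows it does not pick up negative powers of $h$; tracking all constants, the total additive error is $Ch^2$ with $C$ depending only on $\|W\|_\infty$, $\|A\|_{C^1}$, and $h^{-2}(T-T_c)$, as claimed.

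The main obstacle is the first step: controlling the failure of operator monotonicity of $E \mapsto m_T(E)$ when the argument $-h^2\nabla^2$ is replaced by the non-commuting perturbation $\pi_A^2$. The clean way is to use a representation $m_T(E) = aE + b + \int (\text{something}) \, d\nu$ that exhibits $m_T$ as (affine) plus an operator-monotone or operator-convex piece, or alternatively to expand $m_T(\pi_A^2) - m_T(-h^2\nabla^2)$ via the resolvent identity and estimate each term using that $[\pi_A^2, \cdot]$ and $\pi_A^2 - (-h^2\nabla^2)$ are both $O(h^2)$ relative to $-h^2\nabla^2$; the boundedness of $m_T'$ and $m_T''$ away from problems (which holds because $\tanh$ stays away from zero thanks to $m_T \geq 2T > 0$) then delivers the bound. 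Everything else is routine form-boundedness bookkeeping and the already-cited monotonicity of $K_T$ in $T$.
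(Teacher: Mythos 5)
Your reduction to global form-boundedness of $\pi_A^2-(-h^2\nabla^2)$ cannot produce the additive error $-Ch^2$ that the lemma asserts, and this is where the proposal breaks down, independently of how you settle the operator-monotonicity defect (which the paper handles simply by noting that $x/\tanh(x/2)$ is operator monotone as a function of $x^2$ and working with squares). The cross term $-ih^2(\nabla\cdot A+A\cdot\nabla)$ obeys only $\pm(\,\cdot\,)\leq \delta(-h^2\nabla^2)+C\delta^{-1}h^2$, so any route of your type must eventually sacrifice a $\delta$-fraction of $K_{T_c}^{0,0}$ to absorb the relative piece and then trade part of $K_{T_c}^{0,0}$ against $V$. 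Since $K_{T_c}^{0,0}+cV(h^{-1}(\,\cdot\,-y))$ is bounded below only by an $h$-independent, $O(1)$ constant (the scaling $x\mapsto hx$ leaves the form-bound constant of $V(h^{-1}\,\cdot\,)$ relative to $-h^2\nabla^2$ unchanged; it does not make it $O(h^2)$), the coefficient multiplying that sacrificed piece must itself be $O(h^2)$ to keep the additive error at $O(h^2)$ --- this is exactly how the paper uses its identity \eqref{kay}, with $\epsilon=O(h^2)$. But if you take $\delta=O(h^2)$ the Cauchy--Schwarz constant $C\delta^{-1}h^2$ becomes $O(1)$, while with $\delta$ fixed your last step leaves an $h$-independent $C_\epsilon$; either way you end with $-C$ (or, after optimizing, at best $-Ch$), not $-Ch^2$. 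The final sentence of your proposal, asserting that ``tracking all constants'' gives $Ch^2$, is therefore unjustified and in fact false; and an $O(1)$ or $O(h)$ error is useless for the application in Section~\ref{sec:low1}, where the $Ch^2\|\alpha\|_2^2$ on the right of \eqref{eq1} is precisely what yields $\|\xi_0\|_2\leq O(h)\|\alpha\|_2$.

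The paper escapes this obstruction with ingredients your proposal omits, and they are what rescue the $h^2$: (i) the vector potential is frozen at the point $y$, so the effective perturbation is $\widetilde A(x)=A(x)-A(y)$, which by Lipschitz continuity satisfies $|\widetilde A(x)|\leq C|x-y|$; the constant part $A(y)$ is removed afterwards using gauge invariance, the definition of $T_c$ (so that $K_{T_c}^{2A_y,0}+V\geq 0$) and the bounded Hessian of the symbol; (ii) with $\epsilon=O(h^2)$ the residual error $E$ is only $O(1)$ relative to $1+\widetilde p^2$, and it is controlled by splitting $E\leq 2PEP+2P^cEP^c$ with $P$ the projection onto the ground state $\alpha_0(h^{-1}(\,\cdot\,-y))$: the $P^c$ part is absorbed into a fraction of $K_{T_c}^{0,0}+V$ using the spectral gap, while $PEP=O(h^2)$ because the ground state is localized on scale $h$ around $y$ and $\int|x|^2\bigl(|\alpha_0|^2+|\nabla\alpha_0|^2\bigr)dx<\infty$ (Proposition~\ref{prop:reg}); (iii) $V$ is absorbed through $\epsilon\bigl(K_{T_c}^{A_y,0}+3V\bigr)\geq -C\epsilon$ with $\epsilon=O(h^2)$, not through a fixed-$\epsilon$ form bound. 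Your monotonicity-in-$T$ step is fine, but without the gauge-freezing at $y$, the ground-state projection argument and the $O(h^2)$-weighted absorption of $V$, the claimed $-Ch^2$ is out of reach.
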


The proof shows that the prefactor $1/8$ can be replaced by any number
less than one, at the expense of an increase in the constant $C$.

\begin{proof}
  We start by noting that \cite[(4.3.91)]{AS}
  \begin{equation}\label{expa}
    \frac{ x}{\tanh(x/2)} = 2 +  \frac 12 \sum_{k=1}^\infty  \frac{x^2}{x^2/4 + k^2 \pi^2}=
    2 + \sum_{k=1}^\infty \left( 2- \frac{2 k^2 \pi^2}{x^2/4 + k^2 \pi^2}\right)\,.
  \end{equation}
  In particular, this function is operator monotone as a function of
  $x^2$. Let $\widetilde p = -ih\nabla + h A(y)$ and $\widetilde A(x)
  = A(x) - A(y)$. Since the ground state of the operator
  $K_{T_c}^{0,0}+ V(h^{-1}(\,\cdot\,-y))$ is localized within a
  distance $O(h)$ of $y$, $\widetilde A(x)$ is, as far as the ground
  state is concerned, effectively a perturbation of order $h$ since
  $A$ is assumed to be Lipschitz continuous.  Using Schwarz's
  inequality,
  \begin{align}\nn
    & \left[ \left( \widetilde p + h \widetilde A(x) \right)^2 -\mu +
      h^2 W(x)\right]^2 \\ \nn & \geq (1-\epsilon) \left[ \widetilde
      p^2 + h\, \widetilde p\cdot \widetilde A(x) + h\, \widetilde
      A(x)\cdot \widetilde p -\mu \right]^2 - \frac {h^4}{\epsilon}
    \|W + \widetilde A^2\|_\infty^2 \\ & \geq (1-\epsilon)^2 \left[
      \widetilde p^2-\mu\right]^2 - \frac {h^4}{\epsilon} \|W +
    \widetilde A^2\|_\infty^2 - \frac{h^2}{\epsilon} \left[ \widetilde
      p\cdot \widetilde A(x) + \widetilde A(x)\cdot \widetilde
      p\right]^2\,. \label{sss}
  \end{align}
  We write the right side as $\beta^{-2} (R-Q)$, where
  \begin{equation}
    Q = \frac{h^2\beta^2}\epsilon \left[  \widetilde p\cdot \widetilde A(x) +  \widetilde A(x)\cdot \widetilde p\right]^2
  \end{equation}
  and
  \begin{equation}
    R = \beta^2  (1-\epsilon)^2 \left[ \widetilde p^2-\mu\right]^2 - \frac {h^4 \beta^2}{\epsilon} \|W + \widetilde A^2\|_\infty^2 \,.
  \end{equation}

  As long as $R-Q + 4\pi^2>0$, we can use the operator monotonicity of
  (\ref{expa}) to obtain a lower bound.  This condition is certainly
  satisfied for small enough values of $\epsilon$, $h^2/\epsilon$ and
  $h^4/\epsilon$. Hence we can use the resolvent identity to conclude
  that
  \begin{align}\nn
    \beta\, K_T^{A,W} & \geq 2 + \frac 12
    \sum_{k=1}^\infty\frac{R}{k^2\pi^2 + R/4} \\ \nn & \quad -\frac 12
    \sum_{k=1}^\infty \frac{k^2 \pi^2}{k^2 \pi^2 + R/4}Q \frac{1}{k^2
      \pi^2 + R/4} \\ & \quad - \frac 18\sum_{k=1}^\infty \frac{k^2
      \pi^2}{k^2 \pi^2 + R/4} Q \frac{1}{k^2 \pi^2 + (R - Q)/4} Q
    \frac{1}{k^2 \pi^2 + R/4} \,.
  \end{align}
  Since also $R- Q > Q -4\pi^2$ for $\epsilon$, $h^2/\epsilon$ and
  $h^4/\epsilon$ small enough, we can bound
  \begin{equation}
    Q  \frac{1}{k^2 \pi^2 + (R - Q)/4}   Q \leq  4 Q 
  \end{equation}
  for all $k\geq 1$. We thus obtain the lower bound
  \begin{equation}
    K_T^{A,W} \geq \frac 1 \beta\left( 2 + \frac 12 \sum_{k=1}^\infty\frac{R}{k^2\pi^2 + R/4} \right) -   \frac 1 \beta \sum_{k=1}^\infty \frac{k^2 \pi^2}{k^2 \pi^2 + R/4}Q \frac{1}{k^2 \pi^2 + R/4}  \,. \label{3t} 
  \end{equation}

  We start with deriving a lower bound on the first term on the right
  side of (\ref{3t}), and defer the discussion of the second term to
  (\ref{bound:E}) {\it et seq}.  For $h^4/\epsilon$ small enough, the
  first term on the right side of (\ref{3t}) is bounded from below by
  \begin{equation}
    (1-\epsilon)^2\beta K_T^{A_y,0} -  \frac{h^4}{\epsilon}\beta^2\|W+\widetilde A^2\|_\infty^2 \frac{\pi^2}{12}\frac 1{\pi^2 - h^4 \epsilon^{-1}\beta^2 \|W+\widetilde A^2\|_\infty^2} \,,
  \end{equation}
  where we denote by $A_y$ the constant vector potential
  $A(y)$. Moreover, it is elementary to show that
  \begin{equation}
    K_T^{A_y,0} \geq K_{T_c}^{A_y,0} - 2 (T_c-T)
  \end{equation}
  for $T\leq T_c$. We further have
  \begin{align}\nn
    (1-2 \epsilon) K_{T_c}^{A_y,0} + V(h^{-1}(\,\cdot\,-y)) & = (1 - 3
    \epsilon ) \left( K_{T_c}^{A_y,0} + V(h^{-1}(\,\cdot\,-y)) \right)
    \\ \nn & \quad + \epsilon \left( K_{T_c}^{A_y,0} + 3
      V(h^{-1}(\,\cdot\,-y)) \right) \\ & \geq (1 - 3 \epsilon )
    \left( K_{T_c}^{A_y,0} + V(h^{-1}(\,\cdot\,-y)) \right) - C
    \epsilon\,, \label{kay}
  \end{align}
  where $-C$ is the ground state energy of $K_{T_c}^{A_y,0}+ 3V$,
  which is bounded by our assumptions on $V$. It is also independent
  of $A_y$ since $A_y$ may be replaced by zero by a unitary (gauge)
  transformation.

  We now want to get rid of the constant vector potential $A(y)$ in
  $K_{T_c}^{A_y,0}$ on the right side of (\ref{kay}).  We claim that
  \begin{equation}
    K_{T_c}^{A_y,0} + V(h^{-1}(\,\cdot\, -y) ) \geq  \frac 12 \left( K_{T_c}^{0,0} + V(h^{-1}(\,\cdot\, -y) ) \right) - C h^2 A(y)^2 \,.
  \end{equation}
  This follows from the fact that
  \begin{equation}
    K_{T_c}^{A_y,0} \geq  \frac 12 K_{T_c}^{0,0}  + \frac 12 K_{T_c}^{2A_y,0} - C h^2 A(y)^2
  \end{equation}
  (since the function $p \mapsto (p^2-\mu)/\tanh[\frac \beta 2
  (p^2-\mu)]$ has a bounded Hessian) and
  \begin{equation}
    K_{T_c}^{2 A_y,0} + V(h^{-1}(\,\cdot\, -y) )  \geq 0
  \end{equation}
  by the definition of $T_c$.

  For $3\epsilon \leq \tfrac 12$ and $T_c - T = O(h^2)$, we conclude
  that
  \begin{align}\nn
    K_T^{A,W} + V(h^{-1}(\,\cdot\, -y) ) & \geq \frac 14 \left(
      K_{T_c}^{0,0} + V(h^{-1}(\,\cdot\, -y) )\right) \\ & \quad - C
    \left(h^2+\epsilon + h^4 \epsilon^{-1}\right) - E
  \end{align}
  where $E$ denotes the last term in (\ref{3t}). Let $P$ denote the
  projection onto the ground state of
  $K_{T_c}^{0,0}+V(h^{-1}(\,\cdot\, -y) )$, given by the function
  $\alpha_0(h^{-1}(\,\cdot\,-y))$. Let also $P^c = 1- P$. Since $E$ is
  positive, it follows from Schwarz's inequality that
  \begin{equation}\label{bound:E}
    E \leq 2 P E P + 2 P^c E P^c \,.
  \end{equation}
  From the assumption on $A\in C^1$ it follows easily that $0\leq Q
  \leq C h^2 \epsilon^{-1} (1+\widetilde p^2)$. This immediately
  implies that $E\leq C h^2 \epsilon^{-1} (1+\widetilde p^2)\leq C
  h^2\epsilon^{-1} (1-h^2\nabla^2)$ (since $A$ is bounded).  We shall
  choose $\epsilon = O(h^2)$, with $h^2/\epsilon$ small enough to
  ensure that
  \begin{equation}
    \frac 18 \left( K_{T_c}^{0,0} + V(h^{-1}(\,\cdot\, -y) )\right) - 2 P^c E P^c  \geq 0\,. 
  \end{equation}
  That this can be done follows from the above bound on $E$ and the
  fact that $ K_{T_c}^{0,0}+ V(h^{-1}(\,\cdot\, -y) \geq \nu
  P^c(1-h^2\nabla^2)P^c$ for some $\nu>0$.

  It remains to show that $P E P \leq O(h^2)$. Since $P$ has rank one,
  we can write
  \begin{equation}
    P EP = C h^{-d} \left\langle \alpha_0( h^{-1} ( \, \cdot\, - y) \left| E \right|\alpha_0(h^{-1}(\, \cdot\, - y) \right\rangle \, P 
  \end{equation}
  for some $h$-independent constant $C$ determined by the
  normalization of $\alpha_0$.  We have
  \begin{align}\nonumber
    Q & = \frac{h^2\beta^2}{\epsilon} \left( 2 \widetilde p\cdot
      \widetilde A(x) + ih \, {\rm div\,} A(x) \right) \left( 2
      \widetilde A(x) \cdot \widetilde p - i h\, {\rm div\, } A(x)
    \right) \\ & \leq \frac{h^2\beta^2}{\epsilon} \left( 8 \widetilde
      p\cdot \widetilde A(x) \, \widetilde A(x) \cdot \widetilde p + 2
      h^2 \left( {\rm div\,} A(x)\right)^2 \right)\,. \label{boundq}
  \end{align}
  Note that the last term is bounded by our assumption $A\in C^1$.
  From the Lipschitz continuity of $A$ it follows that $|\widetilde
  A(x)| \leq C |x-y|$, and hence
  \begin{equation}
    \widetilde p\cdot \widetilde A(x) \, \widetilde A(x) \cdot \widetilde p \leq C^2  \widetilde p \cdot |x-y|^2 \, \widetilde p\,.
  \end{equation}
  We thus have $P E P \leq C h^2 P \sum_{k} k^2 \lambda_k$ where
  \begin{align}\nn
    \lambda_k & = \frac{h^2\beta}{\epsilon} \sum_{i=1}^d \int_{\R^d}
    \left| \nabla_p \frac{p_i + h A_i(y)}{k^2\pi^2 + \tfrac 14 \beta^2
        (1-\epsilon)^2 \left[ (p + h A_y)^2 -\mu\right]^2 - \delta}
      \widehat \alpha_0(p)\right|^2 \, dp \\ & \quad +
    \frac{h^2\beta}{\epsilon}\left\| {\rm div\,} A \right\|_\infty^2
    \frac{ 1}{\left( k^2\pi^2 -\delta\right)^2} \,,
  \end{align}
  with $\delta = O(h^2)$ denoting $\delta = \tfrac 14 h^4 \beta
  \epsilon^{-1} \|W+\widetilde A\|_\infty^2$. For small enough $h$, we
  can thus bound
  \begin{equation}
    \lambda_k \leq C \frac{h^2}{\epsilon k^4} \int_{\R^d} \left(1+|x|^2\right) \left( \left|\nabla \alpha_0(x)\right|^2 + \left|\alpha_0(x)\right|^2 \right) dx \,.
  \end{equation}
  The latter integral is finite, as proved in
  Proposition~\ref{prop:reg} in Section~\ref{sec:alpha0}. Hence
  $\sum_k k^2 \lambda_k$ is bounded, uniformly in $h$ for small
  $h$. This completes the proof.
\end{proof}
 
\subsection{Step 3} In combination with Lemma~\ref{lem:mon}, we
conclude from (\ref{eq:step1}) that for $T= T_c - O(h^2)$,
\begin{equation}\label{eq1}
  \frac {4T} 5\, \Tr [ \alpha\bar\alpha ]^2 + \frac 18 \int_{\calC} \langle \alpha(\,\cdot\,,y) | K_{T_c}^{0,0} + V(h^{-1}(\,\cdot\,-y)) | \alpha(\,\cdot\,,y)\rangle \,  {dy}\leq C h^2 \|\alpha\|_2^2 
\end{equation}
for any state $\Gamma$ with $\F^{\rm BCS}(\Gamma)\leq \F^{\rm
  BCS}(\Gamma_0)$. We shall now show that this inequality implies
(\ref{defpsinew}).

Recall that the operator $ K_{T_c}^{0,0} + V(h^{-1}(\,\cdot\,-y))$ on
$L^2(\R^d)$ has a unique ground state, proportional to
$\alpha_0(h^{-1}(x-y))$, with ground state energy zero, and a gap
above. Normalize $\alpha_0$ as in (\ref{normal}), and let
\begin{equation}
  \psi(y) =  (2\pi)^{d/2} \left(h \int_{\R^d} |\alpha_0(x)|^2 \, dx  \right)^{-1} \int_{\R^d} \alpha_0(h^{-1}(x-y)) \alpha(x,y) \,dx\,.
\end{equation}
Note that $\psi$ is a periodic function.  If we write
\begin{equation}\label{dec1}
  \alpha(x,y)=\psi(y) \frac{ h^{1-d}}{(2\pi)^{d/2}} \alpha_0(h^{-1}(x-y))  + \xi_0(x,y)
\end{equation}
the gap in the spectrum of $K_{T_c}^{0,0} + V(h^{-1}(\,\cdot\,-y))$
above zero, together with (\ref{eq1}), yields $\|\xi_0\|_2 \leq
O(h)\|\alpha\|_2$. We can also symmetrize and write
\begin{equation}
  \alpha(x,y) = \tfrac 12 \left( \psi( x)+\psi(y)\right) \frac{h^{1-d}}{(2\pi)^{d/2}} \alpha_0(h^{-1}(x-y)) + \xi(x,y) \,,
\end{equation}
again with $\|\xi\|_2\leq O(h)\|\alpha\|_2$. In order to complete the
proof of (\ref{defpsinew}), we need to show that $\|\psi\|_{H^1}$ is
bounded independently of $h$, and that the $H^1$ norm of $\xi$ is
bounded by $O(h^{2-d/2})$.

An application of Schwarz's inequality yields
\begin{equation}
  \int_\calC | \psi(x)|^2\, dx \leq  (2\pi)^d h^{d-2} \frac{ \|\alpha\|_2^2}{ \int_{\R^d} |\alpha_0(x)|^2 dx } \,.
\end{equation}
Moreover,
\begin{equation}
  \|\alpha\|_2^2 \leq \frac{h^{2-d}}{(2\pi)^d} \int_\calC |\psi(x)|^2 dx \, \int_{\R^d} |\alpha_0(x)|^2 dx + \|\xi_0\|_2^2\,.
\end{equation}
Since $\|\xi_0\|_2\leq O(h) \|\alpha\|_2$, we see that
\begin{equation}\label{schw1}
  \|\alpha\|_2^2 \leq (1 + O(h^2)) \frac{ h^{2-d}}{(2\pi)^d} \int_\calC |\psi(x)|^2 dx\, \int_{\R^d} |\alpha_0(x)|^2 dx \,.
\end{equation}

Again by using Schwarz's inequality,
\begin{equation}\label{schw2}
  \int_\calC |\nabla \psi(x)|^2\, dx \leq  (2\pi)^d h^{d-2} \frac { \int_{\R^d\times \calC}  \left| \left(\nabla_x + \nabla_y\right) \alpha(x,y)\right|^2 \, dx\,dy }{ \int |\alpha_0(x)|^2dx } \,.
\end{equation}
In order to bound the latter expression, we use the following lemma.

\begin{Lemma}\label{lem:P} For some constant $C>0$,
  \begin{align}\nonumber
    & h^2 \int_{\R^d\times \calC} \left| \left(\nabla_x +
        \nabla_y\right) \alpha(x,y)\right|^2 \, dx\,dy \\ & \leq C
    \int_{\calC} \langle \alpha(\,\cdot\,,y) | K_{T_c}^{0,0} +
    V(h^{-1}(\,\cdot\,-y)) | \alpha(\,\cdot\,,y)\rangle \,
    dy \label{lem:eq:com}
  \end{align}
  for all periodic and symmetric $\alpha$ (i.e.,
  $\alpha(x,y)=\alpha(y,x)$).
\end{Lemma}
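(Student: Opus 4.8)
The plan is to reduce (\ref{lem:eq:com}) to an $h$-independent operator inequality for a two-body operator and then to analyse the latter by perturbation theory and compactness. Since $\alpha$ is symmetric and $V$ is reflection symmetric, the integral on the right side of (\ref{lem:eq:com}) is unchanged if $K_{T_c}^{0,0}$ is made to act on the second variable of $\alpha$ rather than the first; averaging the two representations shows it equals one half of the quadratic form of the operator $K_{T_c,x}^{0,0}+K_{T_c,y}^{0,0}+2V(h^{-1}(x-y))$ evaluated on $\alpha$ (here $K_{T_c,x}^{0,0}$, $K_{T_c,y}^{0,0}$ are the operator (\ref{def:ktaw}) with $A=W=0$, $T=T_c$, acting on the first, resp.\ second, variable), while the left side is $h^2$ times the quadratic form of $-(\nabla_x+\nabla_y)^2$ on $\alpha$. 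Rescaling to microscopic variables ($x\mapsto hx$, $y\mapsto hy$) turns $K_{T_c}^{0,0}$ into the $h=1$ operator $K_{T_c}$ and $V(h^{-1}(\cdot))$ into $V(\cdot)$; since the resulting inequality is then $h$-independent and holds on all of $L^2(\R^d\times\R^d)$ it restricts to the periodic subspaces, so it suffices to prove
$$-(\nabla_x+\nabla_y)^2 \ \leq\ C\left( K_{T_c}(-i\nabla_x) + K_{T_c}(-i\nabla_y) + 2V(x-y)\right)$$
on the subspace of $L^2(\R^d\times\R^d)$ of functions symmetric under $x\leftrightarrow y$, where $K_{T_c}(p)$ is obtained from (\ref{def:kt}) by replacing $-\nabla^2$ with $p^2$ at $T=T_c$.

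Passing to center-of-mass and relative coordinates and using that the right-hand operator commutes with translations of the center of mass, it is enough to show, for every fixed $P\in\R^d$, that
$$\tilde h_P\ :=\ K_{T_c}\!\big(\tfrac12 P + \pi\big) + K_{T_c}\!\big(\tfrac12 P - \pi\big) + 2V(r)\ \geq\ c\,|P|^2\qquad\text{on } L^2_{\rm sym}(\R^d)\,,$$
in the relative variable $r$, with $\pi=-i\nabla_r$ and $c>0$ independent of $P$. I would argue in three regimes. For $|P|$ large this is kinetic domination: from $K_{T_c}(p)\geq |p^2-\mu|\geq p^2-|\mu|$ and the fact that $V$, after subtracting an $L^\infty$ part, is form-small relative to $-\Delta$, one gets $\tilde h_P\geq\tfrac12|P|^2-C'$, which exceeds $\tfrac14|P|^2$ once $|P|$ is sufficiently large. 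For $|P|$ small I would use analytic perturbation theory: on $L^2_{\rm sym}$ the operator $\tilde h_0 = 2(K_{T_c}(\pi)+V(r))$ has the isolated simple eigenvalue $0$, with eigenfunction $\alpha_0$, and $\tilde h_P-\tilde h_0$ is a bounded operator depending analytically on $P$; hence the lowest eigenvalue $E(P)$ of $\tilde h_P|_{L^2_{\rm sym}}$ is analytic near $P=0$ with $E(0)=0$. Since $K_{T_c}$ is even, $\partial_P\tilde h_P|_{P=0}$ vanishes as a multiplication operator, so $\nabla E(0)=0$ and the usual second-order perturbation contribution drops out, leaving
$$\mathrm{Hess}\,E(0)_{jk}\ =\ \tfrac12\langle\alpha_0|(\partial_j\partial_k K_{T_c})(\pi)|\alpha_0\rangle\ =\ \re\,\langle\alpha_0|x_j(K_{T_c}+V)x_k|\alpha_0\rangle\,,$$
which is $(1/2)$ times the matrix whose non-negativity was used above to prove $\mathbb{B}_1>0$; it is in fact strictly positive definite, because $(v\cdot x)\alpha_0$ is not a multiple of $\alpha_0$ for $v\neq 0$ and hence not in the kernel of $K_{T_c}+V$. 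Thus $E(P)\geq c_1|P|^2$ for $|P|\leq\delta$.

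The main obstacle is the intermediate range $\delta\leq|P|\leq R$. Since $K_{T_c}\geq 2T_c$ and $V$ is relatively form-compact, the essential spectrum of $\tilde h_P$ is contained in $[4T_c,\infty)$, so a non-positive spectral value would have to be a discrete eigenvalue below it; as $P\mapsto\tilde h_P$ is continuous in the norm-resolvent sense, $P\mapsto\inf\mathrm{spec}\,\tilde h_P|_{L^2_{\rm sym}}$ is continuous, and by compactness it suffices to show that this infimum is strictly positive for each fixed $P\neq 0$ --- i.e.\ that at the critical temperature $T_c$ the operator $\tilde h_P$ admits no zero- or negative-energy state at nonzero center-of-mass momentum, which is the key spectral fact under our hypotheses. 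Granting it, one obtains $\tilde h_P\geq c_2>0$, hence $\tilde h_P\geq (c_2/R^2)|P|^2$, on $\delta\leq|P|\leq R$; combining the three regimes yields the operator inequality with $c=\min\{c_1,c_2/R^2,1/4\}$, and therefore (\ref{lem:eq:com}).
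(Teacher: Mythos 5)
Your reduction is essentially the paper's: after passing to the center-of-mass Fourier decomposition, everything rests on the fiber inequality $K_{T_c}(\pi+\tfrac12 P)+K_{T_c}(\pi-\tfrac12 P)+2V \geq c\,|P|^2$ on $L^2_{\rm sym}(\R^d)$ (the paper's unscaled version is the displayed bound $K_{T_c}^{\frac 12 p,0}+K_{T_c}^{-\frac 12 p,0}+2V(x/h)\geq \frac 2C h^2p^2$), and your large-$|P|$ and small-$|P|$ regimes are handled correctly --- in particular the perturbative identification of ${\rm Hess}\,E(0)$ with the double-commutator matrix used for the positivity of $\mathbb{B}_1$, and its strict positivity because $(v\cdot x)\alpha_0$ is not in the kernel of $K_{T_c}+V$, are sound. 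The genuine gap is exactly where you write ``granting it'': the statement that $\inf{\rm spec}\,\tilde h_P|_{L^2_{\rm sym}}>0$ for every fixed $P\neq 0$ is not a soft consequence of the hypotheses; it is the main content of the lemma away from $P=0$. It cannot be obtained from convexity of the symbol (for $\mu>0$ the function $q\mapsto K_{T_c}(q)$ has a local maximum at $q=0$ and minima on the sphere $q^2=\mu$, so midpoint convexity fails), and since the bottom of the spectrum is exactly $0$ at $P=0$ --- the attraction $2V$ is exactly compensated there --- one must genuinely prove that tilting the two kinetic symbols by $\pm P/2$ costs energy in the presence of $V$. Your compactness/norm-resolvent-continuity argument correctly reduces the intermediate regime to this pointwise fact, but the fact itself is left unproven, so the proof is incomplete as it stands.

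The paper closes precisely this hole with one short argument that covers the small and intermediate regimes simultaneously: writing $K_{T_c}^{\pm\frac12 P,0}+V=e^{\mp i x\cdot P/2}\big(K_{T_c}^{0,0}+V\big)e^{\pm i x\cdot P/2}$ and inserting the gap inequality $K_{T_c}^{0,0}+V\geq \kappa\big(1-|\varphi_0\rangle\langle\varphi_0|\big)$, one gets
\begin{equation*}
\tilde h_P \;\geq\; \kappa\Big[ e^{i x\cdot P/2}\big(1-|\varphi_0\rangle\langle\varphi_0|\big)e^{-i x\cdot P/2} + e^{-i x\cdot P/2}\big(1-|\varphi_0\rangle\langle\varphi_0|\big)e^{i x\cdot P/2}\Big]\,,
\end{equation*}
and the lowest eigenvalue of the operator in brackets is computable explicitly as $1-\big|\int_{\R^d}|\varphi_0(x)|^2 e^{-i x\cdot P}\,dx\big|$, which is strictly positive for all $P\neq 0$ and bounded below by $c\,|P|^2$ for small $|P|$ because $\int |x|^2|\varphi_0|^2\,dx<\infty$ (Proposition~\ref{prop:reg}). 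If you wish to keep your perturbative small-$P$ analysis, you must at least supply an argument of this kind for the intermediate momenta. A further minor point: the assertion that the $h$-independent inequality ``holds on all of $L^2(\R^d\times\R^d)$ and restricts to the periodic subspaces'' is loose, since periodic $\alpha$ are not in $L^2(\R^{2d})$; however, your subsequent fiberwise formulation for every total momentum $P$ is the correct statement and coincides with the paper's Fourier-series reduction, so this is only a matter of wording.
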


\begin{proof}
  By expanding $\alpha(x,y)$ in a Fourier series
  \begin{equation}
    \alpha(x,y) = \sum_{ p \in (2\pi \Z)^d} e^{i p\cdot (x+y)/2} \widetilde \alpha_p(x-y)
  \end{equation}
  and using that $\widetilde \alpha_p(x) = \widetilde \alpha_p(-x)$
  for all $p \in (2\pi \Z)^d$ we see that (\ref{lem:eq:com}) is
  equivalent to
  \begin{equation}
    K_{T_c}^{\frac 1 2 p,0}  + K_{T_c}^{-\frac 1 2 p,0} +2 \, V(x/h) \geq \frac 2 C h^2 p^2
  \end{equation}
  for all $p\in (2\pi \Z)^d$. This inequality holds for all $p\in
  \R^d$, in fact, for an appropriate choice of $C>0$, as we shall now
  show.

  Since $K_{T_c}^{\frac 12 p,0} \geq C (1+ h^2(-i\nabla + p/2)^2)$, it
  suffices to consider the case of $h p$ small. Let $\kappa$ denote
  the gap in the spectrum of $K^{0,0}_{T_c} + V(h^{-1}\,\cdot\,)$
  above zero, and $\varphi_0^h(x):=h^{-d/2}\varphi_0(x/h)$ its
  normalized ground state. (Note that $\varphi_0$ is normalized to
  one, and hence equals a constant times $\alpha_0$.) Then
  \begin{align}\nn
    & K_{T_c}^{\frac 12 p,0} + K_{T_c}^{-\frac 12 p,0} +2 \, V(x/h) \\
    \nn & \geq \kappa\left[ e^{i x \cdot p/2}\left( 1 -
        |\varphi_0^h\rangle\langle\varphi_0^h|\right) e^{-ix \cdot
        p/2}+ e^{-i x\cdot p/2}\left( 1 -
        |\varphi_0^h\rangle\langle\varphi_0^h|\right) e^{ix\cdot p/2}
    \right] \\ & \geq \kappa\left[ 1 - \left| \int_{\R^d}
        |\varphi_0(x)|^2 e^{-ih x\cdot p} dx\right| \right]\,,
  \end{align}
  where the last expression is simply the lowest eigenvalue of the
  operator on the previous line. Since $\varphi_0$ is reflection
  symmetric,
  \begin{equation}
    \int_{\R^d} |\varphi_0(x)|^2 e^{-ih x\cdot p} dx =  \int_{\R^d} |\varphi_0(x)|^2 \cos\left(h x\cdot p\right) dx = 1 - O(h^2 p^2)
  \end{equation}
  for small $hp$, since $\int_{\R^d} |x|^2 |\varphi_0(x)|^2 dx$ is
  finite by Proposition~\ref{prop:reg}.
\end{proof}

By combining (\ref{lem:eq:com}) with (\ref{schw1}), (\ref{schw2}) and
(\ref{eq1}) we see that $\|\nabla\psi\|_2$ is bounded by a constant
times $\|\psi\|_2$. To conclude the uniform upper bound on the $H^1$
norm of $\psi$, it thus suffices to give a bound on the $L^2$ norm. To
do this, we have to utilize the first term on the left side of
Eq.~(\ref{eq1}).

Recall that $\alpha$ can be decomposed as $\alpha = h \alpha_0 \psi +
\xi_0$, as in (\ref{dec1}), where $\alpha_0$ is short for the operator
$\widehat\alpha_0(-ih \nabla)$. The following lemma gives a lower
bound on $ ( \Tr (\alpha\bar\alpha)^2 )^{1/4}$, the $4$-norm of
$\alpha$.

\begin{Lemma}\label{lem:a4}
  For some $0<C<\infty$ we have
  \begin{align}\nn
    \|\alpha\|_4 & \geq \left[ h^{4-d} \int_\calC |\psi(x)|^4 \, dx
      \int_{\R_d} \widehat\alpha_0(q)^4 \, \frac{dq}{(2\pi)^d} - C
      h^{5-d} \|\psi\|_{H^1(\calC)}^4 \right]_+^{1/4} \\ & \quad - C
    h^{1-d/4} \|\psi\|_2^{1/2} \left ( 1 + C h^{1-d/4} \|\psi\|_4
    \right)^{1/2}\,,\label{eq:lem:a4}
  \end{align}
  where $[\,\cdot\,]_+ = \max\{ 0, \,\cdot\,\}$ denotes the positive
  part.
\end{Lemma}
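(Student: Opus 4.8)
\textbf{Proof strategy for Lemma~\ref{lem:a4}.}
The plan is to write $\alpha = h\psi\,\alpha_0 + \xi_0$ (with $\alpha_0$ shorthand for $\widehat\alpha_0(-ih\nabla)$ as in (\ref{dec1})), insert this into $\|\alpha\|_4$, and use the triangle inequality for the $4$-norm to separate the main term from the error. First I would deal with the leading term $\| h\psi\,\alpha_0 \|_4$. By the Lieb--Thirring-type bound (\ref{lte}) for the trace per unit volume, $\| h\,\psi(x)\,\widehat\alpha_0(-ih\nabla)\|_4 \le h\,(2\pi)^{-d/4}\|\psi\|_{L^4(\calC)}\|\widehat\alpha_0\|_{L^4(\R^d)}$ — but here I actually need a \emph{lower} bound, so instead I compute $\Tr(h\psi\,\alpha_0\,\overline{h\psi\,\alpha_0})^2 = h^4\Tr(\psi\,\alpha_0^2\,\bar\psi\,\alpha_0^2)$ explicitly. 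Expanding via the Fourier representation of $\alpha_0$ and of $\psi$, the ``diagonal'' contribution (where the momentum shifts from $\psi$ cancel) gives exactly $h^{4-d}\int_\calC|\psi|^4\int_{\R^d}\widehat\alpha_0(q)^4\,dq/(2\pi)^d$, while the off-diagonal remainder involves differences $\widehat\alpha_0(q+hp)^2-\widehat\alpha_0(q)^2$ which are $O(h)$ by the smoothness/decay of $\alpha_0$ from Proposition~\ref{prop:reg}; this yields the correction $-Ch^{5-d}\|\psi\|_{H^1(\calC)}^4$ inside the positive part. Taking the fourth root and using $[a-b]^{1/4}\ge [a]_+^{1/4} - [b]_+^{1/4}$ (or rather the reverse triangle inequality in the $4$-norm) produces the first line of (\ref{eq:lem:a4}).

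Next I would bound the error term $\|\xi_0\|_4$ from above. Since I cannot directly control $\|\xi_0\|_4$ by $\|\xi_0\|_2$ (the local $p$-norms are not ordered), I would use $\|\xi_0\|_4 = \|\alpha - h\psi\,\alpha_0\|_4 \le \|\alpha\|_4 + \|h\psi\,\alpha_0\|_4$, but that is circular; instead the right move is to interpolate or to use $\|\xi_0\|_4^2 = \Tr(\xi_0\bar\xi_0)^2 \le \|\xi_0\bar\xi_0\|_\infty\,\|\xi_0\|_2^2$ — no, again the sup norm is awkward. The clean approach: bound $\|\xi_0\|_4 \le \|\xi_0\|_2^{1/2}\,\|\xi_0\|_\infty^{1/2}$ is false for trace-per-unit-volume norms, so instead I would go through H\"older as in (\ref{genholder}): writing $\alpha - h\psi\,\alpha_0$ and estimating its $4$-norm by splitting $\alpha(x,y)$ using that $V$-relative-boundedness controls $\|\nabla_x\alpha\|$ via (\ref{eq1}). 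Concretely, from $\int_\calC\langle\alpha(\cdot,y)|K_{T_c}^{0,0}+V|\alpha(\cdot,y)\rangle\,dy \le Ch^2\|\alpha\|_2^2$ one gets an $H^1$-type bound on $\alpha$ in the $x$-variable, and then the Sobolev-type embedding encoded in (\ref{lte}) converts this into an $L^4$ (trace-per-unit-volume) bound of the form $\|\xi_0\|_4 \le C h^{1-d/4}\|\psi\|_2^{1/2}(1 + Ch^{1-d/4}\|\psi\|_4)^{1/2}$, where the factor $\|\psi\|_2^{1/2}$ comes from $\|\alpha\|_2 \sim h^{1-d/2}\|\psi\|_2$ by (\ref{schw1}) and the inner $(1+\cdots)^{1/2}$ absorbs the feedback of $\|\alpha\|_4$ itself appearing on the right.

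The step I expect to be the main obstacle is the second one: obtaining the correct $h^{1-d/4}\|\psi\|_2^{1/2}$ scaling for $\|\xi_0\|_4$ rather than the naive $\|\xi_0\|_2^{1/2}\cdot(\text{something})$. Because the trace-per-unit-volume $p$-norms are not monotone in $p$ (as emphasized after (\ref{genholder})), one genuinely has to produce the $4$-norm bound from a smoothness input, not from the $2$-norm alone; the natural smoothness input is the operator inequality $K_{T_c}^{0,0}+V(h^{-1}(\cdot-y)) \ge \nu(1-h^2\nabla^2)$ on the orthogonal complement of the ground state, combined with (\ref{eq1}), which gives $h^2\|\nabla_x\xi_0\|_2^2 \le C\|\alpha\|_2^2 \sim Ch^{2-d}\|\psi\|_2^2$, i.e. $\|\nabla_x\xi_0\|_2 \le Ch^{-d/2}\|\psi\|_2$. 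Feeding $\|\xi_0\|_2 \le Ch\,h^{1-d/2}\|\psi\|_2$ and this gradient bound into the Sobolev-type trace inequality (\ref{lte}) (applied after writing $\xi_0$ as a multiplication operator composed with a decaying Fourier multiplier, up to commutator errors of lower order) yields the $4$-norm estimate with the stated powers; tracking the self-consistent appearance of $\|\psi\|_4$ — which enters because $\xi_0$ inherits a small $\psi\,\alpha_0$-type piece when one commutes $\psi(x)$ past the resolvent — accounts for the $(1+Ch^{1-d/4}\|\psi\|_4)^{1/2}$ factor. Once both pieces are assembled, the reverse triangle inequality $\|\alpha\|_4 \ge \|h\psi\,\alpha_0\|_4 - \|\xi_0\|_4$ gives (\ref{eq:lem:a4}).
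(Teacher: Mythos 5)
Your treatment of the main term is essentially the paper's: expand $\|h\,\alpha_0\psi\|_4^4$ in Fourier series, identify the leading contribution $h^{4-d}\int_\calC|\psi|^4\int\widehat\alpha_0(q)^4\,dq/(2\pi)^d$, and control the momentum-shift remainder (which, combined with Schwarz and the Sobolev embedding $H^1(\calC)\subset L^6$, gives the $-Ch^{5-d}\|\psi\|_{H^1}^4$ correction). That part is fine, if sketchy.

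The gap is in your bound on $\|\xi_0\|_4$. First, you discard the interpolation $\|\xi_0\|_4^2\leq\|\xi_0\|_\infty\,\|\xi_0\|_2$ as ``false for trace-per-unit-volume norms''; it is in fact valid (it is just H\"older's inequality (\ref{holder}) with exponents $\infty$ and $2$, which does extend to the trace per unit volume via the Floquet decomposition) — what fails is only the monotonicity $\|\cdot\|_\infty\leq\|\cdot\|_2$. This interpolation is exactly the paper's route, and rejecting it forces you onto a path that does not work: an $H^1$-in-$x$ bound on $\xi_0$ obtained from (\ref{eq1}) is a weighted Hilbert--Schmidt bound, and such a bound does not control the local $4$-norm, because in the fiber decomposition $\|\xi_0\|_4^4=\int\|\xi_0^\xi\|_4^4\,d\xi$ one needs a \emph{uniform-in-$\xi$} bound on $\|\xi_0^\xi\|_\infty$, which smoothness in one variable plus an averaged $L^2$ bound cannot provide (a fiberwise rank-one kernel already shows $\|\cdot\|_4\geq\|\cdot\|_2$ is possible). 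Likewise (\ref{lte}) only applies to operators of the product form $a(x)b(-i\nabla)$, which $\xi_0$ — by construction the part of $\alpha$ orthogonal to the ground state in the relative variable — is not, and your appeal to ``commutator errors of lower order'' has no actual mechanism behind it. The ingredient you are missing is the admissibility constraint $0\leq\Gamma\leq1$, which gives $\|\alpha\|_\infty\leq1$, hence $\|\xi_0\|_\infty\leq1+h\|\alpha_0\psi\|_\infty$; combined with the Young-inequality estimate (\ref{neqeq}), $\|\alpha_0\psi\|_\infty\leq C h^{-d/4}\|\psi\|_4$, and with $\|\xi_0\|_2\leq O(h^{2-d/2})\|\psi\|_2$, the interpolation yields precisely $\|\xi_0\|_4^2\leq Ch^{2-d/2}\|\psi\|_2\bigl(1+Ch^{1-d/4}\|\psi\|_4\bigr)$, whose structure (the $\|\psi\|_2^{1/2}$ and the $(1+Ch^{1-d/4}\|\psi\|_4)^{1/2}$ factors) cannot be reproduced by your proposed $H^1$-plus-Sobolev argument.
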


\begin{proof}
  By the triangle inequality,
  \begin{equation}\label{ft1}
    \|\alpha\|_4 \geq  h \|\alpha_0 \psi\|_4 - \|\xi_0\|_4 \,.
  \end{equation}
  We can bound the last term as $\|\xi_0\|_4^2 \leq \|\xi_0\|_\infty
  \|\xi_0\|_2$. Recall that we have already shown that $\|\xi_0\|_2
  \leq O(h)\|\alpha\|_2 \leq O(h^{2-d/2}) \|\psi\|_2$.  Note that we
  cannot bound the $\infty$-norm simply by the $2$-norm, since the
  norms $\|\,\cdot\,\|_p$ defined via the trace per unit volume in
  (\ref{def:pnorm}) are, in general, not monotone decreasing in $p$.
  We can, however, use that $\|\xi_0\|_\infty\leq
  \|\alpha\|_\infty+h\|\alpha_0\psi\|_\infty$ and that
  $\|\alpha\|_\infty\leq 1$ for any admissible state. Moreover, we
  claim that
  \begin{equation}\label{neqeq}
    \|\alpha_0\psi\|_\infty \leq C_\nu \|\psi\|_4 h^{-d/4}   \left( \int_{\R^d} \left|\alpha_0(x) (1+|x|)^\nu \right|^{4/3}  dx \right)^{3/4}
  \end{equation}
  for $\nu > d$; that the latter integral
  is finite follows from the last statement in
  Proposition~\ref{prop:reg}, which says that $\alpha_0(\,\cdot\,)
  (1+|\,\cdot\,|)^\nu$ is in $L^2(\R^d)$ for {\em any} $\nu>0$.

  Eq.~(\ref{neqeq}) can be obtained with the aid of Young's inequality
  \cite[Thm.~4.2]{LL}, as we now explain. With $\chi$ denoting the
  characteristic function of the unit cube, and $\chi_j(x) =
  \chi(x-j)$ for $j\in \Z^d$, we have, for any $f,g\in L^2(\R^d)$,
  \begin{align}\nn
    & \left| \left\langle f \left| \alpha_0 \psi \right| g \right
      \rangle \right| \\ \nn &
    \leq h^{-d} \sum_{j,k \in \Z^d}  \left( 1+ \tfrac{[|j-k|-\sqrt{d}]_+}h \right)^{-\nu} \\
    \nn & \quad \times \int_{\R^d} \left| \chi_j(x) f(x)
      \alpha_0(\tfrac{x-y}h) \left( 1+ \tfrac{|x-y|}h \right)^\nu
      \psi(y) \chi_k(y) g(y)\right| \,dx\,dy \\ & \leq C h^{-d/4}
    \sum_{j,k \in \Z^d} \left( 1+ \tfrac{[|j-k|-\sqrt{d}]_+}h
    \right)^{-\nu} \|\chi_j f\|_2 \|\chi_k g\|_2 \|\psi\|_{L^4(\calC)}
    \|\alpha_0 (1+|\,\cdot\,|)^\nu\|_{4/3} \,,
  \end{align}
  where we used Young's inequality in the second step.  We can further
  bound $ \|\chi_j f\|_2 \|\chi_k g\|_2 \leq \lambda \|\chi_j f\|_2^2
  + \lambda^{-1} \|\chi_k g\|_2^2$ for $\lambda>0$. After doing the
  sum over $j$ and $k$ and optimizing over $\lambda$, we thus have
  \begin{align}\nn
    & \left| \left\langle f \left| \alpha_0 \psi \right| g \right
      \rangle \right| \\ & \leq C h^{-d/4} \|f\|_2 \|g\|_2 \sum_{j\in
      \Z^d} \left( 1+ \tfrac{[|j|-\sqrt{d}]_+}h \right)^{-\nu}
    \|\psi\|_{L^4(\calC)} \|\alpha_0 (1+|\,\cdot\,|)^\nu\|_{4/3}\,,
  \end{align}
  which yields (\ref{neqeq}).

  We have thus shown that
  \begin{equation}
    \|\xi_0\|_4^2 \leq C h^{2-d/2} \|\psi\|_2 \left ( 1 +  C h^{1-d/4}  \|\psi\|_4 \right)\,.
  \end{equation}
  It remains to investigate the first term in (\ref{ft1}).  A short
  calculation shows that
  \begin{equation}
    \|\alpha_0\psi\|_4^4 = h^{-d} \sum_{p_1,p_2,p_3} \widehat \psi(p_1) \widehat \psi^*(p_2) \widehat\psi(p_3)\widehat\psi^*(-p_1-p_2-p_3) F(h p_1,h p_2,h p_3)\,,
  \end{equation}
  where
  \begin{equation}
    F(p_1,p_2,p_3) =  \int_{\R^3}  \widehat \alpha_0(q) \widehat \alpha_0(q+p_1) \widehat \alpha_0(q+p_1+p_2) \widehat \alpha_0(q+p_1+p_2+p_3) \, \frac{dq}{(2\pi)^d} \,.
  \end{equation}
  Note that $F$ is bounded.  In fact,
  \begin{equation}
    \left| F(p_1,p_2,p_3)\right| \leq (2\pi)^{-d}\| \widehat \alpha_0\|_4^4 \,.
  \end{equation}
  To see that this is finite, one can use that $|\widehat \alpha_0(p)|
  \leq C |t(p)|/(1+p^2)$ (from the definition (\ref{deft})) and that
  $t\in L^q$ with $q$ as in Proposition~\ref{prop:reg}.  In the same
  way, one can show that $F$ has a bounded derivative, and hence
  \begin{equation}
    \left| F(p_1,p_2,p_3) - F(0,0,0)\right| \leq C ( |p_1|+|p_2|+|p_3| ) \,.
  \end{equation}

  We are left with giving a bound on
  \begin{align}\nn
    & \sum_{p_1,p_2,p_3} \left| \widehat \psi(p_1) \widehat
      \psi^*(p_2) \widehat\psi(p_3)\widehat\psi^*(-p_1-p_2-p_3)
    \right| |p_1| \\ \nn & \leq \left( \sum_{p} |\widehat \psi(p)|^2
      |p|^2 \right)^2 \left( \sum_{p} \left| \sum_{p_2,p_3} \left|
          \widehat \psi^*(p_2) \widehat \psi(p_3) \widehat
          \psi^*(-p-p_2-p_3)\right| \right|^2 \right)^{1/2} \\ & =
    \left( \int_{\calC} |\nabla\psi(x)|^2 \, dx \right)^{1/2}
    \left(\int_\calC |\widetilde\psi(x)|^6 dx
    \right)^{1/2}\,, \label{649}
  \end{align}
  where $\widetilde\psi$ is the function whose Fourier transform
  equals $|\widehat\psi|$.  For $d\leq 3$, we can use the Sobolev
  inequality
  \begin{equation}\label{650}
    \left( \int_\calC |\widetilde\psi(x)|^6 dx\right)^{1/3} \leq  C \|\widetilde \psi\|_{H^1(\calC)}^2 = C \|\psi\|_{H^1(\calC)}^2 \,.
  \end{equation}
  Hence
  \begin{equation}
    \|\alpha_0\psi\|_4^4 \geq h^{-d} \int_\calC |\psi(x)|^4 \, dx \, \int_{\R^d} \widehat \alpha_0(q)^4 \, \frac{dq}{(2\pi)^d}  - C h^{1-d} \|\psi\|_{H^1(\calC)}^4\,,
  \end{equation}
  and this completes the proof.
\end{proof}

We have already shown that $\|\nabla\psi\|_2\leq C \|\psi\|_2$, which
also implies that $\|\psi\|_4 \leq C \|\psi\|_2$ via Sobolev's
inequality for functions on the torus. If we use also that $\|\psi\|_4
\geq \|\psi\|_2$ (since the norms of $\psi$ are defined via
integration over the unit cube $\calC$), we conclude from
(\ref{eq:lem:a4}) that $\|\alpha\|_4 \geq C h^{1-d/4} (\|\psi\|_2 - C
\|\psi\|_2^{1/2})$ for $h$ small enough. In combination with
(\ref{eq1}) and (\ref{schw1}) this implies that $\|\psi\|_2 \leq
C$. This shows that the $H^1$ norm of $\psi$ is indeed uniformly
bounded.

It follows that $\|\xi\|_2 \leq O(h^{2-d/2})$. To conclude the proof
of (\ref{defpsinew}), we need to show that also $\|\xi\|_{H^1} \leq
O(h^{2-d/2})$, i.e., $\|\nabla\xi\|_{2}\leq O(h^{1-d/2})$. We have
\begin{equation}
  \xi(x,y) = \xi_0(x,y) + \tfrac 12 \left(\psi(x) - \psi(y)\right)\frac{ h^{1-d}}{(2\pi)^{d/2}} \alpha_0(h^{-1}(x-y)) \,.
\end{equation}
From the definition (\ref{dec1}) it follows easily that $\|
\xi_0\|_{H^1}\leq O(h^{2-d/2})$, since
\begin{align}\nn
  \|\xi_0\|_{H^1}^2 & \leq C \int_\calC \left\langle
    \xi_0(\,\cdot\,,y) \left| K_T^{0,0} + V(h^{-1}(\,\cdot\, -y)) + 1
    \right| \xi_0(\,\cdot\,,y)\right\rangle dy \\ &\leq C \left( h^2
    \|\alpha\|_2^2 + \|\xi_0\|_2^2\right) \leq O(h^{4-d})
\end{align}
using (\ref{eq1}).  Recall that the definition of the $H^1$ norm in
(\ref{def:h1}) is not symmetric, hence this does not immediately imply
a bound on the $H^1$ norm of $\xi$. However, we can estimate
\begin{equation}
  h^{2(1-d)} \int_\calC |\nabla\psi(x)|^2 \int_{\R^d} |\alpha_0(h^{-1}(x-y))|^2 dx\, dy \leq O(h^{2-d})\,.
\end{equation}
Finally
\begin{align}\nn
  &h^{-2d} \int_{\calC\times\R^d} |\psi(x)-\psi(y)|^2 |\nabla
  \alpha_0(h^{-1}(x-y))|^2\,dx \, dy \\ & = 4 h^{-d} \sum_{p\in
    (2\pi\Z)^d} |\widehat \psi(p)|^2 \int_{\R^d} |\nabla
  \alpha_0(x)|^2 \sin^2\left( \tfrac 12 h p\cdot x \right) \, dx \leq
  O(h^{2-d}) \,,
\end{align}
since the $H^1$ norm of $\psi$ is bounded and $\int |\nabla\alpha_0|^2
|x|^2dx$ is finite, as shown in Proposition~\ref{prop:reg} in
Section~\ref{sec:alpha0}.  This proves the claim.

\section{Proof of Theorem~\ref{thm:main}: Lower Bound, Part
  B}\label{sec:low2}

We now conclude the proof of the lower bound of
Theorem~\ref{thm:main}. For convenience, we shall divide the proof
into 2 steps.

\subsection{Step 1}

Let $\Gamma$ be a state with $\F^{\rm BCS}(\Gamma) \leq \F^{\rm
  BCS}(\Gamma_0)$. In the previous section, we have shown that the
off-diagonal part $\alpha$ of $\Gamma$ can be decomposed in the form
(\ref{defpsinew}).  Given $\psi$ defined in (\ref{defpsinew}) and some
$\epsilon>0$, define $\psi_<$ via its Fourier transform
\begin{equation}
  \widehat
  \psi_<(p) = \widehat \psi(p) \theta(\epsilon h^{-1} - |p|) \,, 
\end{equation}
where $\theta(t) = 1$ for $t\geq 1$, and $0$ otherwise.  The function
$\psi_<$ is thus smooth, and $\|\psi_<\|_{H^2}\leq C (1 + \epsilon
h^{-1})$ since $\psi$ is bounded in $H^1$. We shall choose
$h<\epsilon<1$.

Let also $\psi_> = \psi - \psi_<$. Since $\psi$ is bounded in $H^1$,
the $L^2(\calC)$ norm of $\psi_>$ is bounded by $O(h \epsilon^{-1})$.
We absorb the part $\frac
12(\psi_>(x)+\psi_>(y))\alpha_0(h^{-1}(x-y))$ into $\xi$, and write
\begin{equation}\label{defpsi}
  \alpha(x,y) = \tfrac 12 \left( \psi_<(x)+\psi_<(y)\right) \frac{h^{1-d}}{(2\pi)^{d/2}} \alpha_0(h^{-1}(x-y)) +\sigma(x,y)
\end{equation}
where
\begin{equation}\label{def:sigma}
  \sigma (x,y) = \xi(x,y) + \tfrac 12 \left( \psi_>(x)+\psi_>(y)\right) \frac{h^{1-d}}{(2\pi)^{d/2}} \alpha_0(h^{-1}(x-y)) \,.
\end{equation}
In the previous section, we have shown that $\|\xi\|_{H^1}\leq
O(h^{2-d/2})$.  From the bound $\|\psi_>\|_2 \leq O(h\epsilon^{-1})$
it thus follows that $\|\sigma\|_2 \leq
O(h^{2-d/2}\epsilon^{-1})$. Note that we cannot conclude the same
bound for the $H^1$ norm of $\sigma$, however.

As in (\ref{def:delta}), let $\Delta = -\frac 12 (\psi_<(x)
t(-ih\nabla) + t(-ih\nabla) \psi_<(x))$. Its integral kernel is given
in (\ref{int:delta}), with $\psi$ replaced by $\psi_<$.  Let
$H_\Delta$ be the corresponding Hamiltonian defined in
(\ref{hdelta}). We can write
\begin{align}\nonumber
  & \F^{\rm BCS}(\Gamma) - \F^{\rm BCS}(\Gamma_0) \\ \nonumber & =
  -\frac T 2 \Tr\left[ \ln(1+e^{-\beta H_\Delta})-\ln(1+e^{-\beta
      H_0})\right] \\ \nonumber & \quad - h^{2-2d}\int_{\calC\times
    \R^d} V(h^{-1}(x-y)) \tfrac 14 \left| \psi_<(x)+\psi_<(y)\right|^2
  |\alpha_0(h^{-1}(x-y))|^2\, \frac{dx\,dy}{(2\pi)^d}\\ & \quad +
  \tfrac 12 T \, \H(\Gamma,\Gamma_\Delta) + \int_{\calC\times \R^d}
  V(h^{-1}(x-y))|\sigma(x,y)|^2\, {dx\,dy}\,, \label{off}
\end{align}
where $\H$ denotes again the relative entropy defined in
(\ref{relent}).

The terms in the first two lines on the right side of (\ref{off}) have
already been calculated. The first term is estimated in
Theorem~\ref{thm:scl}, and a bound on the second term was derived in
Section~\ref{sec:up} on the upper bound. The error in replacing
$\beta$ with $\beta_c$ is as for the upper bound. Using the uniform
upper bound on the $H^1$ norm of $\psi_<$, as well as
$\|\psi_<\|_{H^2} \leq C \epsilon/h$, we obtain the lower bound
\begin{align}\nonumber
  \F^{\rm BCS}(\Gamma) - \F^{\rm BCS}(\Gamma_0) &\geq h^{4-d} \left(
    \E^{\rm GL}(\psi_<) - B_3 - C ( h+\epsilon^2) \right) \\ & \quad +
  \tfrac 12 T \, \H(\Gamma,\Gamma_\Delta) + \int_{\calC\times \R^d}
  V(h^{-1}(x-y))|\sigma(x,y)|^2\, {dx\,dy}\,. \label{lb2:s1}
\end{align}
It remains to show that the terms in the last line of (\ref{lb2:s1})
are negligible, i.e., of higher order than $h^{4-d}$, for an
appropriate choice of $\epsilon \ll 1$. This will be accomplished in
the next step.

\subsection{Step 2}

We again employ Lemma~\ref{lem:klein} to get a lower bound on the
relative entropy $\H(\Gamma,\Gamma_\Delta)$. It implies that
\begin{equation}\label{eqs2}
  T \, \H(\Gamma,\Gamma_\Delta) \geq   \Tr\left[ \frac{H_\Delta}{\tanh \tfrac 12 \beta H_\Delta}\left( \Gamma-\Gamma_\Delta\right)^2 \right] \,.
\end{equation}
As in the proof of Lemma~\ref{lem:mon}, we shall use the fact that $x
\mapsto \sqrt x/ \tanh \sqrt x$ is an operator monotone function.

An application of Schwarz's inequality yields
\begin{equation}
  H_\Delta^2 \geq (1-\eta) H_0^2 - \eta^{-1}  \|\Delta\|_\infty^2 
\end{equation}
for any $\eta>0$.  To bound $H_0^2$ we proceed as in the proof of
Lemma~\ref{lem:mon}, specifically using (\ref{sss}) and
(\ref{boundq}), which states that $[ -i\nabla\, A(x)- i A(x)\, \nabla
]^2\leq C (1-\nabla^2)$. The choice $\epsilon = O(h)$ in (\ref{sss})
yields the lower bound
\begin{equation}
  H_0^2 \geq  (1-O(h)) [-h^2\nabla^2 - \mu]^2\otimes \id_{\C^2} - C h \,.
\end{equation}
The operator monotonicity thus implies that
\begin{align}\nn
  K_T^{0,0}\otimes \id_{\C^2} &\leq \frac { (1-\eta-O(h))^{-1/2}
    \sqrt{H_\Delta^2 + \eta^{-1}\|\Delta\|_\infty^2+Ch}}{\tanh\left[
      \tfrac 12 \beta (1-\eta-O(h))^{-1/2} \sqrt{H_\Delta^2 +
        \eta^{-1}\|\Delta\|_\infty^2 + Ch}\right]} \\ \nn & \leq
  (1-\eta-O(h))^{-1/2} \frac { \sqrt{H_\Delta^2 +
      \eta^{-1}\|\Delta\|_\infty^2+Ch)}}{\tanh\left[\tfrac 12 \beta
      \sqrt{H_\Delta^2 + \eta^{-1}\|\Delta\|_\infty^2+Ch}\right]} \\ &
  \leq (1-\eta-O(h))^{-1/2} \left( 1+ \tfrac 14 \beta^2
    \left(\eta^{-1} \|\Delta\|_\infty^2 + Ch \right)\right) \frac
  {H_\Delta}{\tanh \tfrac 12 \beta H_\Delta}
\end{align}
for $0<\eta<1$. With the choice $\eta = O(\|\Delta\|_\infty)$ this
gives
\begin{equation}
  \frac {H_\Delta}{\tanh \tfrac 12 \beta H_\Delta} \geq (1-O(h+\|\Delta\|_\infty))  K_T^{0,0} \otimes \id_{\C^2}\,.
\end{equation}
In particular, we infer from (\ref{eqs2}) that
\begin{equation}\label{eqs3}
  \frac T2  \, \H(\Gamma,\Gamma_\Delta) \geq    (1-O(h+\|\Delta\|_\infty)) \, \Tr\left[ K_T^{0,0} (\alpha-\alpha_\Delta)(\bar\alpha-\bar\alpha_\Delta) \right]\,,
\end{equation}
where $\alpha_\Delta$ denotes again the upper off-diagonal entry of
$\Gamma_\Delta$.

From the definition of $\Delta$, we see that
\begin{equation}
  \|\Delta\|_\infty \leq  h \|\psi_<\|_\infty \| t \|_\infty \,.
\end{equation}
Moreover, since the Fourier transform of $\psi_<$ is supported in the
ball $|p|\leq \epsilon/h$,
\begin{align}\nn
  \|\psi_<\|_\infty \leq \sum_p |\widehat\psi_<(p)| & \leq
  \|\psi_<\|_{H^1(\calC)} \left( \sum_{|p|\leq \epsilon h^{-1}} \frac
    1{1+p^2} \right)^{1/2}\\ & \leq C\times \left\{ \begin{array}{cl}
      1 & \text{for $d=1$} \\ \sqrt{ \ln (\epsilon/h)} & \text{for
        $d=2$} \\ \sqrt{\epsilon/h} & \text{for
        $d=3$.} \end{array}\right.
\end{align}

Recall the decomposition (\ref{defpsi}) of $\alpha$, and define $\phi$
by
\begin{equation}
  \alpha_\Delta = \tfrac h2 \left( \psi_<(x) \widehat\alpha_0(-ih\nabla) + \widehat \alpha_0(-ih\nabla) \psi_<(x)\right) +\phi \,.
\end{equation}
We thus have
\begin{equation}
  \alpha - \alpha_\Delta = \sigma  - \phi \,.
\end{equation}
Since $\|\psi_<\|_{H^2}\leq O(\epsilon/h)$, Theorem~\ref{lem3} implies
that $\|\phi\|_{H^1} \leq O(\epsilon h^{2-d/2})$.  From the positivity
of $K_T^{0,0}$ we conclude that
\begin{equation}\label{reml}
  \Tr\left[ K_T^{0,0} (\sigma -\phi) (\bar \sigma-\bar\phi) \right] \geq  \Tr  K_T^{0,0} \sigma \bar \sigma   - 2 \re\, \Tr  K_T^{0,0} \bar\sigma \phi\,. 
\end{equation}

The terms quadratic in $\sigma$ are thus
\begin{equation}
  (1- \delta )  \Tr  K_T^{0,0} \sigma \bar \sigma + \int_{\calC\times \R^d} V(h^{-1}(x-y))|\sigma(x,y)|^2\,{dx\,dy} 
\end{equation}
with $\delta=O(h+\|\Delta\|_\infty)$.  Pick some $\widetilde \delta
\geq 0 $ with $\delta + \widetilde \delta \leq 1/2$, and write
\begin{align}\nn
  (1- \delta ) K_T^{0,0} + V & = \widetilde \delta K_T^{0,0} + \left(
    1 - 2 \delta - 2 \widetilde \delta\right) \left( K_T^{0,0} + V
  \right) + \left(\delta+\widetilde \delta\right) \left(K_T^{0,0} + 2
    V \right) \\ & \geq \widetilde \delta K_T^{0,0} - 2 D T_c h^2 - C
  \left( \delta + \widetilde \delta\right)\,,
\end{align}
where we have used that $V$ is relatively form-bounded with respect to
$K_T^{0,0}$ to bound the last term, and $K_T^{0,0}\geq K_{T_c}^{0,0} -
2 (T_c - T)= K_{T_c}^{0,0} - 2 h^2 D T_c$ to bound the second. Using
also that $K_T^{0,0} \geq -h^2 \nabla^2 -\mu$, we thus conclude that
\begin{align}\nn
  & (1-\delta) \Tr K_T^{0,0} \sigma \bar \sigma + \int_{\calC\times
    \R^d} V(h^{-1}(x-y))|\sigma(x,y)|^2\,{dx\,dy} \\ & \geq \widetilde
  \delta \left( \| \sigma\|_{H^1}^2 - (C+\mu+1) \|\sigma\|_2^2\right)
  - \left( C \delta + 2D T_c h^2 \right) \|\sigma\|_2^2\,. \label{tdt}
\end{align}
Recall that $\|\sigma\|_2\leq O(h^{2-d/2}/\epsilon)$. We shall choose
$\widetilde \delta = 0$ if the first parenthesis on the right side of
(\ref{tdt}) is less than $\tfrac 12 \|\sigma\|_{H^1}^2$ (and, in
particular, if it is negative), while $\widetilde \delta = O(1)$ in
the opposite case, i.e., when $\|\sigma\|_{H_1}^2 \geq 2
(C+\mu+1)\|\sigma\|_2^2$. In the latter case we shall have the
positive term $\widetilde\delta \|\sigma\|_{H^1}^2/2$ at our disposal,
which will be used in (\ref{upt}) below.

We are left with estimating the last term in (\ref{reml}). It can be
bounded by the product of the $H^1$ norms of $\sigma$ and $\phi$. This
turns out not to be good enough, however.  Recall from
(\ref{def:sigma}) that $\sigma$ is a sum of two terms, $\xi$ and
$\sigma - \xi$, where the latter is proportional to $\psi_>$, and
$\|\xi\|_{H^1} \leq O (h^{2-d/2})$.  Moreover, as the proof of
Theorem~\ref{lem3} in Section~\ref{sec:propproof} shows, $\phi$ is the
sum of two terms, $\eta_1$ and $\phi-\eta_1$, with $\eta_1$ defined in
(\ref{def:eta1}) (with $\psi$ replaced by $\psi_<$) and
$\|\phi-\eta_1\|_{H^1} \leq O(h^{3-d/2})$.  Now
\begin{equation}
  \Tr  K_T^{0,0} \left( \bar \sigma - \bar \xi\right) \eta_1 = 0
\end{equation}
as can be seen by writing out the trace in momentum space and using
that $\widehat \psi_<$ and $\widehat \psi_>$ have disjoint support.
Hence
\begin{align}\nn
  \re\, \Tr K_T^{0,0} \bar \sigma \phi & \leq C \left( \|\xi \|_{H^1}
    \|\phi\|_{H^1} + \|\sigma\|_{H^1} \|\phi -\eta_1\|_{H^1}\right)\\
  & \leq O(\epsilon h^{4-d}) + O(h^{3-d/2})\|\sigma\|_{H^1}\,.
\end{align}
In the case $\|\sigma\|_{H^1}\leq C \|\sigma\|_2$ (corresponding to
$\widetilde \delta =0$ above) we can further bound
$\|\sigma\|_{H^1}\leq O(h^{2-d/2}/\epsilon)$.  In the opposite case,
where $\widetilde \delta = O(1)$, we can use the positive term
$\widetilde\delta \|\sigma\|_{H^1}^2/2$ from before and bound
\begin{equation}\label{upt}
  \frac {\widetilde \delta}{2}  \|\sigma\|_{H^1}^2 -  O(h^{3-d/2})\|\sigma\|_{H^1}  \geq  - O(h^{6-d})\,,
\end{equation}
which thus leads to an even better bound.

In combination with (\ref{lb2:s1}) these bounds show that
\begin{equation}\label{im1}
  h^{d-4} \left( \F^{\rm BCS}(\Gamma) - \F^{\rm BCS}(\Gamma_0) \right) \geq \E^{\rm GL}(\psi_<) - B_3 - C e
\end{equation}
where
\begin{equation}
  e =  h+\epsilon^2+ \epsilon + \frac h\epsilon + \frac h{\epsilon^2} \times \left\{ \begin{array}{cl} 1 & \text{for $d=1$} \\ \sqrt{ \ln (\epsilon/h)} & \text{for $d=2$} \\ \sqrt{\epsilon/h} & \text{for $d=3$.} \end{array}\right.
\end{equation}
The choice $\epsilon = h^{1/3}$ for $d=1$, $\epsilon = h^{1/3}
[\ln(1/h)]^{1/6}$ for $d=2$ and $\epsilon = h^{1/5}$ for $d=3$,
respectively, leads to
\begin{equation}
  e \leq C  \times \left\{ \begin{array}{cl} h^{1/3} & \text{for $d=1$} \\ h^{1/3} \left[\ln (1/h)\right]^{1/6} & \text{for $d=2$} \\ h^{1/5} & \text{for $d=3$.} \end{array}\right.
\end{equation}

The completes the lower bound to the BCS energy. The statement
(\ref{thm:dec}) about the minimizer follows immediately from
(\ref{im1}) and (\ref{defpsi}).

\section{Proof of Theorem~\ref{thm:scl}}\label{sec:thm:scl}

For simplicity, we prove Theorem~\ref{thm:scl} only in the case
$d=3$. The cases $d=1$ and $d=2$ are very similar and are left to the
reader.

Note that the function $f$ in (\ref{deff}) is real analytic and has a
bounded derivative. The second and higher derivatives decay
exponentially. In particular, the following lemma applies. It will be
used repeatedly in the proof of Theorem~\ref{thm:scl}.

For a general smooth function $f:\R\to\R$, let $[a_1,\dots,a_N]_f$
denote the divided differences \cite{dono}, defined recursively via
\begin{equation}\label{def:re} [a]_f = f(a) \ , \quad [a_1,a_2]_f =
  \frac{f(a_1)-f(a_2)}{a_1-a_2}
\end{equation}
and
\begin{equation}\label{def:rec} [a_1,a_2,\dots,a_N]_f = \frac{
    [a_1,\dots,a_{N-1}]_f - [a_2,\dots,a_{N}]_f}{a_1-a_N}
\end{equation}
is case $a_1\neq a_N$. The extension to coinciding arguments is simply
by continuity. In case of distinct arguments,
\begin{equation} [a_1,a_2,\dots,a_N]_f = \sum_{j=1}^N f(a_j)
  \prod_{i,i\neq j} (a_j-a_i)^{-1}\,.
\end{equation}
In case $f$ is analytic in a neighborhood of the real axis, we have
\begin{equation}\label{83}
  \left[a_1,\dots,a_N\right]_f = \frac 1{2\pi i} \int_\Gamma f(z) \prod_{k=1}^N \frac 1{z-a_k} dz\,,
\end{equation}
with $\Gamma$ a contour enclosing all the $a_i$.

\begin{Lemma}\label{lem:dec}
  Let $f$ be a smooth function on $\R$ satisfying $|f^{(n)}(x)| \leq C
  (1+|x|)^{1-n}$ for $1\leq n\leq N-1$.  Then, for $N\geq 3$, there is
  finite constant $C_N>0$ such that
  \begin{equation}\label{eq:lem:dec1}
    \left[ a_1,\dots,a_N\right]_f \leq \frac {C_N}{1+ \max_i\{|a_i|\}} \quad \text{for all $a_i\in \R$.}
  \end{equation}
  Moreover, for all $1\leq n < N$ there is a finite constant $C_N'>0$
  such that if, $a_i \leq -\lambda \leq 0$ for all $1\leq i\leq n$ and
  $a_i\geq \lambda\geq 0$ for all $n+1\leq i\leq N$, then
  \begin{equation}\label{eq:lem:dec2}
    \left[ a_1,\dots,a_N\right]_f \leq \frac {C'_N}{(1+\lambda)^{N-2}} \,.
  \end{equation}
\end{Lemma}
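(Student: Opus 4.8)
The basic tool will be the Hermite--Genocchi integral representation of divided differences: for smooth $f$,
\begin{equation*}
  [a_1,\dots,a_N]_f = \int_{\Sigma_{N-1}} f^{(N-1)}\Big( \textstyle\sum_{i=1}^N t_i a_i \Big)\, dt \,,
\end{equation*}
where $\Sigma_{N-1} = \{t\in\R^N : t_i\geq 0,\ \sum_i t_i = 1\}$ is the standard simplex, of total mass $1/(N-1)!$. (For the analytic $f$ of the applications one could instead start from the contour formula \eqref{83}, but Hermite--Genocchi is cleaner and matches the generality of the statement.) Since $N\geq 3$, the hypothesis yields $|f^{(n)}(x)|\leq C$ for all $1\leq n\leq N-1$, so $\big|[a_1,\dots,a_N]_f\big|\leq C/(N-1)!$ for \emph{all} arguments; this already proves \eqref{eq:lem:dec1} when $M:=\max_i|a_i|\leq 1$ and \eqref{eq:lem:dec2} when $\lambda\leq 1$, and I will assume $M\geq 1$ (resp.\ $\lambda\geq 1$) from now on. It is worth flagging at the outset that merely pulling the absolute value inside the Hermite--Genocchi integral is \emph{not} enough: it loses a logarithm of $M$ (resp.\ $\lambda$), and the missing decay must be recovered from cancellation. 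I will extract it through the recursion \eqref{def:rec}, which writes an $N$-point divided difference as a difference of two $(N-1)$-point ones divided by the gap between the outermost arguments; together with the symmetry of the divided difference in its arguments, this is essentially the only extra ingredient needed.

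For \eqref{eq:lem:dec1} I would reorder so that $|a_1|=M$ and split into cases. If some $a_j$ with $j\neq 1$ has $|a_j|\leq M/2$, I place $a_1$ first and $a_j$ last and apply \eqref{def:rec}: the denominator satisfies $|a_1-a_j|\geq M/2$, while each of the two $(N-1)$-point divided differences in the numerator is $O(1)$ by Hermite--Genocchi (using $|f^{(N-2)}|\leq C$, or $|f'|\leq C$ when $N=3$), giving a bound $O(1/M)$. Otherwise $|a_i|>M/2$ for every $i$; if all the $a_i$ have the same sign then $|\sum_i t_i a_i|\geq M/2$ on the whole simplex and Hermite--Genocchi with $|f^{(N-1)}(x)|\leq C(1+|x|)^{2-N}$ gives $O((1+M)^{2-N})=O(1/M)$ since $N\geq 3$; and if the signs are mixed there are $i,j$ with $a_i>M/2$ and $a_j<-M/2$, hence $a_i-a_j>M$, and the same one-step recursion argument again yields $O(1/M)$. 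In every case this is $O(1/(1+M))$.

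For \eqref{eq:lem:dec2}, with $\lambda\geq 1$, I would iterate the recursion to build a finite binary tree of divided differences. At a node whose argument set contains points of both groups $\{a_i\leq-\lambda\}$ and $\{a_i\geq\lambda\}$ and has at least three elements, I order the set with a point of the first group first and a point of the second group last and split via \eqref{def:rec}; the denominator is then $\geq 2\lambda\geq 1+\lambda$, so this step contributes a factor $\leq(1+\lambda)^{-1}$, and each child has one fewer argument. I stop a branch when either it has reached two elements --- where the divided difference is $\leq\sup|f'|\leq C$ --- or its argument set lies entirely on one side, in which case $|\sum t_i c_i|\geq\lambda$ on the simplex and Hermite--Genocchi gives a bound $\leq C(1+\lambda)^{2-m}/(m-1)!$ for an $m$-element set. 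Along any branch, if $j$ splittings have been carried out then exactly $m=N-j$ arguments remain, so the accumulated bound is $C_N(1+\lambda)^{-j}(1+\lambda)^{-(m-2)}=C_N(1+\lambda)^{-(N-2)}$ (and likewise in the two-element case, where $m=2$). Since the tree has only $O_N(1)$ leaves, summing over them gives \eqref{eq:lem:dec2}.

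The step I expect to require the most care is the bookkeeping in the second part: one has to check that every branch of the recursion tree terminates with precisely $N-2$ factors $(1+\lambda)^{-1}$ --- neither more nor fewer --- and that the regime $\lambda\leq 1$ (resp.\ $M\leq 1$), in which those factors are worthless, is set aside and handled by the crude uniform bound. Everything beyond that is the Hermite--Genocchi formula, the symmetry of divided differences, and elementary estimates on $f^{(n)}$.
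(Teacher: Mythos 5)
Your argument is correct. For the first bound \eqref{eq:lem:dec1} it is essentially the paper's own proof: the same dichotomy (either some argument is far from the extreme one, so the recursion \eqref{def:rec} with the two outermost points supplies a denominator of size $\max_i|a_i|/2$ and uniformly bounded numerators, or all arguments are large and of one sign, so the Hermite--Genocchi/Feynman representation \eqref{feynman} together with $|f^{(N-1)}(x)|\le C(1+|x|)^{2-N}$ gives the decay), with the small-$\max_i|a_i|$ regime absorbed into the constant. For the second bound \eqref{eq:lem:dec2} you take a genuinely different route. The paper stays inside the integral representation: a change of variables \eqref{feynman2} reduces the general mixed configuration to the coalesced two-cluster case $a_1=\dots=a_n=a\le-\lambda$, $a_{n+1}=\dots=a_N=b\ge\lambda$, where the divided difference is the explicit mixed partial derivative $\frac{1}{(n-1)!(m-1)!}\partial_a^{n-1}\partial_b^{m-1}\frac{f(a)-f(b)}{a-b}$, and the bound $(1+\lambda)^{2-N}$ is then read off from the hypotheses on $f^{(k)}$. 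You instead iterate the recursion \eqref{def:rec}, harvesting one factor $(1+\lambda)^{-1}$ from every split of a mixed node (valid since $|c_1-c_{m+1}|\ge 2\lambda\ge 1+\lambda$ once $\lambda\ge1$) and terminating each branch with either the two-point mean-value bound $\sup|f'|$ or the one-sided Hermite--Genocchi bound $C(1+\lambda)^{2-m}$; your bookkeeping ($j$ splits leave $m=N-j$ points, so every leaf contributes $(1+\lambda)^{-j}(1+\lambda)^{-(m-2)}=(1+\lambda)^{2-N}$, and all leaves have $m\ge2$ so only derivatives $f^{(1)},\dots,f^{(N-1)}$ covered by the hypothesis are used) is consistent, and the regimes $\lambda\le1$, $\max_i|a_i|\le1$ are correctly set aside via the uniform bound. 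The trade-off: the paper's reduction is shorter once \eqref{feynman2} is accepted and keeps the constant essentially factorial in $N$, while your tree argument avoids both the change of variables and the explicit derivative formula, using only the recursion, symmetry, and the elementary integral representation, at the price of $O(2^N)$ leaves and hence a larger constant $C_N'$.
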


\begin{proof}
  We first prove (\ref{eq:lem:dec2}). Using Feynman's formula
  \begin{align}\nonumber
    \left[a_1,\dots,a_N\right]_f & = (N-1)! \int_{[0,1]^N} \left[
      \mbox{$\sum_{i=1}^N$} c_i a_i , \dots,\mbox{$\sum_{i=1}^N$} c_i
      a_i\right]_f \delta\left(1-\mbox{$\sum_{i=1}^N$} c_i\right)
    \prod_{k=1}^N dc_k \\ &= \int_{[0,1]^N}
    f^{(N-1)}\left(\mbox{$\sum_{i=1}^N$} c_i a_i\right)
    \delta\left(1-\mbox{$\sum_{i=1}^N$} c_i\right) \prod_{k=1}^N
    dc_k \label{feynman}
  \end{align}
  we see that $[a_1,\dots,a_N]$ is uniformly bounded for $N\geq
  2$. Hence it suffices to consider the case $\lambda \geq
  1$. Similarly, a simple change of variables shows that
  \begin{align}\nn
    & \left[a_1,\dots,a_N\right]_f \\ \nn & = \int_0^1 \int_{[0,1]^N}
    f^{(N-1)}\left(\gamma \, \mbox{$\sum_{i=1}^n$} c_i a_i +
      (1-\gamma) \mbox{$\sum_{i=n+1}^N$} c_i a_i \right)\\ & \qquad
    \times \gamma^{n-1} (1-\gamma)^{N-n-1}
    \delta\left(1-\mbox{$\sum_{i=1}^n$}
      c_i\right)\delta\left(1-\mbox{$\sum_{i=n+1}^N$} c_i\right)
    \prod_{k=1}^N dc_k \,d\gamma \label{feynman2}
  \end{align}
  and hence it is sufficient to consider the case $a_1=\dots=a_n=a$
  and $a_{n+1}=\dots=a_N=b$. In this special case, we have
  \begin{equation}
    \left[a,\dots,a,b,\dots, b\right]_f = 
    \frac 1{(n-1)!(m-1)!} \left(\frac{\partial}{\partial a}\right)^{n-1}\left(\frac{\partial}{\partial b}\right)^{m-1}\frac{f(a)-f(b)}{a-b} \,,
  \end{equation}
  where $m=N-n$. Note that $a\leq -\lambda \leq -1$ and $b\geq
  \lambda\geq 1$. The result now follows easily from our assumptions
  on $f$.

  Let $\kappa = \max_i |a_i|$. To prove (\ref{eq:lem:dec1}), we may
  again assume that $\kappa\geq 1$. If $\max_i a_i - \min_i a_i \leq
  \kappa/2$, then either $a_i\geq \kappa/2$ for all $i$ or $a_i\leq
  -\kappa/2$ for all $i$, and the result follows from
  (\ref{feynman}). If, on the other hand, $\max_i a_i -\min_i a_i \geq
  \kappa/2$, the result follows immediately from the definition
  (\ref{def:rec}), with $\min_i a_i$ in place of $a_1$ and $\max_i
  a_i$ in place of $a_N$, using the boundedness of the numerator.
\end{proof}

What the lemma says is that $[a_1,a_2,\dots,a_N]_f$ decays at least as
fast as the inverse of its largest argument. If all the arguments get
large, but in opposite directions, than the decay is at least as fast
as the $(N-2)$th power of the inverse.

Let now $f$ denote the function defined in (\ref{deff}). It is
analytic in the strip $|\im z|< \pi$. This leads to the following
contour integral representation.

\begin{Lemma}\label{lem:int}
  For $R>0$, let $\Gamma_R$ be the contour $\{r - i \tfrac
  \pi{2\beta},\, r\in [-R,R] \}\cup \{-r + i \tfrac \pi{2\beta},\,
  r\in [-R,R] \}$. Then
  \begin{equation}\label{int:rep}
    f( \beta H_\Delta)-f(\beta H_0)  = \lim_{R\to \infty}  \frac 1{2\pi i}\int_{\Gamma_R} f(\beta z) \, \left( \frac 1{z-H_\Delta} - \frac 1{z-H_0} \right)\, dz
  \end{equation}
  where the limit holds in the weak sense, i.e., for expectation
  values with functions in $C_0^\infty(\R^d)\oplus C_0^\infty(\R^d)$.
\end{Lemma}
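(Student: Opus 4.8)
The plan is to test the claimed operator identity against a fixed vector $\Phi\in C_0^\infty(\R^d)\oplus C_0^\infty(\R^d)$ and to reduce it, via the spectral theorem, to a scalar contour identity; general off-diagonal matrix elements then follow by polarization. Let $\mu_\Delta$ and $\mu_0$ denote the finite positive spectral measures of $H_\Delta$ and $H_0$ associated with $\Phi$; both have total mass $\|\Phi\|^2$. Because $W$ is bounded, $A\in C^1$, and $H_\Delta-H_0$ is bounded, the vector $\Phi$ lies in the common operator domain $D(H_0)=D(H_\Delta)$, so $\int_\R(1+\lambda^2)\,d\mu_\bullet(\lambda)<\infty$ for $\bullet\in\{\Delta,0\}$; since $f(\beta\lambda)=-\ln(1+e^{-\beta\lambda})$ grows at most linearly as $\lambda\to-\infty$ and is exponentially small as $\lambda\to+\infty$, it follows that $\langle\Phi|f(\beta H_\bullet)|\Phi\rangle=\int_\R f(\beta\lambda)\,d\mu_\bullet(\lambda)$, and hence also the left side $\langle\Phi|(f(\beta H_\Delta)-f(\beta H_0))|\Phi\rangle$, is well defined.

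For fixed $R$ the contour $\Gamma_R$ is compact and stays at distance $\pi/(2\beta)$ from $\R$, so $\|(z-H_\Delta)^{-1}\|,\|(z-H_0)^{-1}\|\le 2\beta/\pi$ on it and the right-hand integrand is a norm-continuous, operator-norm-bounded function of $z$, hence Bochner integrable. Writing $\langle\Phi|(z-H_\bullet)^{-1}|\Phi\rangle=\int_\R(z-\lambda)^{-1}\,d\mu_\bullet(\lambda)$ and applying Fubini (legitimate on the compact $\Gamma_R$ against the finite measures), the $\Phi$-expectation of the right side equals $\int_\R F_R(\lambda)\,(d\mu_\Delta-d\mu_0)(\lambda)$ with $F_R(\lambda):=\frac1{2\pi i}\int_{\Gamma_R}\frac{f(\beta z)}{z-\lambda}\,dz$. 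Now $f(\beta\,\cdot\,)$ is holomorphic in the strip $|\im z|<\pi/\beta$, hence in a neighbourhood of the closed rectangle $\mathcal{R}_R=[-R,R]\times[-\tfrac\pi{2\beta},\tfrac\pi{2\beta}]$, whose positively oriented boundary is $\Gamma_R$ together with the two vertical segments $V_R^\pm$ at $\re z=\pm R$; so for $|\lambda|<R$ the residue theorem gives $F_R(\lambda)=f(\beta\lambda)-\frac1{2\pi i}\int_{V_R^+\cup V_R^-}\frac{f(\beta z)}{z-\lambda}\,dz$.

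Two things remain: identifying the pointwise limit of $F_R$ and producing a uniform-in-$R$ majorant. For the limit, on $V_R^+$ one has $|f(\beta z)|\le Ce^{-\beta R}$, so that contribution tends to $0$; on $V_R^-$ write $f(\beta z)=\beta z-\ln(1+e^{\beta z})$, the logarithm being $O(e^{-\beta R})$, while $\frac1{2\pi i}\int_{V_R^-}\frac{\beta z}{z-\lambda}\,dz=\beta\cdot\frac1{2\pi i}\int_{V_R^-}\bigl(1+\tfrac\lambda{z-\lambda}\bigr)\,dz\to-\tfrac12$, with a $\lambda$-dependent remainder of size $O(|\lambda|/R)$; hence $F_R(\lambda)\to f(\beta\lambda)+\tfrac12$ for each $\lambda$. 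For the majorant, recombining the two horizontal legs of $\Gamma_R$ into a single integrand shows that the linear-in-$z$ growth of $f(\beta z)$ as $\re z\to-\infty$ cancels between the legs, leaving $|F_R(\lambda)|\le C_\beta\int_\R\frac{1+|\lambda|}{(r-\lambda)^2+\pi^2/(4\beta^2)}\,dr\le C_\beta'(1+|\lambda|)$ uniformly in $R$. Since $1+|\lambda|$ is integrable against $\mu_\Delta+\mu_0$ by the first paragraph, dominated convergence gives
\begin{align*}
  \lim_{R\to\infty}\int_\R F_R\,(d\mu_\Delta-d\mu_0)
  &=\int_\R\bigl(f(\beta\lambda)+\tfrac12\bigr)\,(d\mu_\Delta-d\mu_0)(\lambda)\\
  &=\int_\R f(\beta\lambda)\,(d\mu_\Delta-d\mu_0)(\lambda),
\end{align*}
the constant $\tfrac12$ integrating to zero because $\mu_\Delta$ and $\mu_0$ have equal total mass. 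By the first paragraph this equals $\langle\Phi|(f(\beta H_\Delta)-f(\beta H_0))|\Phi\rangle$, which proves the lemma.

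The main obstacle is precisely the segment $V_R^-$: because $f(\beta z)\sim\beta z$ is unbounded there, its contribution does \emph{not} vanish, and the argument works only because the $R\to\infty$ limit of that contribution is independent of $\lambda$ and is therefore annihilated by the zero-total-mass signed measure $\mu_\Delta-\mu_0$. The one genuinely delicate estimate is then the uniform-in-$R$ bound $|F_R(\lambda)|\le C_\beta(1+|\lambda|)$ needed to pass to the limit under the $\lambda$-integral: it has to be extracted from the cancellation between the two horizontal legs of $\Gamma_R$, since each leg on its own contributes an amount growing like $\beta R$ as $R\to\infty$.
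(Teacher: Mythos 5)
Your argument is correct and is essentially the paper's own proof in spectral-measure language: both reduce \eqref{int:rep} to the scalar limit of $\tfrac1{2\pi i}\int_{\Gamma_R} f(\beta z)(z-\lambda)^{-1}\,dz$, control the $R\to\infty$ limit by a bound of size $C(1+|\lambda|)$ uniform in $R$ (the paper's $C(1+|H_\Delta|)$, used to extend from spectrally localized vectors), and let the $\lambda$-independent constant produced by the left vertical segment cancel --- in your formulation because $\mu_\Delta-\mu_0$ has zero total mass, in the paper's because the same constant appears for $H_\Delta$ and $H_0$. (Your value $+\tfrac12$ of that constant is indeed the correct one for the orientation of $\Gamma_R$ as written, whereas the paper's proof records it as $-\tfrac12$; since it drops out of the difference, this is immaterial.)
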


Since $f$ is unbounded, it is important to take the
difference of the two operators $f(\beta H_\Delta)$ and $f(\beta
H_0)$. The representation (\ref{int:rep}) would not hold for the
individual terms alone.  We shall write, for simplicity,
\begin{equation}\label{sp2}
  f( \beta H_\Delta)-f(\beta H_0)  = \frac 1{2\pi i}\int_\Gamma f(\beta z) \, \left( \frac 1{z-H_\Delta} - \frac 1{z-H_0} \right)\, dz
\end{equation}
where $\Gamma$ is the contour $z= r \pm i\tfrac \pi {2\beta}$, $r\in
\R$. This equality has to be understood as the weak limit
(\ref{int:rep}).

\begin{proof}
  For fixed $\lambda \in \R$, we have
  \begin{equation}
    \lim_{R\to \infty}  \frac 1{2\pi i}\int_{\Gamma_R} f(\beta z) \,  \frac 1{z - \lambda} \, dz = f(\beta\lambda) - \frac 12\,,
  \end{equation}
  and the limit is uniform on bounded sets in $\lambda$.  For
  functions $\phi$ in the range of $\theta(\lambda - |H_\Delta|)$ for
  some $\lambda>0$, this implies that
  \begin{equation}
    \lim_{R\to \infty}  \frac 1{2\pi i}\int_{\Gamma_R} f(\beta z) \,  \left\langle \phi \left| (z - H_\Delta)^{-1} \right| \phi\right\rangle  \, dz =  \left\langle \phi \left | f(\beta H_\Delta) - \tfrac 12 \right| \phi\right\rangle \,.
  \end{equation}
  Since the operators $\int_{\Gamma_R} f(\beta z) (z-H_\Delta)^{-1}dz$
  are bounded, in absolute value, by $C(1+|H_\Delta|)$, uniformly in
  $R$, the limit extends to all functions $\phi$ in the form domain of
  $1+|H_\Delta|$. The same argument applies with $H_0$ in place of
  $H_\Delta$. The form domains of $1+|H_\Delta|$ and $1+|H_0|$ are
  equal since $\Delta$ is bounded, and contain $C_0^\infty(\R^d)\oplus
  C_0^\infty(\R^d)$. This completes the proof.
\end{proof}

The diagonal entries of the $2\times 2$ matrix-valued operator $f(
\beta H_\Delta)-f(\beta H_0) $ are complex conjugates of each
other. This follows from the unitary equivalence (\ref{eq:unit}) and
the fact that $f(-z) = f(z) - z$. The unitary $U$ in (\ref{eq:unit})
interchanges the diagonal entries, hence the upper left entry of $f(
\beta H_\Delta)-f(\beta H_0)$ equals the lower right entry of
\begin{align}\nonumber
  U f(\beta H_\Delta) U^\dagger - U f(\beta H_0)U^\dagger & = f(-
  \beta \bar H_\Delta ) - f(-\beta \bar H_0) \\ & \nonumber = f( \beta
  \bar H_\Delta) - f(\beta \bar H_0) - \beta(\bar H_\Delta - \bar H_0)
  \\ & = \overline{ f( \beta H_\Delta)} -\overline{ f(\beta H_0)} -
  \beta(\bar H_\Delta - \bar H_0) \,.
\end{align}
Since the diagonal entries of $H_\Delta - H_0$ are zero, this proves
the claim.  In particular, the diagonal entries of $f( \beta
H_\Delta)-f(\beta H_0) $ have the same trace, and hence it suffices to
study the upper left diagonal entry, which we denote by
$[\,\cdot\,]_{11}$.

Let $k$ denote the operator
\begin{equation}\label{def:k}
  k = \left(-ih\nabla + h A(x) \right)^2 - \mu - h^2 W(x) \,.
\end{equation}
The resolvent identity and the fact that
\begin{equation}
  \delta:=H_\Delta-H_0 = \left(\begin{array}{cc} 0 & \Delta \\ \bar\Delta  & 0 \end{array}\right)
\end{equation}
is off-diagonal (as an operator-valued $2\times 2$ matrix) implies
that
\begin{equation}
  \left[ \frac 1{z-H_\Delta} - \frac 1{z-H_0} \right]_{11} = I_1 + I_2 + I_3
\end{equation}
where
\begin{equation}\label{def:i1}
  I_1 =  \frac 1{z-k} \Delta \frac 1{z+\bar k} \Delta^\dagger \frac 1{z-k} \,,
\end{equation}
\begin{equation}\label{def:i2}
  I_2  =   \frac 1{z-k} \Delta \frac 1{z+\bar k} \Delta^\dagger \frac 1{z-k} \Delta \frac 1{z+\bar k} \Delta^\dagger \frac 1{z-k} \,,
\end{equation}
and
\begin{equation}\label{def:i3}
  I_3 =  \left[\frac 1 {z-H_\Delta} \right]_{11} \Delta \frac 1{z+\bar k} \Delta^\dagger \frac 1{z-k} \Delta \frac 1{z+\bar k} \Delta^\dagger \frac 1{z-k}\Delta \frac 1{z+\bar k} \Delta^\dagger \frac 1{z-k} \,.
\end{equation}
We shall first give a bound on the contribution of $I_3$ to the
integral in (\ref{sp2}).

\begin{Lemma}\label{lem8}
  \begin{equation}
    \left\| \int_\Gamma f(\beta z)\, I_3 \, dz\right\|_1  \leq C h^{6-d}  \|\psi\|_{H^1(\calC)}^6 \|t\|_6^6\,.
  \end{equation}
\end{Lemma}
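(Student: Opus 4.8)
The plan is to estimate the trace-per-unit-volume norm of $\int_\Gamma f(\beta z) I_3\, dz$ by taking the norm inside the integral, bounding $|f(\beta z)|$ along the contour $\Gamma=\{r\pm i\pi/(2\beta)\}$ by a constant, and controlling the operator $I_3$ pointwise in $z$ using the generalized H\"older inequality \eqref{genholder} for the trace per unit volume. Recall that $I_3$ contains the bounded factor $[(z-H_\Delta)^{-1}]_{11}$, which on the contour satisfies $\|(z-H_\Delta)^{-1}\|_\infty \le C/|\im z| \le C\beta$, so it may simply be pulled out in operator norm. What remains is the product
$$
\Delta \,\frac{1}{z+\bar k}\,\Delta^\dagger\, \frac{1}{z-k}\,\Delta\, \frac{1}{z+\bar k}\,\Delta^\dagger\,\frac{1}{z-k}\,\Delta\, \frac{1}{z+\bar k}\,\Delta^\dagger\, \frac{1}{z-k}\,,
$$
i.e.\ six copies of $\Delta$ (or $\Delta^\dagger$) interleaved with six resolvents of $\pm k$.

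The key step is to distribute these twelve factors across a H\"older product with the right exponents. Each $\Delta$ and each $\Delta^\dagger$ will be measured in $\|\cdot\|_6$, and each resolvent in operator norm $\|\cdot\|_\infty$; since $6\cdot\tfrac16 + 6\cdot 0 = 1$, \eqref{genholder} applies. For the resolvents I would use that $k = (-ih\nabla + hA)^2 - \mu - h^2 W$ and that on the contour $\re(z-k)$ stays bounded away from the spectrum for large $|k|$, so $\|(z\mp \bar k)^{-1}\|_\infty \le C(1+|\re z|)^{-1}$ up to an $h$-independent constant, and crucially $\|(z-k)^{-1}\|_\infty$ decays like $(1+|\re z|)^{-1}$ for large $|\re z|$ while staying $O(1)$ near $\re z = 0$ — enough to make $\int_\Gamma |f(\beta z)| \prod \|(z-k)^{-1}\|_\infty\, |dz|$ converge (we have six resolvent factors, so a decay of $(1+|\re z|)^{-6}$, and $f(\beta z)$ is bounded). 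For the $\Delta$ factors I invoke \eqref{lte}: since $\Delta = -\tfrac h2(\psi(x)t(-ih\nabla) + t(-ih\nabla)\psi(x))$, the triangle inequality and \eqref{lte} give $\|\Delta\|_6 \le C h \cdot h^{-d/6}\|\psi\|_{L^6(\calC)}\|t\|_{L^6(\R^d)} \le C h^{1-d/6}\|\psi\|_{H^1(\calC)}\|t\|_6$, using the Sobolev embedding $H^1(\calC)\hookrightarrow L^6(\calC)$ valid for $d\le 3$. (The same bound holds for $\|\Delta^\dagger\|_6$.) Multiplying the six such factors produces $h^{6-d}\|\psi\|_{H^1(\calC)}^6\|t\|_6^6$, which is exactly the claimed power of $h$; the remaining $z$-integral and operator-norm factors contribute only an $h$-independent constant.

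The main obstacle I anticipate is the bookkeeping for the resolvents on the contour: one must check that $\|(z-k)^{-1}\|_\infty$ genuinely decays in $|\re z|$ uniformly in $h$ (this uses that $k$ is bounded below, $k \ge -\mu - h^2\|W\|_\infty \ge -C$, together with the fact that the contour is at fixed imaginary part $\pm\pi/(2\beta)$), and that the product of six such norms is integrable against the bounded function $f(\beta z)$ — a mild computation, but one that must be done carefully so that the constant is uniform for $\beta$ in compact subintervals of $(0,\infty)$, as required by the theorem. A secondary point is that $I_3$ is not a priori trace class in the ordinary sense; but the six $\|\cdot\|_6$ factors combine via \eqref{genholder} to give finiteness of the $\|\cdot\|_1$ norm (trace per unit volume), so no separate argument is needed. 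The distribution of exponents is the only genuine choice, and $6$ copies in $L^6$ is the natural one given the regularity of $t$ from Proposition~\ref{prop:reg}.
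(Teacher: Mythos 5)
Your overall strategy coincides with the paper's: pull the norm inside the contour integral, bound $[(z-H_\Delta)^{-1}]_{11}$ in operator norm by $O(\beta)$, apply the generalized H\"older inequality with the six $\Delta$-factors in $\|\cdot\|_6$ and the six resolvents in $\|\cdot\|_\infty$, and estimate $\|\Delta\|_6 \le C h^{1-d/6}\|\psi\|_{H^1(\calC)}\|t\|_6$ via \eqref{lte} and the Sobolev embedding. That is exactly the paper's proof skeleton and it produces the correct power $h^{6-d}$.

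However, your justification of the convergence of the $z$-integral rests on two incorrect claims. First, $f(\beta z)=-\ln(1+e^{-\beta z})$ is \emph{not} bounded on $\Gamma$: for $z=r\pm i\pi/(2\beta)$ it grows linearly, $|f(\beta z)|\sim \beta|r|$, as $r\to-\infty$ (it decays exponentially only as $r\to+\infty$). Second, the resolvents do not decay in both directions. Since $\sigma(k)\subset[-\mu-h^2\|W\|_\infty,\infty)$ is unbounded above, for $r\gg 1$ the point $z$ lies at distance only $\pi/(2\beta)$ from $\sigma(k)$, so $\|(z-k)^{-1}\|_\infty$ is merely $O(1)$ there; it decays like $|r|^{-1}$ only for $r\ll -1$. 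Symmetrically, $\|(z+\bar k)^{-1}\|_\infty$ decays only for $r\gg 1$. Thus your claimed $(1+|\re z|)^{-6}$ decay is not available; the true combined decay of the six resolvent norms is $(1+|z|)^{-3}$ (three factors decaying on each side). The lemma still follows, but precisely because this $|z|^{-3}$ decay is faster than $|z|^{-2}$, which is what is needed to beat the linear growth of $|f(\beta z)|$ at $\re z\to-\infty$ — the paper makes this point explicitly. As written, your argument would not survive scrutiny of the region $\re z\to-\infty$, where you have assumed both boundedness of $f$ and a resolvent decay that does not hold; you need the one-sided decay bookkeeping described above to close the estimate.
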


\begin{proof}
  By the triangle inequality
  \begin{equation}
    \left\| \int_\Gamma f(\beta z)\, I_3 \, dz\right\|_1 \leq \int_\Gamma |f(\beta z)| \left\| I_3\right\|_1 |dz| \,.
  \end{equation}
  Using H\"older's inequality (\ref{genholder}) and $|z-H_\Delta|\geq
  \pi/(2\beta)$ for $z\in\Gamma$, we obtain the bound
  \begin{equation}
    \|I_3\|_1 \leq \frac {4\beta} \pi \|\Delta\|_6^6 \left\|(z-k)^{-1}\right\|_\infty^3 \left\|(z+\bar k)^{-1}\right\|_\infty^3\,.
  \end{equation}
  Moreover, since $k\geq -\mu -h^2 \|W\|_\infty$,
  \begin{equation}\label{infb}
    \left\|(z-k)^{-1}\right\|_\infty \leq C \times \left\{ \begin{array}{cl} 1  & \text{for $r\gg 1$} \\ O(|r|^{-1}) & \text{for $r\ll -1$} \end{array}\right.
  \end{equation}
  for $z=r\pm i\pi/(2\beta)$, and hence
  \begin{equation}
    \|I_3\|_1 \leq \frac{C}{1+|z|^3} \|\Delta\|_6^6 \,.
  \end{equation}
  It is important to get a decay faster than $|z|^{-2}$ since we are
  integrating against the function $|\ln(1+e^{-\beta z})|$ which grows
  linearly as $z\to -\infty$. We conclude that
  \begin{equation}
    \int_\Gamma |f(\beta z)| \,\left\|   I_3 \right\|_1 \, |dz|  \leq C  \|\Delta\|_6^6\,.
  \end{equation}
  We can further bound $\|\Delta\|_6 \leq (2\pi)^{-d/6}\|\psi\|_{6}
  h^{1-d/6} \|t\|_6$, according to the triangle inequality and
  (\ref{lte}). Finally, $\|\psi\|_6 \leq C \|\psi\|_{H^1(\calC)}$ for
  $d\leq 3$, by Sobolev's inequality for functions on the torus. This
  completes the proof.
\end{proof}

We continue with the bound on $I_2$. Recall the definition of the
divided differences (\ref{def:re})--(\ref{def:rec}).

\begin{Lemma}\label{lem:aidef} The operator $\int_\Gamma f(\beta z)
  I_2 dz$ is locally trace class, and
  \begin{align}\nn
    &\Biggl| \frac 1{2\pi i} \Tr \int_\Gamma f(\beta z) \, I_2 \, dz -
    \\ \nn & \ h^{4-d} \sum_{p_1,p_2,p_3 \in (2\pi \Z)^d} \widehat
    \psi(p_1) \widehat \psi^*(p_2) \widehat\psi(p_3) \widehat
    \psi^*(-p_1-p_2-p_3) F(hp_1,hp_2,hp_3) \Biggl| \\ & \leq C h^{5-d}
    \|t\|_6^4 \|\psi\|^4_{H^1(\calC)} \label{cg6}
  \end{align}
  where the function $F$ is given by
  \begin{align}\nonumber
    & F(p_1,p_2,p_3) \\ &= \beta^4\int_{\R^d} \prod_{i=1}^4 \frac{
      t(q+\sum_{j=1}^{i-1} p_i) + t(q+\sum_{j=1}^{i}p_j)}{2}
    \left[a_3,a_3,a_1,-a_2,-a_0\right]_f \, \frac{dq}{(2\pi)^d}
    \label{defF3}
  \end{align}
  and where we introduced the notation $a_i = \beta ( (q+\sum_{j=1}^i
  p_i)^2-\mu)$ and $p_4:=-p_1-p_2-p_3$.
\end{Lemma}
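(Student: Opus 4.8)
The plan is to evaluate $\Tr\int_\Gamma f(\beta z)\,I_2\,dz$ exactly in a mixed momentum representation and then to control the errors with the trace-per-unit-volume inequalities of Section~\ref{sec:prel} and the divided-difference bounds of Lemma~\ref{lem:dec}. First I would expand $\psi$ and $\overline\psi$ in Fourier series on $\calC$, so that in the representation dual to $-ih\nabla$ each factor $\Delta$ or $\Delta^\dagger$ in $I_2$ (see (\ref{int:delta})) acts as a momentum shift by $hp$, $p\in(2\pi\Z)^d$, with amplitude $-h\,\widehat\psi(p)\cdot\tfrac12(t(q)+t(q+hp))$, the symmetrized $t$ reflecting the factor $\psi(x)+\psi(y)$ in the kernel. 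Taking the trace per unit volume through the Floquet decomposition forces the four shifts to sum to zero; after relabelling this yields the sum over $p_1,p_2,p_3$ with $p_4=-p_1-p_2-p_3$, an overall factor $h^{-d}$ (the density of states in the rescaled variable) and an integral $\int_{\R^d}dq/(2\pi)^d$, while the four amplitudes produce $h^4$ and the $t$-product in (\ref{defF3}).

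For the $z$-integral I would use the contour representation (\ref{83}) of divided differences: after $z\mapsto z/\beta$ the five resolvent factors $(z-k)^{-1}$, $(z+\bar k)^{-1},\dots$ combine into $\beta^{4}[\,\cdot\,]_f$ evaluated at their five poles, which are $\beta$ times $\pm((q+\text{partial sum})^2-\mu)$, with one argument occurring twice because the first and last factor $(z-k)^{-1}$ coincide under the cyclic trace. Tracking the shifts and signs reproduces $[a_3,a_3,a_1,-a_2,-a_0]_f$ with the $a_i$ and $p_4$ of (\ref{defF3}) --- provided $k$ and $\bar k$ may be replaced by their free parts $-h^2\nabla^2-\mu$, the cost of which is the $h^{5-d}$ error below. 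That (\ref{sp2}) may be used here, with $I_2$ one of the three pieces of $[(z-H_\Delta)^{-1}-(z-H_0)^{-1}]_{11}$, is guaranteed by Lemma~\ref{lem:int}.

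Two estimates make this rigorous. For local trace-classness and the interchange of $\Tr$, $\int_\Gamma$ and the Fourier sum, I would argue as in Lemma~\ref{lem8}: H\"older's inequality (\ref{genholder}) bounds $\|I_2\|_1$ by the four $\Delta$'s in their $6$-norm (each $\le Ch^{1-d/6}\|\psi\|_{L^6(\calC)}\|t\|_6$ by (\ref{lte})), one resolvent in its $3$-norm per unit volume (for the free operator this is $h^{-d/3}$ times an $L^3(\R^d)$-norm of $(z-q^2+\mu)^{-1}$ that decays in $\re z$, cf.\ (\ref{infb})) and the other four in operator norm ($\le2\beta/\pi$ on $\Gamma$), giving $\|I_2\|_1\le Ch^{4-d}\omega(z)$ with $\int_\Gamma|f(\beta z)|\,\omega(z)\,|dz|<\infty$. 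For absolute convergence of the $p$-sum and boundedness of $F$: by Lemma~\ref{lem:dec} the five divided-difference arguments split, for large $q$, into a group $\ge\lambda$ and a group $\le-\lambda$ with $\lambda\sim q^2$, so $|[a_3,a_3,a_1,-a_2,-a_0]_f|\le C(1+q^2)^{-3}$, and H\"older with $t\in L^6$ (Proposition~\ref{prop:reg}, $d=3$) gives $|F|\le C\|t\|_6^4$; then $\sum_{p_1+\dots+p_4=0}\prod_i|\widehat\psi(p_i)|=\int_\calC|\widetilde\psi(x)|^4\,dx\le C\|\psi\|_{H^1(\calC)}^4$ by the torus Sobolev inequality (as in (\ref{649})--(\ref{650})), where $\widehat{\widetilde\psi}=|\widehat\psi|$. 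Finally, the $h^{5-d}$ error is exactly the replacement of $k,\bar k$ by $-h^2\nabla^2-\mu$: writing $k=(-h^2\nabla^2-\mu)+r_h$ with $r_h=h\bigl((-ih\nabla)\cdot A+A\cdot(-ih\nabla)\bigr)+h^2(A^2-W)$ and $\|r_h(1-h^2\nabla^2)^{-1/2}\|_\infty=O(h)$ (Assumption~\ref{as0}), one expands each resolvent by one step of the resolvent identity and re-runs the above bounds, absorbing the extra $(1-h^2\nabla^2)^{1/2}$ into a neighbouring $\Delta$ (replacing $\|\psi\|_{L^6}$ by $\|\psi\|_{H^1(\calC)}$); this gains exactly one power of $h$ and leaves the error as $Ch^{5-d}\|t\|_6^4\|\psi\|_{H^1(\calC)}^4$.

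The main obstacle, I expect, is twofold: first the momentum-and-sign bookkeeping of the cyclic trace, which has to be done precisely enough that the five poles really organise into $[a_3,a_3,a_1,-a_2,-a_0]_f$ with the symmetrized $t$-product of (\ref{defF3}); and, more substantially, carrying out every estimate \emph{uniformly} under the minimal regularity available for $t$ (in $d=3$ only $t\in L^6$ together with (\ref{t:as1})--(\ref{t:as2})), which rules out pointwise arguments, forces exclusive use of the H\"older and Lieb--Thirring inequalities (\ref{genholder}), (\ref{lte}) for the trace per unit volume, and requires establishing local trace-classness \emph{before} rearranging the integral rather than after.
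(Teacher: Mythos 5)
Your overall architecture is the same as the paper's: one step of the resolvent identity to replace $k,\bar k$ by $k_0=-h^2\nabla^2-\mu$ at cost $O(h^{5-d})$, then an explicit evaluation of the all-free term (Fourier expansion of $\psi$, momentum conservation from the trace per unit volume, contour integral producing the divided difference $[a_3,a_3,a_1,-a_2,-a_0]_f$), with local trace-classness and convergence controlled by H\"older for $\Tr$, (\ref{lte}), (\ref{infb})--(\ref{pb}) and Lemma~\ref{lem:dec}; your trace-norm bookkeeping (four $6$-norms for the $\Delta$'s, one resolvent in a $3$-norm, the rest in operator norm, total $h^{4-d}$) is exactly the paper's, and the paper indeed leaves the final explicit computation to the reader.

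However, your mechanism for the $O(h^{5-d})$ error term has a genuine flaw: you propose to write $k-k_0=r_h(1-h^2\nabla^2)^{-1/2}\cdot(1-h^2\nabla^2)^{1/2}$ and to ``absorb $(1-h^2\nabla^2)^{1/2}$ into a neighbouring $\Delta$, replacing $\|\psi\|_{L^6}$ by $\|\psi\|_{H^1(\calC)}$.'' First, the weight cannot simply be moved across the full resolvent $(z-k)^{-1}$ sitting between $k-k_0$ and $\Delta$, since $(1-h^2\nabla^2)^{1/2}$ does not commute with $k$. Second, and more importantly, the resulting bound $\|(1-h^2\nabla^2)^{1/2}\Delta\|_6\lesssim h^{1-d/6}\|t\|_6\|\psi\|_{H^1(\calC)}$ is not available under the hypotheses: via (\ref{lte}) it would require $(1+|q|)\,t(q)\in L^6(\R^d)$, which is not among (\ref{t:as1})--(\ref{t:as2}) (in $d=3$ one only has $t$ and its derivatives in $L^6$ plus the weighted $L^2$ bound (\ref{propt2})), while commuting the weight past $\psi$ instead produces $\|\nabla\psi\|_{L^6}$, which is not controlled by $\|\psi\|_{H^1(\calC)}$ in $d=3$ --- and the very error bound you are proving only allows $\|t\|_6$ and $\|\psi\|_{H^1}$. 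The paper avoids this by never putting the derivative from (\ref{difk}) on $\Delta$: the gradient part of $k-k_0$ stays paired with the adjacent \emph{free} resolvent, using $\bigl\|\tfrac1{z-k_0}\bigl(-ih^2\nabla\cdot A-ih^2A\cdot\nabla\bigr)\bigr\|_\infty\le Ch\sqrt{1+|r|}$ for $r\gg1$ and $Ch/\sqrt{1+|r|}$ for $r\ll-1$ (see (\ref{resb1})--(\ref{resb2})), and then checks via (\ref{infb}), (\ref{pb}) that the combined $z$-dependence is still integrable against $f(\beta z)$ (exponential decay for $\re z>0$ kills the $\sqrt{1+|r|}$ growth; for $\re z<0$ one retains decay faster than $|r|^{-2}$), leading to (\ref{i2rem}). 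Your proposal needs this replacement of the absorption step; the rest then goes through as you describe.
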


One can actually show that (\ref{cg6}) holds with $h^{6-d}$ in place
of $h^{5-d}$ in the error term. This requires substantially more
effort, however, and hence we omit the proof for the sake of brevity.

\begin{proof}
  The resolvent identity reads
  \begin{equation}\label{res}
    \frac 1{z-k} = \frac 1{z-k_0} + \frac 1{z-k_0} \left( k - k_0  \right) \frac 1{z-k} \,, 
  \end{equation}
  where
  \begin{equation}\label{def:k0}
    k_0 = -h^2 \nabla^2 - \mu\,,
  \end{equation}
  and hence
  \begin{equation}\label{difk}
    k-k_0 =  h^2 W(x) + h^2 A(x)^2 - i h^2 \nabla\cdot A(x) - i h^2 A(x)\cdot \nabla \,.
  \end{equation}
  We apply this to the first factor in $I_2$ in (\ref{def:i2}). Using
  again the H\"older inequality (\ref{genholder}), we can bound
  \begin{align}\nonumber
    & \left\| \left(\frac 1{z-k}-\frac 1{z-k_0}\right) \Delta \frac
      1{z+\bar k} \Delta^\dagger \frac 1{z-k} \Delta \frac 1{z+\bar k}
      \Delta^\dagger \frac 1{z-k} \right\|_1 \\ & \leq\left\|\frac
      1{z-k_0} \left( k-k_0\right) \right\|_\infty \|\Delta\|_6^4
    \|(z-k)^{-1}\|_\infty^3 \|(z+\bar k)^{-1}\|_6^2 \,. \label{f1}
  \end{align}

  Using our assumptions on $A$ and $W$, it is easy to see that, for $z
  = r \pm i \pi/(2\beta)$,
  \begin{equation}\label{resb1}
    \left\|\frac 1{z-k_0} \left(- i h^2 \nabla A(x) - i h^2 A(x)\nabla\right) \right\|_\infty \leq C h \times  \left\{ \begin{array}{cl} \sqrt{1+|r|}  & \text{for $r\gg 1$} \\ \frac 1{\sqrt{1+|r|}} & \text{for $r\ll -1$} \end{array}\right. 
  \end{equation}
  and also
  \begin{equation}\label{resb2}
    \left\|\frac 1{z-k_0} \left( h^2 W(x) + h^2 A(x)^2\right) \right\|_\infty \leq C h^2 \times  \left\{ \begin{array}{cl} 1  & \text{for $r\gg 1$} \\ \frac 1{1+|r|} & \text{for $r\ll -1$.} \end{array}\right. 
  \end{equation}
  As in the proof of Lemma~\ref{lem8}, we can use (\ref{infb}) to
  bound $\|(z-k)^{-1}\|_\infty$. Moreover, it is not difficult to see
  that
  \begin{equation}\label{pb}
    \left\|(z-k)^{-1}\right\|_p \leq C\, h^{-d/p} \times \left\{ \begin{array}{cl} r^{(d-2)/(2p)}  & \text{for $r\gg 1$} \\ |r|^{-1+d/(2p)} & \text{for $r\ll -1$} \end{array}\right.
  \end{equation}
  for $d/2<p\leq \infty$, generalizing (\ref{infb}). This follows from
  noting that $k$ can be replaced by $k_0$ for a bound, since their
  spectrum agrees up to $o(1)$. With $k_0$ in place of $k$, the result
  follows from evaluating the corresponding integral. In case $k$ is
  replaced by $-\bar k$, a similar bound holds, with the estimates for
  $r\gg 1$ and $r\ll -1$ interchanged.

  For $d=3$, we hence end up with a function that decays like
  $|r|^{-10/3}$ for negative $r$ and $r^{-1}$ for positive $r$. Since
  we integrate against a function that decays exponentially for
  positive $r$ and increases linearly for negative $r$, the remaining
  contour integral is finite. We conclude that (\ref{f1}), multiplied
  by $f(\beta z)$ and integrated over $\Gamma$, is bounded by $C
  h^{1-d/3}\|\Delta\|_6^4$. As in the proof of Lemma~\ref{lem8}, we
  can bound $\|\Delta\|_6 \leq C \|\psi\|_{H^1(\calC)} h^{1-d/6}
  \|t\|_6$. We have thus obtained a bound on the error made by
  replacing the first factor $(z-k)^{-1}$ in $I_2$ by $(z-k_0)^{-1}$.

  In exactly the same way we proceed with the remaining factors
  $(z-k)^{-1}$ and $(z+\bar k)^{-1}$ in $I_2$. The only difference is
  that $k$ might now be replaced by $k_0$ in the factors we have
  already treated, but this does not affect the bounds. Also the
  number of $+$ and $-$ terms is different, but we still get a decay
  of at least $|r|^{-7/3}$ for negative $r$, which is sufficient for
  the contour integral to converge.

  The final result is that
  \begin{align}\nn
    & \left\| \int_\Gamma f(\beta z) \left[ I_2 - \frac 1{z-k_0}
        \Delta \frac 1{z+k_0} \Delta^\dagger \frac 1{z-k_0} \Delta
        \frac 1{z+k_0} \Delta^\dagger \frac 1{z-k_0} \right] \, dz
    \right\|_1 \\ & \leq O( h^{5-d}) \|\psi\|^4_{H^1(\calC)}
    \,. \label{i2rem}
  \end{align}
  Proceeding as above, one shows that the remaining term on the left
  side of (\ref{i2rem}) is trace class.  An explicit computation of
  the trace per unit volume yields the desired result.
\end{proof}

Let us now look at the behavior of $F(p_1,p_2,p_3)$ for small
arguments. We have
\begin{equation}
  \left[a,a,a,-a,-a\right]_f  = \frac 1{16 a} g_1(a)
\end{equation}
with $g_1$ defined in (\ref{defg1}). With $a=\beta(q^2-\mu)$ we thus
have
\begin{equation}\label{f000}
  F(0,0,0) =  \frac{\beta^3}{16} \int_{\R^d} t(q)^4 \, \frac{g_1(\beta(q^2-\mu))}{q^2-\mu} \, \frac{dq}{(2\pi)^d} \,.
\end{equation}

\begin{Lemma} 
  For some constant $C$ depending on the $L^6$ norm of $t$ and its
  derivatives up to order four,
  \begin{equation}
    \left| F(p_1,p_2,p_3) - F(0,0,0)\right| \leq C \left( p_1^2 + p_2^2 + p_3^2\right)\,.
  \end{equation}
\end{Lemma}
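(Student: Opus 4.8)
The plan is to reduce the estimate to two facts: (a) $F$ is even, i.e.\ $F(-p_1,-p_2,-p_3)=F(p_1,p_2,p_3)$, so that $\nabla F(0)=0$; and (b) $F$ is $C^{1,1}$ near the origin, with a Hessian bound in terms of the $L^6$-norms of $t$ and its derivatives up to order four. Granting both, Taylor's theorem gives $F(p)-F(0)=\int_0^1\bigl(\nabla F(sp)-\nabla F(0)\bigr)\cdot p\,ds=O(|p_1|^2+|p_2|^2+|p_3|^2)$, which is the claim; note $F(0,0,0)$ is finite since by (\ref{f000}) it equals a multiple of the convergent integral defining $B_3$.

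For (a): write $P_i=\sum_{j=1}^i p_j$ (so $P_0=P_4=0$) and $a_i=\beta\bigl((q+P_i)^2-\mu\bigr)$. In the $q$-integral (\ref{defF3}) for $F(-p_1,-p_2,-p_3)$, substitute $q\mapsto-q$. Since the momenta are negated, each $P_i$ is negated and $q+P_i\mapsto-(q+P_i)$; hence every $a_i$ is left unchanged, and therefore, divided differences being symmetric functions of their arguments, the factor $[a_3,a_3,a_1,-a_2,-a_0]_f$ is literally unchanged. Moreover, since $t$ is reflection-symmetric, each factor $t(q+P_{i-1})+t(q+P_i)$ in (\ref{defF3}) is unchanged as well. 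The Jacobian of $q\mapsto-q$ being $1$, we get $F(-p_1,-p_2,-p_3)=F(p_1,p_2,p_3)$.

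For (b): using the contour representation (\ref{83}), $[a_3,a_3,a_1,-a_2,-a_0]_f=\frac1{2\pi i}\int_\Gamma f(z)\,(z-a_3)^{-2}(z-a_1)^{-1}(z+a_2)^{-1}(z+a_0)^{-1}\,dz$ with $\Gamma$ a bounded contour in the analyticity strip $|\im z|<\pi$ of $f$, so the integrand of (\ref{defF3}) is smooth in $(p,q)$; the point is to justify differentiating twice under the integral sign, i.e.\ to bound the resulting $q$-integrals uniformly for $p$ near $0$. A $p$-derivative hitting a factor $t(q+P_i)$ yields a derivative of $t$ and no growth in $q$, while one hitting a resolvent $(z-a_i)^{-1}$ yields $2\beta(q+P_i)_\bullet(z-a_i)^{-2}=\partial_{q_\bullet}(z-a_i)^{-1}$. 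The key step is to use this identity to rewrite every $q$-linear factor produced this way as a $q$-derivative and then integrate by parts in $q$, so that after the two $p$-differentiations all such derivatives land on the $t$-factors --- raising their order to at most four --- and no polynomial growth in $q$ is ever produced. What remains is a finite sum of integrals of the schematic form $\int_{\R^d}\bigl(\prod_i\partial^{\gamma_i}t(q+P_i)\bigr)\,[\,\cdots\,]_f\,dq$, where $[\,\cdots\,]_f$ is a divided difference of $f$ in five, six or seven arguments, all of the form $\beta\bigl((q+P_\bullet)^2-\mu\bigr)$; by Lemma~\ref{lem:dec} it is $O\bigl((1+\beta q^2)^{-1}\bigr)$, and this decay together with H\"older's inequality and $\partial^\gamma t\in L^6(\R^3)$ for $|\gamma|\le4$ (Proposition~\ref{prop:reg}) bounds each such integral by the asserted constant, uniformly in $p$. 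The cases $d=1,2$ are identical, with $6$ replaced by $2p/(p-1)$ from Assumption~\ref{as0}.

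I expect (b) to be the main obstacle: a naive estimate fails because two derivatives on the resolvent factors bring down a factor growing like $|q|^2$, which is not integrable against $\prod_i|t(q+P_i)|$ given only $L^6$ control on $t$ and the $O\bigl((1+q^2)^{-1}\bigr)$ decay of the divided differences; it is precisely the integration-by-parts reorganization --- trading these powers of $q$ for additional derivatives of $t$ --- that makes the argument work, and this is why the final constant involves derivatives of $t$ up to order four.
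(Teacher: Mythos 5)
Your overall skeleton (evenness of $F$ so the linear terms drop out, a second-order estimate in $p$, H\"older with four $L^6$ factors of $t$) matches the paper, and your part (a) is correct. The genuine gap is the key step of (b). When you rewrite $2\beta(q+P_i)_\bullet(z-a_i)^{-2}=\partial_{q_\bullet}(z-a_i)^{-1}$ and integrate by parts in $q$, the derivative does not land only on the $t$-factors: it also hits the remaining resolvent factors, each producing a fresh factor $2\beta(q+P_j)_\bullet(z-a_j)^{-2}$; and since $\partial_{p_j}$ differentiates only the factors with index $\geq j$ (never, e.g., the $a_0$-resolvent), the resolvent derivatives are never a total $q$-derivative of the whole resolvent product, so iterating the trick merely cycles the terms around. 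Consequently $\partial_p^2F$ cannot be reduced to integrals of the schematic form $\int(\prod_i\partial^{\gamma_i}t(q+P_i))[\cdots]_f\,dq$ with no polynomial prefactor: terms with a factor of size $|q|$ or $|q|^2$ multiplying a divided difference in six or seven arguments survive. As you yourself note, such terms are not controlled by the decay (\ref{eq:lem:dec1}) alone: for $d=3$ the four $t$-type factors give an $L^{3/2}$ function, so the remaining factor must lie in $L^3$, and $|q|\,(1+\beta q^2)^{-1}\sim|q|^{-1}$ already fails there. So the proof of (b) collapses as written.

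The missing ingredient is the second estimate of Lemma~\ref{lem:dec}, namely (\ref{eq:lem:dec2}), which you never invoke; note the arguments are $a_3,a_3,a_1,-a_2,-a_0$, carrying both signs (not all of the form $\beta((q+P_\bullet)^2-\mu)$ as you wrote), so for $|p|$ bounded and $|q|$ large they split into a group $\geq\lambda$ and a group $\leq-\lambda$ with $\lambda\sim\beta q^2$, and an $N$-argument divided difference decays like $(1+\beta q^2)^{-(N-2)}$. Thus every extra argument created by a $p$-derivative gains $(1+\beta q^2)^{-1}$, which beats the accompanying growth $2\beta(q+P_i)$; with this, the ``naive'' estimate you dismissed does work and no integration by parts is needed. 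This is exactly the paper's route: first show $F$ is bounded via H\"older and (\ref{eq:lem:dec1}) (which also disposes of $|p_i|\geq C$, a case your Taylor argument ``near the origin'' leaves open), then expand $t(q+p)$ to second order and adjoin arguments to the divided difference via (\ref{iter}), absorbing the factors $\beta(2p\cdot q+p^2)$ with (\ref{eq:lem:dec2}); the terms linear in $p$ vanish by reflection symmetry of $t$, which is your observation (a) in integrated form.
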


\begin{proof}
  We will first show that $F$ is bounded. Using H\"older's inequality,
  we can bound
  \begin{equation}
    |F(p_1,p_2,p_3)| \leq \beta^4 \prod_{i=1}^4 \frac{S_i + S_{i+1}}{2}
  \end{equation}
  where $S_5:= S_1$ and
  \begin{equation}
    S_i^4 =  \int_{\R^d}  \left| t(q+\mbox{$\sum_{j=1}^{i-1}$}p_i)\right|^4 \left| \left[a_3,a_3,a_1,-a_2,-a_0\right]_f  \right| \frac{dq}{(2\pi)^d}
  \end{equation}
  for $1\leq i\leq 4$.  The last factor in the integrand is bounded by
  a constant times $(1+ \beta (q+\sum_{j=1}^{i-1} p_i)^2)^{-1}$, for
  any $1\leq i \leq 4$, as the bound (\ref{eq:lem:dec1}) from
  Lemma~\ref{lem:dec} shows.  A further application of H\"older's
  inequality thus shows that $S_i$ is bounded by $\|t\|_6$.

  It is thus sufficient to consider the case $|p_i|\leq C$ for all
  $i\in\{1,2,3\}$. We write
  \begin{equation}
    t(q+p) = t(q) + p\cdot \nabla t(q) + \int_{0}^1 (p\cdot \nabla)^2 t(q+ s p) (1-s) ds
  \end{equation}
  and also, using partial fractions (or, equivalently,
  (\ref{def:rec})),
  \begin{align}\nn
    &\left[ \beta((q+p)^2-\mu), a,b,c,d\right]_f \\ & = \left[\beta(
      q^2-\mu), a,b,c,d\right]_f + \beta (2p\cdot q + p^2) \left[
      \beta( (q+p)^2-\mu),\beta(q^2-\mu),
      a,b,c,d\right]_f\,.\label{iter}
  \end{align}
  We expand all these factors to second order in $p$ and, in
  particular, iterate (\ref{iter}) once more. Using the fact that the
  derivatives of $t$ are in $L^6$, as well as the decay bound
  (\ref{eq:lem:dec2}) of Lemma~\ref{lem:dec}, we can proceed as above,
  using H\"older's inequality, to arrive at the result. Note that all
  terms linear in $p$ integrate to zero since $t$ is
  reflection-symmetric.
\end{proof}

We have
\begin{equation}
  \sum_{p_1,p_2,p_3 \in (2\pi \Z)^d}  \widehat \psi(p_1) \widehat \psi^*(p_2) \widehat\psi(p_3) \widehat \psi^*(-p_1-p_2-p_3) = \int_\calC |\psi(x)|^4 dx \,.
\end{equation}
With the aid of the Schwarz inequality for the $p_1$ sum and Sobolev's
inequality, we can bound, similarly to (\ref{649})--(\ref{650}),
\begin{equation}\label{837}
  \sum_{p_1,p_2,p_3 \in (2\pi \Z)^d} p_1^2 \left| \widehat \psi^*(p_1) \widehat \psi^*(p_2) \widehat\psi(p_3) \widehat \psi(-p_1-p_2-p_3)\right| \leq C \|\psi\|_{H^2} \|\psi\|_{H^1}^3 \,.
\end{equation}
Equal bounds hold with $p_1^2$ replaced by $p_2^2$ and $p_3^2$,
respectively. Hence we arrive at the result
\begin{align}\nn
  & \left| \frac 1{2\pi i} \Tr \int_\Gamma f(\beta z)\, I_2 \, dz -
    h^{4-d}\, F(0,0,0) \, \int_\calC |\psi(x)|^4\,dx \right| \\ & \leq
  C h^{5-d} \|\psi\|_{H^1(\calC)}^4 + C h^{6-d} \ \|\psi\|_{H^2}
  \|\psi\|_{H^1}^3 \,,
\end{align}
with $F(0,0,0)$ given in (\ref{f000}). The term involving $F(0,0,0)$
gives rise to the last term in (\ref{def:e2}).

We are left with examining the contribution of $I_1$ to the integral
in (\ref{sp2}).

\begin{Lemma}\label{lem:i1a}
  The operator $\int_\Gamma f(\beta z) I_1 dz$ is locally trace class.
\end{Lemma}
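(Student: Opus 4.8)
The goal is to show that $\int_\Gamma f(\beta z)\,I_1\,dz$, which \emph{a priori} is only defined weakly, is a periodic operator whose product with the characteristic function of a unit cube is trace class; equivalently, that its trace-per-unit-volume norm $\|\cdot\|_1$ is finite. For each fixed $z\in\Gamma$ this is immediate from the generalized H\"older inequality (\ref{genholder}): with the resolvent bound (\ref{pb}) (applied with some exponent $d/2<s<\infty$) and $\|\Delta\|_6\le Ch^{1-d/6}\|\psi\|_{H^1(\calC)}\|t\|_6<\infty$ — which holds by (\ref{lte}), Sobolev's inequality on the torus and Proposition~\ref{prop:reg} — one gets $\|I_1\|_1<\infty$. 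The real issue is the integrability in $z$: since $|f(\beta z)|$ grows linearly as $\re z\to-\infty$, the naive estimate $\int_\Gamma|f(\beta z)|\,\|I_1\|_1\,|dz|$ diverges, because the H\"older bound on $\|I_1\|_1$ decays only like $|\re z|^{-1}$ there. The plan is therefore \emph{not} to bound $\|I_1\|_1$ pointwise in $z$, but to use that the $z$-integral of a product of resolvents is a divided difference, together with the mild momentum decay of $t$ recorded in (\ref{propt2}).

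First I would use the resolvent identity (\ref{res}), with (\ref{difk}), to replace each factor $(z-k)^{-1}$ in $I_1$ by the free resolvent $(z-k_0)^{-1}$, $k_0=-h^2\nabla^2-\mu$ as in (\ref{def:k0}), one factor at a time, and likewise $(z+\bar k)^{-1}$ by $(z+k_0)^{-1}$. This expresses $\int_\Gamma f(\beta z)\,I_1\,dz$ as a ``leading'' term, in which all resolvents are free, plus finitely many error terms, each carrying at least one additional factor $(z-k_0)^{-1}(k-k_0)$. For the error terms the crude approach suffices, but one must be careful: the potential part $h^2W+h^2A^2$ of $k-k_0$ pairs with a free resolvent to give an operator of norm $O(h^2/|\re z|)$ for $\re z\ll-1$ by (\ref{resb2}), while the first-order part pairs — with a derivative moved onto an adjacent resolvent — to give an operator of norm $O(h/|\re z|^{1/2})$ by (\ref{resb1}); combining these gains with the $p$-norm estimates (\ref{pb}), and iterating the resolvent expansion to sufficiently high order, one obtains enough decay in $\re z$ for the contour integral of every error term to converge, exactly as in the proofs of Lemmas~\ref{lem8} and~\ref{lem:aidef}. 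The commutators $[(z-k_0)^{-1},\psi]=(z-k_0)^{-1}h^2(\Delta\psi+2\nabla\psi\cdot\nabla)(z-k_0)^{-1}$ that occur when the bounded multiplications $\psi(x)$ (note $\psi\in L^\infty$ since $\psi\in H^2_{\rm loc}$ and $d\le3$) are moved past free resolvents are of exactly the same type and are absorbed into the error terms.

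For the leading term I would write $\Delta=-\tfrac h2(\psi(x)t(-ih\nabla)+t(-ih\nabla)\psi(x))$ and, using the commutator estimates just mentioned, move all multiplications $\psi(x),\bar\psi(x)$ to adjacent positions so that they combine into $|\psi(x)|^2$. Up to a factor $h^2|\psi(x)|^2$, bounded in operator norm by $h^2\|\psi\|_\infty^2$, the leading term then reduces to the Fourier multiplier
\[
  \frac{1}{2\pi i}\int_\Gamma f(\beta z)\,(z-k_0)^{-1}\,t(-ih\nabla)^2\,(z+k_0)^{-1}(z-k_0)^{-1}\,dz\,,
\]
whose symbol equals $t(hq)^2$ times the divided difference of $z\mapsto f(\beta z)$ at the three nodes $h^2q^2-\mu,\,-(h^2q^2-\mu),\,h^2q^2-\mu$. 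By Lemma~\ref{lem:dec}, eq.~(\ref{eq:lem:dec2}) (with $N=3$, the two outer nodes and the middle node in opposite directions), this divided difference is bounded by $C\beta^2/(1+\beta|h^2q^2-\mu|)$, so after the substitution $q\mapsto q/h$ the $L^1(\R^d)$ norm of the symbol is controlled by $C\beta^2h^{-d}\int_{\R^d}t(q)^2(1+\beta|q^2-\mu|)^{-1}\,dq$, which is finite because $t\in L^6\subset L^2_{\rm loc}$ and $\int_{\R^d}t(q)^2(1+q^2)^{-1}\,dq<\infty$ by Proposition~\ref{prop:reg}. A Fourier multiplier with symbol in $L^1(\R^d)$ has finite trace-per-unit-volume norm, and multiplication by the bounded function $|\psi(x)|^2$ preserves this. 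Hence the leading term, and with it $\int_\Gamma f(\beta z)\,I_1\,dz$, is locally trace class.

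I expect the main obstacle to be precisely the bookkeeping in the second step: one has to verify that every error term produced by the (iterated) resolvent and commutator expansions decays in $\re z$ strictly faster than $|\re z|^{-1}$ as $\re z\to-\infty$ — so that it beats the linear growth of $|f(\beta z)|$ — which requires combining the exponents $(d-2)/(2p)$ and $-1+d/(2p)$ in (\ref{pb}) with the half-power gain in (\ref{resb1}) in a somewhat delicate way, and in particular deciding how many times to iterate the resolvent expansion and how to distribute the H\"older exponents. The leading term, by contrast, is handled cleanly once the divided-difference structure is recognized and the decay estimate of Proposition~\ref{prop:reg} is invoked.
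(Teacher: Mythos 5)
Your treatment of the all-free ``leading'' term is essentially the paper's treatment of its term $B_1^a$: do the $z$-integral first, recognize the divided difference $[a,-a,a]_f$ in the symbol, bound it by Lemma~\ref{lem:dec}, and use $\int t(q)^2(1+q^2)^{-1}dq<\infty$ from Proposition~\ref{prop:reg} (the paper implements this by expanding $\psi$ in a Fourier series and using $\sum_p|\widehat\psi(p)|\le C\|\psi\|_{H^2}$ rather than by commuting $\psi$ and $\bar\psi$ together; your route additionally needs control of commutators such as $[\psi,t(-ih\nabla)]$ in local Schatten norms, which is not obviously available under the stated assumptions on $t$). The genuine gap is in your plan for the error terms. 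First, your convergence criterion is too weak: since $|f(\beta z)|$ grows linearly as $\re z\to-\infty$, the trace norm of an integrand must decay faster than $|\re z|^{-2}$, not $|\re z|^{-1}$ (cf.\ the remark in the proof of Lemma~\ref{lem8}). Second, and more seriously, the intermediate terms with exactly one or two insertions of $k-k_0$ (resp.\ $k_0-\bar k$) and otherwise free resolvents --- e.g.
\begin{equation*}
  \int_\Gamma f(\beta z)\,\frac 1{z-k_0}\,\Delta\,\frac 1{z+k_0}\,(k_0-\bar k)\,\frac 1{z+k_0}\,\Delta^\dagger\,\frac 1{z-k_0}\,dz
\end{equation*}
--- cannot be handled by pointwise-in-$z$ H\"older bounds at all: combining (\ref{resb1})--(\ref{pb}) one gets at best a decay of roughly $|\re z|^{-7/6}$ in \emph{either} direction (the $(z-k_0)^{-1}$ factors decay where the $(z+k_0)^{-1}$ factors grow, and vice versa), which does not beat the linear growth of $f$, and iterating the resolvent expansion only improves one direction while worsening the other. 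This is exactly why the paper treats its $B_1^b$ and $B_1^c$ by the same ``integrate in $z$ first'' method as $B_1^a$, expanding $\psi$, $A$ and $W$ in Fourier series, evaluating the contour integral as a divided difference, and bounding it with Lemma~\ref{lem:dec} together with the summability of $|\widehat A(p)|(1+|p|)$ and $|\widehat W(p)|$. Finally, for the remainder terms that retain a full resolvent (the paper's $B_1^d$, and your remainders keeping $(z+\bar k)^{-1}$) the available decay is in the direction $\re z\to+\infty$, i.e.\ the \emph{wrong} direction for $f(\beta z)$; one needs the identity $f(\beta z)=f(-\beta z)+\beta z$ together with the fact that the $\beta z$ part integrates to zero (via (\ref{83})) before crude bounds apply --- a step absent from your argument, whose blanket claim that all error terms decay as $\re z\to-\infty$ is false for these terms.
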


The proof of Lemma~\ref{lem:i1a} is somewhat technical and will be
given in the appendix.

Using the resolvent identity (\ref{res}) we can write $I_1=
I_1^a+I_1^b+I_1^c+I_1^d$, where
\begin{equation}
  I_1^a=  \frac 1{z-k_0} \Delta \frac 1{z+k_0} \Delta^\dagger \frac 1{z-k_0}
\end{equation}
and
\begin{align}\nn
  I_1^b & = \frac 1{z-k_0} (k-k_0) \frac 1{z-k_0} \Delta \frac
  1{z+k_0} \Delta^\dagger \frac 1{z-k_0} \\ \nn & \quad + \frac
  1{z-k_0} \Delta \frac 1{z+k_0} (k_0-\bar k) \frac 1{z+k_0}
  \Delta^\dagger \frac 1{z-k_0} \\ \nn & \quad + \frac 1{z-k_0} \Delta
  \frac 1{z+k_0} \Delta^\dagger \frac 1{z-k_0} (k-k_0) \frac 1{z-k_0}
  \,.
\end{align}
The part $I_1^c$ consists of $6$ terms involving two $(k_0-k)$
factors. Explicitly,
\begin{align}\nn
  I_1^c & = \frac 1{z-k_0} (k-k_0) \frac 1{z-k_0}(k-k_0)\frac 1{z-k_0}
  \Delta \frac 1{z+k_0} \Delta^\dagger \frac 1{z-k_0} \\ \nn & \quad +
  \frac 1{z-k_0} \Delta \frac 1{z+k_0} (k_0-\bar k) \frac 1{z+k_0}
  (k_0-\bar k) \frac 1{z+k_0} \Delta^\dagger \frac 1{z-k_0} \\ \nn &
  \quad + \frac 1{z-k_0} \Delta \frac 1{z+k_0} \Delta^\dagger \frac
  1{z-k_0} (k-k_0) \frac 1{z-k_0} (k-k_0) \frac 1{z-k_0} \\ \nn &
  \quad + \frac 1{z-k_0} (k-k_0) \frac 1{z-k_0} \Delta \frac
  1{z+k_0}(k_0-\bar k) \frac 1{z+k_0} \Delta^\dagger \frac 1{z-k_0} \\
  \nn & \quad + \frac 1{z-k_0} (k-k_0) \frac 1{z-k_0} \Delta \frac
  1{z+k_0} \Delta^\dagger \frac 1{z-k_0}(k-k_0) \frac 1{z-k_0} \\ \nn
  & \quad + \frac 1{z-k_0} \Delta \frac 1{z+k_0} (k_0-\bar k) \frac
  1{z+k_0} \Delta^\dagger \frac 1{z-k_0} (k-k_0) \frac 1{z-k_0} \,.
\end{align}
Finally, $I_1^d$ contains all the rest.

A straightforward computation of the trace and the contour integral
shows that
\begin{equation}\label{trace:i1a}
  \frac 1{2\pi i} \Tr  \int_\Gamma f(\beta z) \,I_1^a \, dz = h^{2-d} \sum_{p \in (2\pi \Z)^d} | \widehat \psi(p) |^2  G (hp) \,,
\end{equation}
where
\begin{equation}\label{def:fG}
  G(p_1) =\beta^2 \int_{\R^d} \frac{(t(q+p_1)+t(q))^2}4  \left[a_1,a_1,-a_0\right]_f  \, \frac{dq}{(2\pi)^d}\,.
\end{equation}
Here, we use the notation introduced in Lemma~\ref{lem:aidef}.
Explicitly, $G(p)$ equals
\begin{equation}
  -\frac \beta{4}
  \int_{\R^d} \frac{\left( t(q+p) + t(q)\right)^2}{4}  \frac { \tanh\left(\tfrac 12 \beta((q+p)^2-\mu)\right) +   \tanh\left(\tfrac 12 \beta(q^2-\mu)\right)}{(q+p)^2 + q^2 -2\mu}\, \frac{dq}{(2\pi)^d}\,.
\end{equation}

\begin{Lemma}\label{lem11}
  The function $G$ in (\ref{def:fG}) is bounded, twice differentiable
  at zero, and
  \begin{equation}\label{gexp}
    \left| G( p) - G(0) - \tfrac 12 (p\cdot \nabla)^2 G(0) \right| \leq C |p|^4
  \end{equation}
  for some constant $C$ depending only on $\int_{\R^d} t(q)^2
  (1+q^2)^{-1} dq$ and the $L^6$ norm of $t$ and its derivatives up to
  order four.
\end{Lemma}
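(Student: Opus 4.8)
The plan is to follow the strategy used just above for $F$: expand both the prefactor $(t(q+p)+t(q))^2$ and the divided difference $[a_1,a_1,-a_0]_f$ in powers of $p$, and control the remainders with Lemma~\ref{lem:dec} and the integrability condition (\ref{t:as2}). \emph{Boundedness} of $G$ follows at once from $(t(q+p)+t(q))^2\le 2t(q+p)^2+2t(q)^2$ together with the bound (\ref{eq:lem:dec1}) of Lemma~\ref{lem:dec}, applied in the forms $|[a_1,a_1,-a_0]_f|\le C(1+|a_1|)^{-1}$ and $|[a_1,a_1,-a_0]_f|\le C(1+|a_0|)^{-1}$ (note $f$ in (\ref{deff}) has $|f'|\le 1$ and exponentially decaying higher derivatives, so Lemma~\ref{lem:dec} applies for every $N$); after the shift $q\mapsto q+p$ in the first term one is left with two copies of $\int_{\R^d}t(q)^2(1+\beta|q^2-\mu|)^{-1}\,dq$, finite by (\ref{t:as2}). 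Moreover $G$ is \emph{even}: the substitution $q\mapsto-q$, using $t(-q)=t(q)$ and the evenness of $q\mapsto q^2-\mu$, turns the integrand into the one for $G(-p)$, so $G(p)=G(-p)$. This is what will force the linear and cubic terms of $G$ at $0$ to vanish.

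For the expansion, write $t(q+p)=\sum_{|\gamma|\le 3}\tfrac{p^\gamma}{\gamma!}\partial^\gamma t(q)+r_t(q,p)$ with $|r_t(q,p)|\le C|p|^4\sup_{s\in[0,1]}\sum_{|\gamma|=4}|\partial^\gamma t(q+sp)|$, and expand the divided difference by Newton's recursion in the increment $b:=\beta(2p\cdot q+p^2)=a_1-a_0$:
\[
  [a_1,a_1,-a_0]_f=\sum_{k=0}^{3}(k+1)\,b^k\,[\underbrace{a_0,\dots,a_0}_{k+2},-a_0]_f+b^4\,R(q,p),\qquad
  |R(q,p)|\le C\sup_{s\in[0,1]}\bigl|[\underbrace{a_0+sb,\dots,a_0+sb}_{6},-a_0]_f\bigr|.
\]
Since $a_0$ and $-a_0$ have opposite signs, (\ref{eq:lem:dec2}) gives $|[\underbrace{a_0,\dots,a_0}_{k+2},-a_0]_f|\le C_k(1+|a_0|)^{-(k+1)}$. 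Substituting both expansions into (\ref{def:fG}), multiplying out and collecting by total degree in $p$: the terms of degree $0$ and $2$ integrate to $G(0)$ and to a quadratic form $P_2(p)$ in $p$; the terms of degree $1$ and $3$ integrate to $0$ by the evenness just established; and each remaining contribution carries a factor $|p|^4$ (for $|p|\le 1$) pointwise in $q$. It then suffices to check that each such contribution is $|p|^4$ times an $L^1(dq)$ function, uniformly for $|p|\le 1$.

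Here the $q$-integration must be split. On $|q|\le R$, with $R$ a fixed multiple of $\sqrt\mu+|p|$, all divided differences are merely bounded, but $|b|\le C|p|$ there and the $t$-factors are square-integrable on bounded sets (since $\partial^\gamma t\in L^6$), so the contribution is $O(|p|^4)$. On $|q|>R$ one has $a_0\ge c\beta q^2$ and $a_1\ge c\beta q^2$, hence $a_0+sb=(1-s)a_0+sa_1\ge c\beta q^2>0$ for all $s\in[0,1]$ while $-a_0\le-c\beta q^2$; thus (\ref{eq:lem:dec2}) again supplies decay $(1+q^2)^{-m}$ with $m$ large enough to absorb the powers of $|q|$ produced by the factors $b^k$ (which on $|q|>R$ obey $|b|\le C|p||q|$), reducing every remainder integral, after the shifts $q\mapsto q\pm sp$, to ones of the type $\int_{\R^d}|\partial^\gamma t(q)|^2(1+q^2)^{-1}\,dq$ with $|\gamma|\le 4$ — finite by (\ref{t:as2}) — or to $\|\partial^\gamma t\|_6$-bounds. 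This yields $G(p)=G(0)+P_2(p)+O(|p|^4)$, whence $G$ is twice differentiable at $0$ with $\nabla G(0)=0$ and $\tfrac12(p\cdot\nabla)^2G(0)=P_2(p)$, i.e.\ (\ref{gexp}).

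The main obstacle is this last point: the powers of $|q|$ brought down by $b^k\sim(|p||q|)^k$ must always be compensated, and this works only because, for $|q|$ large, the arguments $a_0+sb$ (close to $a_1$) and $-a_0$ have opposite signs, so the stronger decay (\ref{eq:lem:dec2}) — rather than the weaker (\ref{eq:lem:dec1}) — is available. This is what forces the near/far decomposition, together with some care over the $p$-shifts in the arguments of $t$ so that every integral reduces to (\ref{t:as2}) or to $L^6$ bounds on the derivatives of $t$.
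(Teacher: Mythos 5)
Your proposal is correct and follows essentially the same route as the paper's (deliberately terse) proof: boundedness via (\ref{eq:lem:dec1}) and (\ref{t:as2}), reduction to $|p|\leq C$, fourth-order expansion of $t(q+p)$, expansion of the divided difference in the increment $a_1-a_0$ (your Taylor expansion of $x\mapsto[x,x,-a_0]_f$ is equivalent to iterating the paper's identity (\ref{iter})), decay from (\ref{eq:lem:dec2}), vanishing of odd terms by reflection symmetry of $t$, and H\"older/Schwarz to close. Your near/far splitting in $q$ and the evenness argument $G(p)=G(-p)$ are just explicit versions of details the paper omits.
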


When inserted into (\ref{trace:i1a}), the term $C|p|^4$ thus yields an
error of the order $ h^{6-d} \|\psi\|_{H^2}^2$.

\begin{proof}
  It follows from (\ref{eq:lem:dec1}) of Lemma~\ref{lem:dec} that
  \begin{equation}
    \left| \left[a_1,a_1,-a_0\right]_f   \right| \leq  \frac C{1+ q^2 + (p+q)^2}\,,
  \end{equation}
  hence $G$ is bounded by $\int_{\R^d} t(q)^2 (1+q^2)^{-1} dq$, which
  is finite by our assumption (\ref{t:as2}) on $t$. To prove
  (\ref{gexp}), it thus suffices to consider the case $|p|\leq C$.

  We expand $t(q+p)$ up to fourth order in $p$. Similarly, we write
  $a_1 = a_0 + \beta(2p_1\cdot q + p_1^2)$ and expand the brackets,
  using the resolvent identity. The decay of the resulting expression
  for large $q$ is governed by (\ref{eq:lem:dec2}).  Using in addition
  that $t$ is reflection symmetric and satisfies (\ref{t:as1}), the
  result follows in a straightforward way from H\"older's
  inequality. We omit the details.
\end{proof}

Recall the definition of the $g_i$ in (\ref{defg0})--(\ref{defg2}). We
have
\begin{equation}
  G(0)  = -\frac {\beta^2}{4}
  \int_{\R^d} t(q)^2  g_0(\beta(q^2-\mu))\, \frac{dq}{(2\pi)^d}\,,
\end{equation}
which gives rise to the expression in (\ref{def:e1}).  Using
integration by parts,
\begin{align}\nn
  (p\cdot \nabla)^2G(0) & = - \frac {\beta^2}{8} \int_{\R^d} t(q)
  \left[ (p\cdot \nabla)^2 t\right]\!(q)\, g_0(\beta(q^2-\mu))\,
  \frac{dq}{(2\pi)^d} \\ \nn & \quad + \frac {\beta^4}4 \int_{\R^d}
  (p\cdot q)^2\, t(q)^2\, g_2(\beta(q^2-\mu))\, \frac{dq}{(2\pi)^d} \\
  & \quad + \frac {\beta^3 p^2}{8} \int_{\R^d} t(q)^2\, g_1(\beta(q^2
  -\mu))\, \frac{dq}{(2\pi)^d}\,. \label{849}
\end{align}

We proceed with $I_1^b$ and insert (\ref{difk}).  We treat
successively the terms proportional to $W$, to $A^2$, and to
$A$. After doing the contour integration and evaluating the trace per
unit volume, the six terms proportional to $W$ are
\begin{equation}
  h^{4-d} \sum_{p_1,p_2 \in (2\pi \Z)^d} \widehat \psi^*(p_1) \widehat \psi(p_2) \widehat W(-p_1-p_2) L(hp_1,hp_2) \,, 
\end{equation}
where
\begin{equation}
  L(p_1,p_2) = \beta^3 \int_{\R^d}  L(p_1,p_2,q) \, 
  \frac{dq}{(2\pi)^d} 
\end{equation}
and
\begin{align} L(p_1,p_2,q) & = \frac 1{4} \left( t(q) +
    t(q+p_1)\right)\left( t(q+p_1) + t(q+p_1+p_2)\right) \\ \nn &
  \quad \times \left( \left[a_0,a_0,a_2,-a_1\right]_f +
    \left[a_0,a_2,a_2,-a_1\right]_f + \left[a_0,a_2,-a_1,-a_1\right]_f
  \right) \,.
\end{align}
Since
\begin{equation} [a,a,a,-a]_f = \frac 1 8 g_1(a)
\end{equation}
we have
\begin{equation}
  L(0,0) = \frac  {\beta^3} 4 \int_{\R^d} t(q)^2 \, g_1(\beta(q^2-\mu)) \, \frac{dq}{(2\pi)^d} \,.
\end{equation}
In the same way as in Lemma~\ref{lem11}, one proves that

\begin{Lemma}
  The function $L$ is bounded, with
  \begin{equation}
    \left| L(p_1,p_2) - L(0,0) \right| \leq C \left( p_1^2+p_2^2\right) 
  \end{equation}
  for some constant $C>0$.
\end{Lemma}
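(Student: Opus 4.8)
The plan is to repeat, almost verbatim, the argument already carried out for $G$ in Lemma~\ref{lem11}; the only structural novelty is that the divided differences occurring in $L(p_1,p_2,q)$ have four arguments rather than three. First I would establish boundedness. The three divided differences $[a_0,a_0,a_2,-a_1]_f$, $[a_0,a_2,a_2,-a_1]_f$, $[a_0,a_2,-a_1,-a_1]_f$ all have $N=4$ arguments, so by (\ref{eq:lem:dec1}) of Lemma~\ref{lem:dec} each is bounded by $C(1+\max_i|a_i|)^{-1}$. Since $a_i=\beta((q+\sum_{j\le i}p_j)^2-\mu)$ one has $1+|a_i|\ge c_{\mu,\beta}\,(1+(q+\sum_{j\le i}p_j)^2)$, and $1+\max_i|a_i|\ge\big((1+|a_k|)(1+|a_\ell|)\big)^{1/2}$ for any two indices. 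Pairing each of the (at most two) factors of $t$ in the prefactor of $L(p_1,p_2,q)$ with one of these square‑root weights and applying Cauchy--Schwarz reduces the $q$--integral to $\int_{\R^d}|t(q)|^2(1+q^2)^{-1}\,dq$, finite by (\ref{t:as2}). Hence $\sup_{p_1,p_2}|L(p_1,p_2)|<\infty$, and in particular the inequality to be proved is trivial once $p_1^2+p_2^2\ge 1$, so it remains to treat $|p_1|,|p_2|\le 1$.

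For small $p_1,p_2$ I would argue exactly as in Lemma~\ref{lem11}. Expand $t(q+p_1)$ and $t(q+p_1+p_2)$ to first order in $(p_1,p_2)$ with a second‑order integral remainder, and expand each divided difference about its value at $p_1=p_2=0$ by writing $a_1=a_0+\beta(2p_1\cdot q+p_1^2)$, $a_2=a_0+\beta(2(p_1+p_2)\cdot q+|p_1+p_2|^2)$ and iterating the partial‑fractions identity (\ref{iter}) (equivalently (\ref{def:rec})), each iteration adding one argument. Collecting terms, $L(p_1,p_2,q)-L(0,0,q)$ splits into a part linear in $(p_1,p_2)$ and a remainder that is pointwise $O(p_1^2+p_2^2)$ in $q$. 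The linear part carries a single factor of either $\nabla t(q)$ or $q$ and is therefore odd in $q$ (recall $t$ is reflection‑symmetric and $a_0$ is even in $q$), so it integrates to zero; equivalently, one observes directly that $L(p_1,p_2,q)$ is invariant under $(q,p_1,p_2)\mapsto(-q,-p_1,-p_2)$, whence $L$ is an \emph{even} function of $(p_1,p_2)$ and all odd Taylor coefficients at the origin vanish automatically.

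It then remains to check that the $O(p^2)$ remainder integrates, uniformly for $|p_1|,|p_2|\le1$, to a quantity bounded by $C(p_1^2+p_2^2)$. The terms produced carry either a second derivative of $t$ — which lies in $L^{2p/(p-1)}$ by (\ref{t:as1}) — or extra powers of $q$ coming from differentiating the $a_i$, and the latter are absorbed by the improved decay of the higher‑order divided differences: by (\ref{eq:lem:dec2}) of Lemma~\ref{lem:dec}, a divided difference whose $N$ arguments are all bounded below by $\lambda$ in absolute value with fixed signs is $O((1+\lambda)^{-(N-2)})$, and for $|q|$ large all the $a_i\sim\beta q^2$ with the required sign pattern, so $\lambda\gtrsim q^2$. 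A routine application of H\"older's inequality — pairing the $L^{2p/(p-1)}$ norms of $t$ and its derivatives against the weights furnished by the divided differences, and again using Cauchy--Schwarz with the weight $(1+q^2)^{-1}$ as in the boundedness step to handle the borderline integrability in $d=3$ — shows these integrals are finite and bounded uniformly in $|p_1|,|p_2|\le1$. This yields $|L(p_1,p_2)-L(0,0)|\le C(p_1^2+p_2^2)$.

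The main obstacle, exactly as for $G$ in Lemma~\ref{lem11}, is purely the bookkeeping of decay in $q$: one must match the (in $d=3$ borderline) integrability $t\in L^6$ and $\int|t|^2(1+q^2)^{-1}\,dq<\infty$ against the decay rates of the divided differences supplied by Lemma~\ref{lem:dec}, and organize the H\"older pairings so that every resulting $q$--integral converges. No genuinely new ingredient beyond those used for $G$ is needed, which is why the details may be omitted.
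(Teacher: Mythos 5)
Your proposal is correct and follows essentially the paper's own route: the paper simply states that the lemma is proved ``in the same way as in Lemma~\ref{lem11}'', and your argument is a faithful reconstruction of that method — boundedness via (\ref{eq:lem:dec1}) and Cauchy--Schwarz against $\int t(q)^2(1+q^2)^{-1}dq$, Taylor expansion of $t$ together with iteration of (\ref{iter}), vanishing of the linear terms by reflection symmetry, and control of the quadratic remainder via (\ref{eq:lem:dec2}) and H\"older with $\partial^\gamma t\in L^6$.
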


The contribution of this term is thus
\begin{equation}\label{856}
  h^{4-d} L(0,0) \int_\calC W(x) |\psi(x)|^2\, dx  +  O(h^{6-d})  \|\psi\|_{H^2}^2 \,,
\end{equation}
where we have used a simple Schwarz inequality to bound the error
term, as well as the fact that $W \in L^2(\calC)$ and that $|\widehat
\psi(p)| \leq C \|\psi\|_{H^2}$ for all $p\in (2\pi\Z)^d$.

We obtain the same contribution with $A(x)^2$ in place of $W(x)$. The
terms linear in $A$ are given by
\begin{equation}
  h^{3-d} \sum_{p_1,p_2 \in (2\pi \Z)^d} \widehat \psi^*(p_1) \widehat \psi(p_2) \widehat A(-p_1-p_2)\cdot \int_{\R^d} L(hp_1,hp_2,q)(2q+hp_1+hp_2)\, \frac{dq}{(2\pi)^d}\,.
\end{equation}
For small $h$, we have
\begin{equation}
  \int_{\R^d} L(hp_1,hp_2,q)(2q+hp_1+hp_2)\, \frac{dq}{(2\pi)^d} = h\, \mathbb{K} (p_2-p_1) + O(h^2) \,,
\end{equation}
where $\mathbb{K}$ is a symmetric $d\times d$ matrix with entries
\begin{equation}
  \mathbb{K}_{ij} = \tfrac 12 L(0,0)\delta_{ij} + \frac {\beta ^4}4  \int_{\R^d} t(q)^2 q_i q_j \,g_2(\beta(q^2-\mu))\, \frac{dq}{(2\pi)^d} \,.
\end{equation}
Hence the leading term is
\begin{equation}
  h^{4-d} \, 2\, \re \int_{\calC} \psi(x)^* A(x)\cdot \mathbb{K} \left(- i \nabla\psi(x)\right) \, dx \,.
\end{equation}
The remainder can be bounded by
\begin{equation}
  C h^{5-d} \sum_{p_1,p_2 \in (2\pi \Z)^d} \left| \widehat \psi^*(p_1) \widehat \psi(p_2) \widehat A(-p_1-p_2)\right| \left(p_1^2 + p_2^2\right) \leq C h^{5-d} \|\psi\|_{H^1(\calC)}^2
\end{equation}
using that $|\widehat A(p)| |p|$ is summable by assumption.

We proceed with $I_1^c$.  First, there is the contribution of the
terms linear in $A$ in both factors $k-k_0$. These we have to
calculate explicitly to leading order in $h$, and then bound the
remainder. They contribute
\begin{equation}\label{862}
  h^{4-d} \frac{\beta^4}2 \sum_{i,j=1}^d  \int_{\calC} |\psi(x)|^2 A_i(x)A_j(x)\, dx\, \int_{\R^d} t(q)^2 q_i q_j \, g_2(\beta(q^2-\mu))\, \frac{dq}{(2\pi)^d}  
\end{equation}
plus terms of order $O(h^{5-d}) \|\psi\|_{H^1}^2$ and higher. In
combination, (\ref{856}) for $W$ and $A^2$, (\ref{862}) and
(\ref{849}) give all the terms in (\ref{def:e2}) except for the last,
which came from $F(0,0,0)$ in (\ref{f000}).

For the other terms, we use (\ref{resb1}) and (\ref{resb2}), and bound
all expressions involving $z$ with $\infty$-norms. We always get
enough decay in either the positive or the negative
$z$-direction. Only decay in one direction is needed, as $z$ can be
replaced by $-z$, which follows from the fact that the resulting
expressions are zero if $f(\beta z)$ is replaced by $f(\beta z) -
f(-\beta z) = \beta z$. (Compare with the discussion at the end of the
proof of Lemma~\ref{lem:i1a} in the appendix.) The same is true for
$I_1^d$. We omit the details. The final result is then that
\begin{equation}
  \frac 1 {2\pi i} \Tr \int_\Gamma f(\beta z) \left(I_1^c + I_1^d\right) dz = (\ref{862}) + O(h^{5-d}) \|\psi\|_{H^1}^2\,.
\end{equation}
This completes the proof of Theorem~\ref{thm:scl}.
 
\section{Proof of Theorem~\ref{lem3}}\label{sec:propproof}

As in the proof of Theorem~\ref{thm:scl}, we restrict out attention to
the case $d=3$, and leave the cases $d=1$ and $d=2$ to the reader.

Since the function
\begin{equation}
  \rho(z) = \frac 1 {1+ e^{z}}
\end{equation}
is analytic in the strip $|\im z| < \pi$, we can write $\alpha_\Delta$
with the aid of a contour integral representation as
\begin{equation}\label{rep:alp}
  \alpha_\Delta = \frac 1{2\pi i} \int_\Gamma  \rho(\beta z) \left[ \frac 1{z-H_\Delta}\right]_{12}dz\,,
\end{equation}
where $\Gamma$ is the contour $\im z = \pm \pi/(2\beta)$, and $[\,
\cdot\,]_{12}$ denotes the upper off-diagonal entry of an
operator-valued $2\times 2$ matrix. The integral in (\ref{rep:alp})
has to be suitably understood as a weak limit, similarly to
(\ref{int:rep}).

Using the resolvent identity and the definitions of $\Delta$ and
$\varphi$ in (\ref{def:delta}) and (\ref{def:varphi}), respectively,
we find that
\begin{equation}
  \alpha_\Delta  = \frac{h}2 \left( \psi(x) \varphi(-i h \nabla ) + \varphi(-ih \nabla)  \psi(x)\right)   + \sum_{j=1}^3 \eta_j \,,
\end{equation}
where
\begin{equation}\label{def:eta1}
  \eta_1 =  \frac h{4\pi i}\int_\Gamma \rho(\beta z) \left(\frac 1{z-k_0}\left[\psi, k_0\right] \frac{t}{z^2-k_0^2} + \frac{t}{z^2-k_0^2} [\psi,k_0]\frac 1{z+k_0} \right) \, dz \,,
\end{equation}
\begin{equation}
  \eta_2 =  \frac 1{2\pi i } \int_\Gamma \rho(\beta z) \frac 1{z-k_0} \left( (k-k_0) \frac 1{z-k}\Delta + \Delta\frac 1{z+k_0} (k_0 - \bar k)\right) \frac 1{z+\bar k} \, dz
\end{equation}
and
\begin{equation}
  \eta_3 = \frac 1{2\pi i}  \int_\Gamma \rho(\beta z) \frac 1{z-k}\Delta \frac 1{z+\bar k} \Delta^\dagger \frac 1{z-k} \Delta  \left[ \frac 1{ z-H_\Delta } \right]_{22} dz \,,
\end{equation}
with $t$ being short for the operator $t(-ih\nabla)$.  In the
following three lemmas we give bounds on these three operators.

\begin{Lemma}\label{lem13}
  \begin{equation}
    \|\eta_1\|_{H^1} \leq C h^{3-d/2} \|\psi\|_{H^2(\calC)}\,.
  \end{equation}
\end{Lemma}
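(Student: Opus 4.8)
The plan is to extract from $\eta_1$ the two powers of $h$ that make it small — one plainly visible in the commutator, one hidden in a cancellation — and then estimate the resulting operator by the trace-per-unit-volume Hölder and Lieb--Thirring inequalities of Section~\ref{sec:prel}.

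First I would rewrite the commutator. Since $k_0=-h^2\nabla^2-\mu$, one has $[\psi,k_0]=h^2(\Delta\psi)(x)+2h^2(\nabla\psi)(x)\cdot\nabla$; equivalently, in Fourier representation $[\psi,k_0]$ sends the scaled momentum $q$ to $q+p$ with amplitude $\widehat\psi(p)\,\bigl(k_0(q)-k_0(q+p)\bigr)$, $k_0(q)=q^2-\mu$, where the relevant shifts $p$ — the Fourier modes of the slowly varying $\psi$ — are of order $h$. As $|k_0(q)-k_0(q+p)|\le|p|\,(2|q|+|p|)=O(h)$, this shows $[\psi,k_0]$ is ``of order $h$'', and, together with the explicit prefactor $h$ and the $h^{-d/2}$ of the trace per unit volume, already yields the weaker bound $\|\eta_1\|_{H^1}=O(h^{2-d/2})$. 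The remaining factor $h$ must come from a cancellation.

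Next I would perform the contour integral. Since $\rho$ is holomorphic in the closed strip $|\im z|\le\pi/(2\beta)$ and both integrands decay like $|z|^{-3}$ as $|z|\to\infty$ along $\Gamma$, I would evaluate the $z$-integral by residues at the real poles of the three resolvent factors in each term, working at the level of Fourier kernels, where these factors are genuine rational functions of $z$. (The integral is not absolutely convergent, because $\rho(\beta z)\to1$ as $\re z\to-\infty$; writing $\rho(\beta z)=1-\rho(-\beta z)$ and closing the contour for the ``$1$'' part, with the remaining integral absolutely convergent, makes this rigorous, in accordance with the weak-limit interpretation in the statement.) The outcome is that the Fourier kernel of $\eta_1$, from $q$ to $q+p$, equals $\tfrac h2\,\widehat\psi(p)\bigl(t(q)H_1(\lambda,\mu')+t(q+p)H_2(\lambda,\mu')\bigr)$, with $\lambda=q^2-\mu$ and $\mu'=(q+p)^2-\mu$, where $H_1,H_2$ are explicit functions built from the Fermi function $\rho$ that vanish on the diagonal $\lambda=\mu'$. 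The main obstacle is the cancellation: one checks that the transverse derivatives on the diagonal are exactly opposite, $\partial_{\mu'}H_1(\lambda,\lambda)=-\partial_{\mu'}H_2(\lambda,\lambda)$ (this reflects the symmetrization inherent in $\Delta=-\tfrac h2(\psi\,t+t\,\psi)$), so $H_1+H_2=O((\mu'-\lambda)^2)$; combined with $t(q+p)=t(q)+O(|p|)$ this gives that the kernel of $\eta_1$ is $\tfrac h2\,\widehat\psi(p)\,O(|p|^2)$, uniformly in $q$ and with decay in $q$ supplied by $t$ and by the derivatives of the $H_i$. The factor $|p|^2=O(h^2)$ is the hidden gain, and, upon summing over the modes of $\psi$, it is exactly two derivatives of $\psi$; e.g.\ $\sum_m\widehat\psi_m (p_m)_j(p_m)_k\,e^{2\pi i m\cdot x}=-h^2(\partial_j\partial_k\psi)(x)$, which is why $\|\psi\|_{H^2(\calC)}$ appears.

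Finally I would bound the $H^1$ norm of this operator. It is a finite sum of terms $b_1(-ih\nabla)\,a(x)\,b_2(-ih\nabla)$, where $a$ is one of $h^2\,\partial_j\partial_k\psi$ (or $h\,\Delta\psi$), with $\|a\|_{L^2(\calC)}\le Ch^2\|\psi\|_{H^2(\calC)}$, and $b_1,b_2$ are bounded symbols built from $\rho$, the resolvent symbols evaluated at the poles, and $t$, with at most two extra factors $q_j$. Since $(1-h^2\nabla^2)^{1/2}$ commutes with the leftmost (momentum) factor, Hölder's inequality \eqref{alt} and the inequality \eqref{lte} give
\[
\bigl\|b_1(-ih\nabla)\,a(x)\,b_2(-ih\nabla)\bigr\|_{H^1}\le(2\pi)^{-d/2}h^{-d/2}\,\bigl\|(1+p^2)^{1/2}b_1\bigr\|_\infty\,\|a\|_{L^2(\calC)}\,\|b_2\|_{L^2(\R^d)},
\]
and the symbol norms on the right are finite and $h$-independent by the regularity and decay of $t$ from Proposition~\ref{prop:reg} ($t$ and all its derivatives lie in $L^6(\R^d)$, and $\int|\partial^\gamma t|^2/(1+q^2)<\infty$). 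Collecting the gathered powers, $h\cdot h^2\cdot h^{-d/2}=h^{3-d/2}$, times $\|\psi\|_{H^2(\calC)}$, gives the claim. The delicate point throughout is the second step: the naive count is one power of $h$ short, and one must identify and exploit the cancellation between the two terms defining $\eta_1$ while keeping the (only conditionally convergent) contour integral under control.
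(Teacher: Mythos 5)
Your proposal is correct and follows essentially the same route as the paper: the paper also evaluates the contour integral explicitly (obtaining divided differences of $\rho$), exploits precisely your cancellation in the form $[a,a,-a]_\rho=-[a,-a,-a]_\rho$ between the two terms of $\eta_1$, and concludes that the Fourier symbol vanishes to second order in the momentum shift $hp$, which produces the extra factor $h^2$ and the norm $\|\psi\|_{H^2(\calC)}$. The only difference is presentational: instead of your decomposition into operators $b_1(-ih\nabla)\,a(x)\,b_2(-ih\nabla)$ (whose Taylor remainder is not exactly of that product form), the paper computes $\|\eta_1\|_{H^1}^2=h^{2-d}\sum_{p}|\widehat\psi(p)|^2 J(hp)$ exactly in Fourier space and shows $J(p)\leq C|p|^4$ using the decay bounds on divided differences from Lemma~\ref{lem:dec} together with the regularity of $t$.
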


\begin{proof}
  The $H^1$ norm of $\eta_1$ is given by
  \begin{equation}
    \|\eta_1\|_{H^1}^2 = h^{2-d} \sum_{p\in (2\pi \Z)^d} |\widehat \psi(p)|^2 J(hp) \,,
  \end{equation}
  where
  \begin{align}\nn
    J(p) = \frac {\beta^4} 4 \int_{\R^d} & \left( (q+p)^2 -
      q^2\right)^2 \left( 1+ (q+p)^2\right) \\ & \times \big| t(q)
    [a_0,-a_0,a_1]_\rho + t(p+q)[-a_0,a_1,-a_1]_\rho \big|^2 \,
    \frac{dq}{(2\pi)^d} \,. \label{def:j}
  \end{align}
  Here, we use again the notation $[\,\cdot\, ]_\rho$ for the divided
  differences (\ref{def:re})--(\ref{def:rec}), and $a_0 =
  \beta(q^2-\mu)$ and $a_1 = \beta((q+p)^2-\mu)$, as in the proof of
  Theorem~\ref{thm:scl}.  Since
  \begin{equation}\label{antis} [a,a,-a]_\rho = - [a,-a,-a]_\rho =
    \frac{ \rho(-a) - \rho(a)}{4 a^2} + \frac{ \rho'(a)}{2a} = \frac{1+2
      a e^ a - e^{2 a}}{4 a^2 (1+e^a)^2} \,,
  \end{equation}
  the integrand in (\ref{def:j}) vanishes like the fourth power of $p$
  for small $p$. In fact, using Lemma~\ref{lem:dec} and our
  assumptions on the regularity of $t$ it is easy to see that
  $J(p)\leq C |p|^4$, which yields the desired bound.
\end{proof}

We proceed with a bound on the $H^1$ norm of $\eta_2$.
\begin{Lemma}
  \begin{equation}
    \|\eta_2\|_{H^1} \leq C h^{3-d/2} \|\psi\|_{H^1(\calC)}\,.
  \end{equation}
\end{Lemma}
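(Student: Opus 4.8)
The plan is to follow the strategy of the proof of Lemma~\ref{lem13}. Recall from (\ref{def:h1}) that $\|\eta_2\|_{H^1}^2=\Tr[\eta_2^\dagger(1-h^2\nabla^2)\eta_2]$, so $\|\eta_2\|_{H^1}=\|(1-h^2\nabla^2)^{1/2}\eta_2\|_2$. Since $(1-h^2\nabla^2)^{1/2}$ commutes with $k_0$, I would absorb it into the leftmost resolvent, noting that $(1-h^2\nabla^2)^{1/2}(z-k_0)^{-1}$ is a Fourier multiplier with operator norm $\le C(1+|\re z|)^{1/2}$ which in fact decays like $|\re z|^{-1/2}$ as $\re z\to-\infty$. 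Applying the triangle inequality along the contour $\Gamma$ then reduces the bound to estimating, for each fixed $z\in\Gamma$, the Hilbert--Schmidt norm of the two operators
\[
(1-h^2\nabla^2)^{1/2}\tfrac1{z-k_0}(k-k_0)\tfrac1{z-k}\Delta\tfrac1{z+\bar k},\qquad
(1-h^2\nabla^2)^{1/2}\tfrac1{z-k_0}\Delta\tfrac1{z+k_0}(k_0-\bar k)\tfrac1{z+\bar k},
\]
and integrating the result against $|\rho(\beta z)|$.

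For each such operator I would use the generalized H\"older inequality (\ref{genholder}) for the trace per unit volume with Schatten exponents summing to $1/2$: put $\Delta$ into the $\|\cdot\|_6$-norm, where by (\ref{lte}) and Proposition~\ref{prop:reg} it is bounded by $Ch^{1-d/6}\|\psi\|_{L^6(\calC)}\|t\|_{L^6(\R^d)}\le Ch^{1-d/6}\|\psi\|_{H^1(\calC)}$ (Sobolev on the torus, $d\le3$); keep the factor $k-k_0$ (resp. $k_0-\bar k$), which by (\ref{difk}) equals $h^2W+h^2A^2-ih^2(\nabla\cdot A+A\cdot\nabla)$ and hence carries the smallness, together with an adjacent resolvent and control it in $\|\cdot\|_\infty$ via (\ref{resb1})--(\ref{resb2}); and put the remaining resolvents into $\|\cdot\|_\infty$ via (\ref{infb}) or into an $L^p$-type norm via (\ref{pb}), chosen so the exponents add up correctly. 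The contour integral is then handled exactly as in Section~\ref{sec:thm:scl}: $|\rho(\beta z)|$ decays exponentially as $\re z\to+\infty$, while $(z-k)^{-1}$, $(z-k_0)^{-1}$ and the weighted multiplier $(1-h^2\nabla^2)^{1/2}(z-k_0)^{-1}$ decay polynomially as $\re z\to-\infty$; and since $\int_\Gamma[(z-H_\Delta)^{-1}]_{12}\,dz=0$, the substitution $z\mapsto-z$ shows that decay in only one direction is needed.

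Two points require care. First, the $z$-decay bookkeeping: $(z+\bar k)^{-1}$ and $(z+k_0)^{-1}$ \emph{grow} in every $\|\cdot\|_p$ with $p<\infty$ as $\re z\to-\infty$, so the Schatten exponents must be routed so that the decaying factors $(z-k)^{-1}$, $(z-k_0)^{-1}$ (and the weighted leading resolvent) carry the contour integral. Second, and this is the real obstacle: as in Lemma~\ref{lem13}, where the integrand in $J(p)$ vanishes to fourth order in the momentum transfer thanks to the identity $[a,a,-a]_\rho=-[a,-a,-a]_\rho$ from (\ref{antis}), one must obtain the \emph{sharp} power $h^{3-d/2}$ rather than the $h^{2-d/2}$ produced by a crude estimate of $(k-k_0)(z-k)^{-1}$. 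This calls for writing $\|\eta_2\|_{H^1}^2$ as a sum over the Fourier modes of $\psi$, $A$ and $W$ of explicit contour integrals, first replacing $k,\bar k$ by $k_0$ in the inner resolvents via (\ref{res}) (the errors being genuinely $O(h^2)$), and then checking that the resulting amplitudes vanish to sufficiently high order in the order-$h$ momentum transfers introduced by the multiplication operators — so that, when these are traded for factors $(1+p^2)^{1/2}$ against $\widehat\psi(p)$, one lands on $h^{6-d}\|\psi\|_{H^1(\calC)}^2$.
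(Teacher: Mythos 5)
Your proposal takes essentially the same route as the paper: the pieces carrying two net powers of $h$ (the $W+|A|^2$ part and the error terms generated by replacing $k,\bar k$ with $k_0$) are disposed of by crude H\"older estimates along the contour, while the dangerous $A$-linear term with free resolvents is expanded in Fourier modes and the antisymmetry (\ref{antis}) supplies the extra power of $h$ --- precisely the paper's decomposition $\eta_2=\eta_2^a+\eta_2^b+\eta_2^c$. The only minor difference is bookkeeping in the final trading step: since only one factor $(1+|p|)$ per $\psi$ leg is compatible with a bound in terms of $\|\psi\|_{H^1(\calC)}$ alone, the paper routes part of the gained momentum transfers onto the $A$ legs using the summability of $|\widehat A(p)|(1+|p|)$, which your argument should invoke as well.
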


\begin{proof}
  We split $\eta_2$ into three parts, $\eta_2^a$, $\eta_2^b$ and
  $\eta_2^c$, where
  \begin{align}\nn
    \eta_2^a = - \frac {h^2}{2\pi} \int_\Gamma \rho(\beta z) \frac
    1{z-k_0} \Biggl( & (\nabla\cdot A + A\cdot \nabla) \frac
    1{z-k_0}\Delta \\ & + \Delta\frac 1{z+k_0} (\nabla\cdot A +A\cdot
    \nabla)\Biggl) \frac 1{z+ k_0} \, dz\,,
  \end{align}
  \begin{align}\nn
    \eta_2^b = \frac {h^2}{2\pi i } \int_\Gamma\rho(\beta z) \frac
    1{z-k_0}& \Biggl( (-i\nabla\cdot A - i h A\cdot \nabla) \frac
    1{z-k_0}(k-k_0)\frac 1{z-k}\Delta \\ \nn & + (-i\nabla\cdot A - i
    A \cdot\nabla) \frac 1{z-k_0}\Delta\frac 1{z+ k_0}(k_0-\bar k) \\
    & + \Delta\frac 1{z+k_0} (-i\nabla\cdot A - i A\cdot \nabla) \frac
    1{z+ k_0}(k_0-\bar k)\Biggl) \frac 1{z+\bar k} \, dz
  \end{align}
  and
  \begin{equation}
    \eta_2^c =  \frac {h^2}{2\pi i } \int_\Gamma \rho(\beta z) \frac 1{z-k_0} \left( (W+|A|^2) \frac 1{z-k}\Delta - \Delta\frac 1{z+k_0} (W+|A|^2)\right) \frac 1{z+\bar k} \,.
  \end{equation}

  We start by considering the first term in $\eta_2^c$. The square of
  the $H^1$ norm of the integrand can be bounded by
  \begin{multline}
    \Tr \left[ \frac{ 1-h^2\nabla^2}{|z-k_0|^2} (W+|A|^2)
      \frac{1}{z-k}\Delta \frac 1{|z+\bar k|^2} \Delta^\dagger \frac
      1{\bar z-k}(W+|A|^2) \right] \\ \leq \|\Delta\|_6^2
    \|W+|A|^2\|_\infty^2 \left\| \frac{
        1-h^2\nabla^2}{|z-k_0|^2}\right\|_\infty
    \left\|(z-k)^{-1}\right\|_\infty^2 \left\|(z+\bar
      k)^{-1}\right\|_3^2\,.
  \end{multline}
  The latter expression decays like $|z|^{-3+(d-2)/3}$ for negative
  $\re z$. Hence the resulting contour integral is finite, and we
  arrive at the bound $C h^{2-d/3} \|\Delta\|_6\leq C h^{3-d/2}
  \|\psi\|_{H^1(\calC)}$ for the $H^1$ norm.

  For the second term in $\eta_2^c$ we proceed similarly. It is
  important to first notice that
  \begin{equation}
    \frac 1{1+e^{\beta z}} = 1 - \frac 1{1+e^{-\beta z}}
  \end{equation}
  however, and that the $1$ does not contribute anything but
  integrates to zero. Proceeding as above, we conclude that
  \begin{equation}
    \|\eta_2^c\|_{H^1}^2 \leq C  h^{6-d} \|\psi\|_{H^1(\calC)}^2\,.
  \end{equation}
  The $H^1$ norm of $\eta_2^b$ can be bounded in essentially the same
  way, and we omit the details. The result is that also
  \begin{equation}
    \|\eta_2^b\|_{H^1}^2 \leq C  h^{6-d} \|\psi\|_{H^1(\calC)}^2
  \end{equation}
  holds

  At first sight, the $H^1$ norm of $\eta_2^a$ appears to be of order
  $h^{2-d/2}$ instead of $h^{3-d/2}$. Note, however, that if we
  commute $A(x)$ and $\psi(x)$ to the left in both terms, the
  resulting integral is zero due to the antisymmetry
  (\ref{antis}). Hence $\eta_2^a$ can be written as an integral over
  sums of various terms involving commutators of $A(x)$ and $\psi(x)$
  with functions of $-i\nabla$, and hence an additional factor of $h$
  is gained. In fact, the $H^1$ norm of $\eta_2^a$ can be written as
  \begin{equation}
    \|\eta_2^a\|_{H^1}^2 = h^{4-d}\sum_{p_1,p_2,p_3\in (2\pi\Z)^d} \widehat \psi(p_1) \widehat\psi^*(p_2)  \widehat A(p_3)\cdot \mathbb{M}(hp_1,hp_2,hp_3) \widehat A (-p_1-p_2-p_3)  \,,
  \end{equation}
  where $\mathbb{M}$ is a $d\times d$ matrix-valued function which can
  easily be calculated explicitly.  An analysis as in the proof of
  Lemma~\ref{lem13}, using the antisymmetry (\ref{antis}), shows that
  \begin{equation}
    \| \mathbb{M} (p_1,p_2,p_3) \| \leq C \left( p_1^2 +p_2^2 +p_3^2\right)\,.
  \end{equation}
  Under our assumption that $|\widehat A(p)|(1+|p|)$ is summable, this
  implies the result
  \begin{equation}
    \|\eta_2^a\|_{H^1} \leq C  h^{3-d/2} \|\psi\|_{H^1(\calC)}\,.
  \end{equation}
\end{proof}

Finally we bound the $H^1$ norm of $\eta_3$.
\begin{Lemma}
  \begin{equation}
    \|\eta_3\|_{H^1} \leq  C h^{3-d/2} \|\psi\|_{H^1(\calC)}^3 \,.
  \end{equation}
\end{Lemma}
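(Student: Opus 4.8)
The plan is to bound $\eta_3$ in essentially the same way $I_3$ was bounded in Lemma~\ref{lem8}, the only new feature being that here one needs an $H^1$ bound rather than a trace-norm bound. First I would pass the contour integral through the norm,
\begin{equation*}
  \|\eta_3\|_{H^1} \leq \frac 1{2\pi} \int_\Gamma |\rho(\beta z)| \left\| \frac 1{z-k}\Delta \frac 1{z+\bar k} \Delta^\dagger \frac 1{z-k} \Delta  \left[ \frac 1{ z-H_\Delta } \right]_{22} \right\|_{H^1} |dz| \,,
\end{equation*}
which, together with the well-definedness of $\eta_3$ as a weak limit (see (\ref{int:rep})), will be justified a posteriori by the estimates below. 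Since $\|O\|_{H^1}=\|(1-h^2\nabla^2)^{1/2}O\|_2$, I would place the weight $(1-h^2\nabla^2)^{1/2}$ onto the leftmost resolvent and apply the generalized H\"older inequality (\ref{genholder}) for the trace per unit volume, assigning the three factors $\Delta,\Delta^\dagger,\Delta$ the exponent $6$ and the factors $(1-h^2\nabla^2)^{1/2}(z-k)^{-1}$, $(z+\bar k)^{-1}$, $(z-k)^{-1}$ and $[(z-H_\Delta)^{-1}]_{22}$ the exponent $\infty$; the exponents add up to $3/6=\tfrac12$, so the product lands in the $2$-norm.

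The three $\Delta$-factors are handled exactly as in Lemma~\ref{lem8}: by (\ref{lte}), $\|\Delta\|_6=\|\Delta^\dagger\|_6\le C h^{1-d/6}\|\psi\|_{L^6(\calC)}\|t\|_6$, with $\|\psi\|_{L^6(\calC)}\le C\|\psi\|_{H^1(\calC)}$ by Sobolev's inequality on the torus ($d\le 3$) and $\|t\|_6$ finite by Proposition~\ref{prop:reg}. Their product contributes $C h^{3-d/2}\|\psi\|_{H^1(\calC)}^3$, so it remains to see that
\begin{equation*}
  \int_\Gamma |\rho(\beta z)|\, \left\|(1-h^2\nabla^2)^{1/2}(z-k)^{-1}\right\|_\infty \left\|(z+\bar k)^{-1}\right\|_\infty \left\|(z-k)^{-1}\right\|_\infty \left\|\left[(z-H_\Delta)^{-1}\right]_{22}\right\|_\infty \, |dz|
\end{equation*}
is bounded by a constant independent of $h$ for $h$ small. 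Since $H_\Delta$ is self-adjoint and $|\im z|=\pi/(2\beta)$ on $\Gamma$, $\|[(z-H_\Delta)^{-1}]_{22}\|_\infty\le\|(z-H_\Delta)^{-1}\|_\infty\le 2\beta/\pi$. For the scalar resolvents I would argue as for (\ref{infb}): writing $z=r\pm i\pi/(2\beta)$ and using $\mathrm{spec}(k)\subseteq[-\mu-h^2\|W\|_\infty,\infty)$ and $\mathrm{spec}(-\bar k)\subseteq(-\infty,\mu+h^2\|W\|_\infty]$, one gets $\|(z-k)^{-1}\|_\infty\le C$ for $r\gg1$ and $O(|r|^{-1})$ for $r\ll-1$, and $\|(z+\bar k)^{-1}\|_\infty$ the reverse. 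For the weighted resolvent I would use the form bound $1-h^2\nabla^2\le 2(k+c_0)$, valid for an $h$-independent $c_0$ and $h$ small (Cauchy--Schwarz applied to the magnetic momentum, as in (\ref{sss})), whence $\|(1-h^2\nabla^2)^{1/2}(z-k)^{-1}\|_\infty^2\le 2\sup_{\lambda\ge-\mu-h^2\|W\|_\infty}(\lambda+c_0)\,|z-\lambda|^{-2}$, which is $O(|r|^{-1/2})$ for $r\ll-1$ and $O(|r|^{1/2})$ for $r\gg1$. Multiplying the four factors, the integrand decays like $|r|^{-3/2}$ as $r\to-\infty$ — integrable — while as $r\to+\infty$ the exponential decay of $\rho(\beta z)$ dominates the at most polynomial growth of the resolvent factors. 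Hence the integral is a finite $h$-independent constant and $\|\eta_3\|_{H^1}\le C h^{3-d/2}\|\psi\|_{H^1(\calC)}^3$.

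The one genuine point is the convergence of the contour integral as $r\to-\infty$: unlike in the corresponding step for $f(\beta H_\Delta)-f(\beta H_0)$, here $\rho(\beta z)$ is merely bounded in that direction, the three $\Delta$'s contribute nothing that decays, and there are only two resolvents $(z-k)^{-1}$, one of which must absorb the gradient weight. It is exactly the extra $|r|^{-1/2}$ gained from the form inequality $1-h^2\nabla^2\le 2(k+c_0)$ — as opposed to a mere $O(1)$ bound on $(1-h^2\nabla^2)^{1/2}(z-k)^{-1}$ — that upgrades the borderline, non-integrable $|r|^{-1}$ decay to the integrable $|r|^{-3/2}$; keeping $c_0$, and hence all constants, uniform in $h$ is what makes the final bound $h$-independent.
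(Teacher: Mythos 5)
Your proposal is correct and follows essentially the same route as the paper: the same contour representation, the same H\"older splitting with the three $\Delta$'s in the $6$-norm (each bounded by $Ch^{1-d/6}\|\psi\|_{H^1(\calC)}\|t\|_6$) and the resolvents plus $[(z-H_\Delta)^{-1}]_{22}\leq 2\beta/\pi$ in the operator norm, and the same decay analysis giving an $|r|^{-3/2}$ integrable tail for $\re z<0$ and exponential decay from $\rho(\beta z)$ for $\re z>0$. The only difference is that you make explicit, via $1-h^2\nabla^2\leq 2(k+c_0)$, the bound on $\|\sqrt{1-h^2\nabla^2}\,(z-k)^{-1}\|_\infty$ that the paper leaves implicit, which is a fine (and correct) way to justify its asserted decay rates.
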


\begin{proof}
  Since $H_\Delta$ is self-adjoint, $[(z-H_\Delta)^{-1}]_{22}$ is
  bounded by $2\beta /\pi$ for $z \in \Gamma$. Using H\"older's
  inequality, we can bound the square of the $H^1$ norm of the
  integrand by
  \begin{equation}
    \frac {4\beta^2}{\pi^2} \|\Delta\|_6^6 \left\| (z-k)^{-1}\right\|_\infty^2 \left\| (z+k)^{-1}\right\|_\infty^2  \left\| \sqrt{1-h^2\nabla^2}(z-k)^{-1}\right\|_\infty^2 \,.
  \end{equation}
  This expression decays like $|z|^{-3}$ for $\re z<0$, and like
  $|z|^{-1}$ for $\re z>0$. Hence
  \begin{equation}
    \|\eta_3\|_{H^1}^2 \leq C  \|\Delta\|_6^6\leq C h^{6-d} \|\psi\|_{H^1(\calC)}^6 \,.
  \end{equation}
\end{proof}

This completes the proof of Theorem~\ref{lem3}.

\appendix

\section{Proof of Lemma~\ref{lem:i1a}}

In this appendix, we shall give the proof that the operator
\begin{equation}
  B = \int_{\Gamma} f(\beta z) \frac 1{z-k} \Delta \frac 1{z+\bar k} \Delta^\dagger \frac 1{z-k}\, dz 
\end{equation}
is locally trace class. As in the proof of Theorem~\ref{thm:scl}, we
will focus our attention on the case $d=3$. Using the resolvent
identity, we split $B$ into three parts, $B=\sum_{i=1}^3 B_i$, with
\begin{equation}
  B_1 = \int_{\Gamma} f(\beta z) \frac 1{z-k_0} \Delta \frac 1{z+ \bar k} \Delta^\dagger \frac 1{z-k_0}\, dz  \,,
\end{equation}
\begin{equation}
  B_2 = \int_{\Gamma} f(\beta z) \frac 1{z-k_0}(k-k_0) \frac 1{z-k} \Delta \frac 1{z+\bar k} \Delta^\dagger \frac 1{z-k}\, dz \,,
\end{equation}
and
\begin{equation}
  B_3 = \int_{\Gamma} f(\beta z) \frac 1{z-k_0} \Delta \frac 1{z+\bar k} \Delta^\dagger \frac 1{z-k}(k-k_0) \frac 1{z-k_0}\, dz \,.
\end{equation}
We shall show that each individual piece is locally trace class.

We shall start with $B_2$. Using H\"older's inequality, the trace norm
of the integrand is bounded by $|f(\beta z)|$ times
\begin{equation}\label{a5}
  \left\| \frac 1{z-k_0}(k-k_0) \right\|_\infty \|\Delta\|_6^2 \left\| (z-k)^{-1} \right\|_p\left\| (z-k_0)^{-1} \right\|_p \left\| (z+\bar k)^{-1} \right\|_q
\end{equation}
with $2/p + 1/q = 2/3$. We shall choose $q^{-1} = 2/3-\epsilon$ for
some $0<\epsilon< 1/6$. The bounds (\ref{resb1})--(\ref{pb}) then
imply that (\ref{a5}) increases as $|z|^{1/2-\epsilon}$ for $\re z >
0$, and decreases like $|z|^{-13/6+\epsilon}$ for $\re z <0$. Since
$|f(\beta z)|$ decays exponentially for $\re z>0$ and increases
linearly for $\re z <0$, the contour integral is absolutely
convergent. This shows that $B_2$ is locally trace class.  In exactly
the same way, one shows that also $B_3$ is locally trace class.

For the remaining term $B_1$ it is not possible to take the trace norm
of the integrand and obtain a convergent integral. Hence we have to
argue differently. We again use the resolvent identity to write
\begin{align}\nonumber
  \frac 1{z + \bar k} & = \frac 1{z + k_0} + \frac 1{z+k_0} ( k_0 -
  \bar k) \frac 1{z+k_0} + \frac 1{z+k_0} ( k_0 - \bar k) \frac 1{z+
    k_0} ( k_0 - \bar k) \frac 1{z+k_0} \\ & \quad + \frac 1{z+k_0} (
  k_0 - \bar k) \frac 1{z+ k_0} ( k_0 - \bar k) \frac 1{z+k_0}
  (k_0-\bar k) \frac 1{z+\bar k}
\end{align}
and, correspondingly, write $B_1= B_1^a+B_1^b+B_1^c+B_1^d$.

We start with $B_1^a$.  Recall that $\Delta$ is the sum of two terms,
\begin{equation}
  \Delta  = -\frac h2( \psi(x) t(-ih\nabla) + t(-ih\nabla)\psi(x)) \,.
\end{equation}
By expanding $\psi$ in a Fourier series, we can use the triangle
inequality to bound
\begin{equation}
  \|B_1^a\|_1 \leq \frac {h^2}4 \sum_{p_1,p_2 \in( 2\pi \Z)^d} |\widehat \psi(p_1)| |\widehat \psi(p_2)|  \| D_{p_1,p_2} \|_1  
\end{equation}
where
\begin{align}\nonumber
  D_{p_1,p_2} = \int_{\Gamma} & f(\beta z) e^{-ip_1\cdot x} \frac
  1{z-k_0} \left( e^{ip_1\cdot x} t(-ih\nabla) + t(-ih\nabla) e^{ip_1
      \cdot x} \right) \frac{1}{z+k_0} \\ & \times \left( t(-ih\nabla)
    e^{-ip_2 \cdot x} + e^{-ip_2\cdot x}t(-ih\nabla)\right) \frac
  1{z-k_0} e^{ip_2\cdot x} \, dz \,.
\end{align}
Note that $D_{p_1,p_2}$ is translation invariant. That is, $D_{p_1,p_2}= D_{p_1,p_2}(-i h\nabla)$ is a multiplication operator in Fourier space, given by 
\begin{align}\nonumber
 D_{p_1,p_2}(q) =  & \left( t(q) + t(q+hp_1)
  \right) \left( t(q) + t(q+hp_2) \right) \\ & \!\! \times 
    \left[ -\beta (q^2-\mu), \beta ((q+hp_1)^2-\mu), \beta((q+hp_2)^2-\mu) \right]_f
   \,.
\end{align}
Its local trace norm thus
equals 
\begin{equation}
  \| D_{p_1,p_2} \|_1 = h^{-d} \int_{\R^d} \left| D_{p_1,p_2}(q) \right| \frac{dq}{(2\pi)^d} \,.
\end{equation}
A simple Schwarz inequality and an application of (\ref{eq:lem:dec1})
show that
\begin{equation}
  \| D_{p_1,p_2} \|_1 \leq C h^{-d} \int_{\R^d} \frac{ t(q)^2}{1+q^2} dq\,,
\end{equation}
independently of $p_1$ and $p_2$.  Finally, we can bound $\sum_p
|\widehat \psi(p)| \leq C \|\psi\|_{H^2}$ for $d\leq 3$.  We have thus
shown that
\begin{equation}
  \|B_1^a\|_1 \leq C h^{2-d} \|\psi\|_{H^2}^2  \int_{\R^d} \frac{t(q)^2}{1+q^2} dq \,. 
\end{equation}
This concludes the bound on the trace norm of $B_1^a$.

The term $B_1^b$ will be treated very similarly. Recall from
(\ref{difk}) that $k_0-\bar k$ is given by the sum of four terms. Let
us first look at the two terms linear in $A$. We expand both $\psi$
and $A$ in a Fourier series, and conclude that
\begin{equation}
  \|B_1^b\|_1 \leq \frac{h^3}{4} \sum_{p_1,p_2,p_3 \in( 2\pi \Z)^d} |\widehat \psi(p_1)| |\widehat \psi(p_2)| |\widehat A(p_3)|   \| G_{p_1,p_2,p_3} \|_1
\end{equation}
where $G_{p_1,p_2,p_3}(-ih\nabla)$ is a multiplication operator in
Fourier space, given by 
\begin{align}\nonumber
&  G_{p_1,p_2,p_3}(q)=  \\ \nn &  \left( t(q) + t(q+hp_1) \right)\left( t(q - h
    p_3 ) + t(q+h p_2- hp_3) \right)\left( 2 q - hp_3 \right) \\ &
   \times \!\! \left[\! -\beta (q^2-\mu),- \beta((q -hp_3)^2-\mu) , \beta((q +
    hp_1)^2-\mu), \beta((q + h(p_2 -  p_3))^2-\mu) \!\right]_f .
\end{align}
To bound its trace norm, we can again use the bounds on the divided
differences given in Lemma~\ref{lem:dec}. We shall also need the
following additional bound.  From the definition (\ref{def:rec}) it
follows that if $f$ is Lipschitz continuous, then
\begin{equation}
  \left| \left[a_1,a_2,a_3,a_4\right]_f \right| \leq \frac{C}{|a_1-a_4|}\left( \frac{1}{|a_1-a_3|} + \frac 1{|a_2-a_4|}\right)\,.
\end{equation}
If we apply this with $a_1 \leq a_2 \leq 0 \leq a_3 \leq a_4$, we
obtain the upper bound $C/(b_1 b_2)$, where $b_1$ and $b_2$ are,
respectively, the largest and second largest among the numbers
$\{|a_1|,|a_2|,|a_3|,|a_4|\}$.  In combination with
(\ref{eq:lem:dec1}) one easily obtains the bound 
\begin{align}\nonumber
  & \left| \! \left[\! -\beta (q^2-\mu),- \beta((q -hp_3)^2-\mu) , \beta((q + hp_1)^2-\mu),
      \beta((q + h(p_2- p_3))^2-\mu) \!\right]_f \!\right| \\ & \leq
  \frac{C}{1+q^2} \frac 1{1 + \max\left\{ (q -hp_3)^2 , (q + hp_1)^2,
      (q + hp_2 - h p_3)^2 \right\}}\,.
\end{align}
A simple Schwarz inequality then shows that the trace norm of
$G_{p_1,p_2,p_3}$ is bounded by $C (1+|p_3|)$ for some constant $C$
independent of $p_1$, $p_2$ and $p_3$. Since both $|\widehat \psi(p)|$
and $|\widehat A(p)|(1+|p|)$ are summable, this shows that this part
of $B_1^b$ is locally trace class. The other part, where the term
quadratic in $A$ and the term with $W$ in $k_0-\bar k$ are taken into
account, can be treated in the same way, using our assumption that
$|\widehat W(p)|$ is summable.

The same method can be used to show that $B_1^c$ is locally trace
class. One simply expands all multiplication operators in Fourier
series, and then bounds the resulting translation invariant
operator. We omit the details.

We are left with the term $B_1^d$. We first note that since
\begin{equation}
  f(\beta z) = f(-\beta z) + \beta z
\end{equation}
we can replace $f(\beta z)$ by $f(-\beta z)$ in the integrand without
changing the integral, since the expression with $\beta z$ in place of
$f(\beta z)$ integrates to zero, as can be easily be seen from
(\ref{83}), for instance.  We can bound the trace norm of the
integrand with the aid of H\"older's inequality and
(\ref{resb1})--(\ref{pb}) by a constant times $|z|^{-13/6}$ for $\re
z>0$, and $|z|^{1/2}$ for $\re z<0$, respectively. Since we are
integrating against a function that decays exponentially for negative
$\re z$ and increases linearly for positive $\re z$, the integral is
absolutely convergent. This proves that $B_1^d$ is locally trace
class.

\bigskip

\noindent {\it Acknowledgments.} Part of this work was carried out at
the Erwin Schr\"odinger Institute for Mathematical Physics in Vienna,
Austria, and the authors are grateful for the support and hospitality
during their visit. R.S. would also like to thank the Departamento de
Fisica at the Pontificia Universidad Cat{\'o}lica de Chile for their
hospitality. Financial support via U.S. NSF grants DMS-0800906 (C.H.)
and PHY-0845292 (R.S.) and a grant from the Danish council for
independent research (J.P.S.) is gratefully acknowledged.

%%%%%%%%%%%%%%%%%%%%%%%%%%%%%%%%%%%%%%%%%%%%%%%%%%%%%%%%%%%%%%%%%%%%%%%%

\end{document}